\documentclass[12pt]{article}
\pagestyle{headings}
\usepackage{hyperref}
\usepackage{a4wide}
\usepackage{amsfonts}
\usepackage{amssymb}
\usepackage{amsmath}
\usepackage{mathtools}
\usepackage{rotating}
\usepackage{comment}
\usepackage{tikz}
\usepackage{tweaklist}

\bibliographystyle{alpha}

\newtheorem{thDef}{Definition}[section]
\newtheorem{thRem}[thDef]{Remark}
\newtheorem{thLem}[thDef]{Lemma}
\newtheorem{thExp}[thDef]{Example}
\newenvironment{Exp}{\begin{thExp}\rm}{\mbox{}\hfill$\Box$\end{thExp}}
\newtheorem{thCor}[thDef]{Corollary}
\newtheorem{thThe}[thDef]{Theorem}
\newtheorem{thProp}[thDef]{Proposition}
\newtheorem{thProb}[thDef]{Problem}

\newenvironment{proof}{\paragraph{Proof}}{\mbox{}\hspace*{\fill}$\Box$\medskip}
\newenvironment{proofc}[1]{\paragraph{Proof \textrm{#1}}}{\mbox{}\hfill$\Box$\medskip}
\newcommand{\length}[1]{|{#1}|}
\newcommand{\height}[1]{\underline{\overline{#1}}}
\newcommand{\Sigmapp}{\Sigma^{+,+}}
\newcommand{\Gammapp}{\Gamma^{+,+}}
\newcommand{\dom}[1]{\mathrm{dom}({#1})}
\newcommand{\mA}{\mathfrak{A}}

\newcommand{\cF}{\mathcal{F}}
\newcommand{\NN}{\mathbb{N}}
\renewcommand{\phi}{\varphi}
\renewcommand{\epsilon}{\varepsilon}
\newcommand{\restrict}[1]{\restriction{#1}}





\newcommand{\fix}[2]{{#1}[{#2}]} 
\newcommand{\vect}[2]{\clmn{ {#1}
    \\[-0.8ex] \vdots \\[-1ex] {#2}}} 
\newcommand{\arr}[4]{
  \begin{array}{c@{}c@{}c}
    {#1}   & \cdots & {#2} \\[-0.8ex]
    \vdots   &        & \vdots\\[-1.3ex]
    {#3}   & \cdots & {#4} 
  \end{array}
  }

 \newcommand{\clmn}[1]{\left(\begin{array}{@{}c@{}} #1
     \end{array}\right)}

\newcommand{\setbis}[1]{\{1,.., {#1}\}}

\newcommand{\vonbis}[2]{\{{#1},..,{#2}\}}
\newcommand{\ganzrec}[2]{\setbis{#1}{\times}\setbis{#2}}
\newcommand{\NNone}{\NN_{\geq 1}}
\newcommand{\cP}{\mathcal{P}}

\newcommand{\fA}{\mathfrak{A}}

\newcommand{\FO}{\textsl{FO}}





\let\ov\overline
\newcommand{\pos}[2]{\langle #1,#2 \rangle} 
\newcommand{\code}{\textsl{code}}

\newcommand{\class}[1]{\textsl{#1}}
\newcommand{\yB}{{\class{B}}}
\newcommand{\yFO}{{\class{FO}}}
\newcommand{\yco}{\class{co}}
\newcommand{\FOSigma}[2]{\Sigma_{#1}^0(#2)}
\newcommand{\FOPi}[2]{\Pi_{#1}^0(#2)}
\newcommand{\SOSigma}[3][{\mathsf{mon}}]{\sSigma[#1]{#2}(#3)}
\newcommand{\SOPi}[3][{\mathsf{mon}}]{\sPi[#1]{#2}(#3)}

\newcommand{\clB}[1]{{\yB({#1})}}
\newcommand{\clFO}[1]{{\yFO({#1})}}
\newcommand{\co}[1]{\yco\textit-{#1}}
\newcommand{\nDelta}[2][{\mathsf{mon}}]{\sDelta[#1]{#2}}

\newcommand{\sSigma}[2][{\mathsf{mon}}]{\Sigma^{#1}_{#2}}
\newcommand{\sPi}[2][{\mathsf{mon}}]{\Pi^{#1}_{#2}}
\newcommand{\sDelta}[2][{\mathsf{mon}}]{\Delta^{#1}_{#2}}
\newcommand{\Mod}[2]{{\textsl{Mod}_{#1}(#2)}}

\newcommand{\schema}[1]{\{0,1\}^{#1}}
\newcommand{\attr}[2]{\mathsf{pr}_{#2}(#1)}
\newcommand{\prinv}[2]{\mathsf{pr}^{-1}_{#2}(#1)}

\newcommand{\Restr}[1]{\mathsf{restr}_{#1}}
\newcommand{\restr}[2]{\Restr{#1}(#2)}
\newcommand{\restrinv}[2]{\mathsf{restr}^{-1}_{#1}(#2)}
\newcommand{\Exset}[1]{\mathsf{ex}_{#1}}
\newcommand{\exset}[2]{\Exset{#1}(#2)}
\newcommand{\exsetinv}[2]{\Exset{#1}^{-1}(#2)}

\newcommand{\fpics}[1]{L_{#1}}

\let\form\textit

\newcommand{\Semexists}[1]{\mathsf{exfo}_{#1}}
\newcommand{\semexists}[2]{\Semexists{#1}(#2)}
\newcommand{\Unique}[2]{\textsl{Unique}_{#1,#2}}
\newcommand{\fold}{\mathsf{fold}}
\newcommand{\rtop}{\mathsf{top}}

\newcommand{\cld}{{\textsl{cld}}}
\newcommand{\inp}{{\textsl{inp}}}

\newcommand{\fin}{{\textsl{end}}}
\newcommand{\rect}{{\textsl{rect}}}

\newcommand{\diag}{{\textsl{diag}}}
\newcommand{\blk}{{\textsl{blk}}}
\newcommand{\row}{{\textsl{row}}}
\newcommand{\col}{{\textsl{col}}}
\newcommand{\piv}{{\textsl{piv}}}

\newcommand{\ftop}{\form{top}}
\newcommand{\fleft}{\form{left}}
\newcommand{\fright}{\form{right}}
\newcommand{\fnobetween}{\form{no-between}}

\newcommand{\nextfin}{\form{next}_\fin}

\newcommand{\letter}{\form{letter}}
\newcommand{\Between}{\form{between}}
\newcommand{\columns}{\form{columns}}
\newcommand{\nowrap}{\form{nowrap}}

\newcommand{\State}{\form{state}}
\newcommand{\topin}[2]{\form{top-in}_{#1}({#2})}
\newcommand{\rtopin}[2]{\rtop^{-1}_{#1}({#2})}

\newcommand{\X}[2]{X_{#1\,}{#2}}

\newcommand{\doublestar}{{\,**\,}}

\newcommand*{\bigset}[1]{\ensuremath{ \left\{ #1 \right\} } }
\newcommand{\ex}{\mbox{\scriptsize{$\exists$}}}
\newcommand{\fa}{\mbox{\scriptsize{$\forall$}}}

\newcommand{\EX}{\mbox{\large{$\exists$}}}
\newcommand{\FA}{\mbox{\large{$\forall$}}}

\newcommand{\blockprod}[1]{}
\newcommand{\Mon}[1]{M({#1})}

\newcommand{\subst}[2]{\tfrac{#1}{#2}}

\newcommand{\Group}{\mathsf{c}}
\newcommand{\group}[1]{\Group({#1})}

\newcommand{\pV}{\mathbf V}
\newcommand{\pW}{\mathbf W}
\newcommand{\pG}{\mathbf G}
\newcommand{\pA}{\mathbf A}
\newcommand{\pC}{\mathbf C}

\newcommand{\Ad}[1]{\smallskip \textbf{Ad #1:\;}}
\DeclareMathSymbol{\square}{\mathbin}{AMSa}{"03}

\newcommand{\tiff}{%
  \hspace{0.1em plus 0.2em minus 0.1em}iff\hspace{0.1em plus 0.2em minus 0.1em}}
\newcommand{\sgeq}{%
  \hspace{0.1em plus 0.1em minus 0.05em}{\geq}\hspace{0.1em plus 0.1em minus 0.05em}}

\newcommand{\pcol}{}


\parindent0em

\begin{document}
\title{First-Order Quantifiers \\
  and the Syntactic Monoid \\
  of Height Fragments
  of Picture Languages\\
}
\author{Oliver Matz\\
  {\small
  Institut f\"ur Informatik,
  Universit\"at Kiel, Germany}\\
  {\small
  \texttt{matz@ti.informatik.uni-kiel.de}}}

\maketitle

\begin{abstract}
  We investigate the expressive power of first-order quantifications in the
  context of monadic second-order logic over pictures.  We show that
  $k+1$ set quantifier alternations
  allow to define a picture language that cannot be defined using $k$ set
  quantifier alternations preceded by arbitrarily many first-order quantifier
  alternations.

  The approach uses, for a given picture language $L$ and an integer $m\geq 1$,
  the \emph{height-$m$ fragment of $L$}, which is defined as the word language
  obtained by considering each picture $p$ of height $m$ in $L$ as a word, where
  the letters of that word are the columns of $p$.

  A key idea is to measure the complexity of a regular word language by 
  the group complexity of its syntactic monoid.
  Given a picture language $L$, such a word language measure may be applied to
  each of its height fragments, so that the complexity of the picture language
  is a function that maps each $m$ to the complexity of the height-$m$
  fragment of $L$.
  %
  The asymptotic growth rate of that function may be bounded based on the
  structure of a monadic second-order formula that defines $L$.

  The core argument for that lower bound proof is based on Straubing's
  algebraic characterization of the effect of first-order quantifiers on the
  syntactic monoid of word languages by means of Rhodes' and Tilson's block
  product.

  \paragraph{Keywords:}
  Picture languages, monadic second-order logic, quantifier alternation,
  syntactic monoid, group complexity
\end{abstract}

\pagebreak
\tableofcontents{}
\pagebreak

\section{Introduction}

In monadic second-order logic (MSO) over finite structures, formulas use
first-order quantifications (ranging over elements of the universe) as well as
set quantifications (ranging over sets thereof).  
\cite{MT97, Schw97, Matz-Diss, MST98, Matz02} investigate the effect of the alternation of
existential and universal set quantification and show that the depth of this
alternation cannot be bounded without loss of expressive power.  The proofs are
done for a specific class of structures, namely pictures.

The upper bound proofs in those papers show that very little use of set
quantification is needed.  The quantifiers that actually alternate
are all first-order.  Set quantification is needed for two purposes: The
outermost set quantification establishes a specific, uniquely determined
coloring, and the innermost
set quantification is needed only to replace the horizontal ordering, which itself
is not present in the logic.

Thus all the formulas constructed in the upper bound proofs can be written
in prenex normal form with a quantifier prefix of the form
\begin{equation*}
  \label{eq:twoblocks}
  \EX^* \{\ex,  \fa\}^* \EX^* \{\ex, \fa\}^*,
\end{equation*}
where $\EX^*$ denotes a block of existential set quantifiers, and $\fa$ and
$\ex$ denote universal (or existential, respectively) first-order
quantifiers.  The question whether every MSO formula can be written in this
form remains open, see Problem~\ref{prob:mso}.

This motivates the interest in the power of first-order quantification in the
context of MSO. It has been
studied in \cite{AFS00, JM01} in the context of graphs.  
In \cite{AFS00}, the authors suggest the
\emph{closed MSO alternation hierarchy}, which is
coarser and more robust than the ordinary MSO alternation hierarchy of \cite{MT97}
because it allows to intersperse first-order quantifiers ``for free'' between
set quantifiers.  The authors ask whether this hierarchy is strict---a
question that is still open.

In \cite{JM01}, the authors develop a technique to
infer new separation results dealing with the first-order closure.
Specifically, they show the following:
\begin{thThe}[\cite{JM01}]
  Let $V,W \subseteq \bigset{\EX, \FA, \ex,\fa}^*$.  Let $S$ be a graph property
  definable with a quantifier prefix of a form in $V$ but not with one of a
  form in $W$. Then there is another property definable with a quantifier prefix of a
  form in $\EX\,\fa\fa\, V$ but not with one of a form in $\{\ex,\fa\}^*\, W$.
\end{thThe}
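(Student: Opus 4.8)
The plan is a \emph{padding} argument. Write $\tau$ for the vocabulary of the graphs in question. I will construct a richer vocabulary $\tau'$, an encoding $f$ sending $\tau$-structures to $\tau'$-structures, and a $\tau'$-property $S'$ enjoying three properties: (i) $G\in S \iff f(G)\in S'$ for every $\tau$-structure $G$; (ii) $S'$ is definable with a quantifier prefix of a form in $\EX\fa\fa\,V$; (iii) $S'$ is not definable with a quantifier prefix of a form in $\{\ex,\fa\}^{*}\,W$. By (i), the structures $f(G)$ transport the separation of $V$ from $W$ witnessed by $S$ up to $S'$, so (iii) will follow from the hypothesis on $S$ once $f$ is chosen appropriately, while (ii) will hold essentially by construction.

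For (i) and (ii), fix a formula $\phi_S$ of $V$-prefix defining $S$ and simply \emph{define}
\[
  S' \;:=\; \bigl\{\, \mathfrak M \;:\; \mathfrak M \models \exists X\;\forall x\,\forall y\;\bigl(\chi(X,x,y)\wedge \phi_S^{\,X}\bigr)\,\bigr\},
\]
where $\phi_S^{\,X}$ is $\phi_S$ with every quantifier relativised to the monadic variable $X$ (which, done in the standard way, leaves the quantifier string unchanged) and $\chi$ is a \emph{quantifier-free} formula asserting that $X$ picks out a canonical copy of the encoded graph inside $\mathfrak M$. Since $\chi$ contributes no quantifier, the displayed formula has the prefix obtained by prepending $\EX\,\fa\fa$ to the prefix of $\phi_S$, i.e.\ a form in $\EX\fa\fa\,V$; this gives (ii). For (i), $f$ must be designed so that $f(G)$ admits at least one ``legal'' choice of $X$ and every legal choice recovers a structure isomorphic to $G$; then $f(G)\models\exists X\cdots$ iff $G\models\phi_S$ iff $G\in S$. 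A first attempt is the \emph{twin-doubling}: $\tau'=\tau\cup\{T\}$ with $T$ binary, $f(G)$ has universe $V(G)\times\{0,1\}$, $T$ is the fixed-point-free involution $(v,0)\leftrightarrow(v,1)$, and the $\tau$-relations of $f(G)$ are read off the first coordinate (so $T$-twins are $\tau$-indistinguishable); a legal $X$ contains exactly one endpoint of each $T$-edge, which is checked by the quantifier-free $\chi(X,x,y):=T(x,y)\to\bigl(X(x)\leftrightarrow\neg X(y)\bigr)$, and every legal $X$ induces a substructure isomorphic to $G$.

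For (iii), the crux: we must show that no formula with a $\{\ex,\fa\}^{*}W$-prefix defines $S'$. Fix such a prefix, say $Q_1x_1\cdots Q_nx_n$ (first-order) followed by a $W$-prefix. We produce $G^{+}\in S$ and $G^{-}\notin S$, hence $f(G^{+})\in S'$ and $f(G^{-})\notin S'$, together with a winning strategy for Duplicator in the two-phase Ehrenfeucht--Fra\"iss\'e game that has $n$ first-order rounds followed by the $W$-game of the given $W$-prefix, played on $f(G^{+})$ versus $f(G^{-})$; this is impossible if the prefix class defines $S'$, so none does. Here $f$ must do more than the twin-doubling allows: the encoded copy of $G$ must be \emph{invisible to a first-order block by itself} (over the twin-doubling a leading first-order prefix is essentially a leading first-order prefix over $G$, which is far too powerful), yet must stay accessible to the single set quantifier of the upper-bound formula; the standard device is to spread $G$ over a large, highly homogeneous ``skeleton'' --- grid- or path-shaped --- whose addresses are reachable by monadic but not by first-order means. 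One also needs monadic quantifiers over $f(G)$ to correspond to monadic quantifiers over $G$ with the same alternation pattern, so that a $W$-prefix pulls back to a $W$-prefix \emph{unchanged}; this matters because $W$ is an arbitrary prefix class, so the reduction must deliver a genuine $W$-game on the base structures rather than a Boolean combination of such. With these features in place, the heart of the argument is a composition lemma: the MSO-type of a tuple in $f(G)$ depends only on a bounded ``trace'' of the tuple in $G$, uniformly in $G$. This lets Duplicator answer Spoiler's first-order moves either by mirroring him inside the homogeneous skeleton or by using that a single first-order move cannot cross the narrow channel joining the skeleton to the encoded graph, and then combine this with a Duplicator strategy for the $W$-game on $(G^{+},G^{-})$ --- which exists for suitably $W$-indistinguishable $G^{\pm}$ precisely because $S$ is not $W$-definable --- lifted through the encoding.

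The main obstacle is designing $f$ so that all three requirements hold at once: \emph{cheap} (locating the encoded graph costs only $\EX\fa\fa$), \emph{first-order opaque} (an arbitrarily long leading first-order block is useless), and \emph{monadically transparent} ($W$-prefixes survive the pull-back verbatim). Fixing the skeleton, identifying the right ``trace'' invariant, and verifying Duplicator's strategy in the two-phase game --- i.e.\ proving the composition lemma --- is the technical core; the upper bound and the bookkeeping on quantifier prefixes are routine by comparison.
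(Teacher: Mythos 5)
The paper does not prove this theorem at all: it is quoted as an imported result from \cite{JM01}, so the only meaningful comparison is with the argument given there. Your outline is in the same spirit as that argument (an encoding/padding construction, relativization of $\phi_S$ to an existentially quantified set for the upper bound, and a two-phase Ehrenfeucht--Fra\"iss\'e game composition for the lower bound), but as written it contains a genuine gap and would fail in its stated form.

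The gap is structural. You fix a single encoding $f$ and the property $S'$ once and for all, and only then let the adversary choose a formula with prefix in $\{\ex,\fa\}^*\,W$; you must then exhibit $G^+\in S$, $G^-\notin S$ such that $f(G^+)$ and $f(G^-)$ survive an arbitrarily long leading first-order block of length $n$ followed by the $W$-game. For fixed finite non-isomorphic structures this is impossible: first-order logic with sufficiently many quantifiers separates any two non-isomorphic finite structures, and you cannot compensate by choosing larger witnesses $G^\pm$, because the hypothesis that $S$ is not $W$-definable only yields, for the relevant suffix $w\in W$, the existence of \emph{some} $w$-game-equivalent pair $G^+\in S$, $G^-\notin S$ --- not of arbitrarily large or suitably homogeneous such pairs. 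This is exactly why the construction of \cite{JM01} does not send $G$ to a single structure: each $G$ is represented by an infinite family of padded structures (each vertex blown up into a class whose size is a free parameter), $S'$ is defined so that membership is independent of the padding sizes, the prefix $\EX\,\fa\fa$ is spent on selecting and verifying a system of representatives, and in the lower bound the padding is chosen \emph{after}, and large with respect to, the number $n$ of leading first-order quantifiers. In addition, the property you call ``monadically transparent'' --- that a $W$-prefix over the padded structure pulls back verbatim to a $W$-prefix over $G$ --- is not automatic: a set variable over the blow-up may cut the classes arbitrarily and does not project to a set over $G$; proving the corresponding transfer/composition lemma for the two-phase game is precisely the content of the toolkit developed in \cite{JM01}, and your proposal names this lemma as ``the technical core'' without proving it. So what you have is a reasonable proof plan whose two essential ingredients (scalable padding built into the definition of $S'$, and the game composition lemma) are missing, and whose fixed-$f$ formulation cannot be repaired without reorganizing the construction along the lines above.
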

The authors of \cite{JM01} apply that theorem
to show the following corollary (previously shown directly
in \cite{AFS00}).
\begin{thCor}\label{cor:AFS00}
  There exists a graph property definable by a prenex normal form 
  of type 
  \(
  \EX^*\{\ex,\fa\}^*\EX^*\{\ex,\fa\}^*
  \)
  but not with one of type 
  \(
  \{\ex,\fa\}^*\EX^*\{\ex,\fa\}^*.
  \)
\end{thCor}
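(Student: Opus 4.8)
\emph{Proof plan.}
The plan is to obtain the corollary by a \emph{single} application of the theorem of \cite{JM01}, feeding it as input the separation from \cite{AFS00} that monadic NP is not closed under first-order quantification. Everything else is prefix bookkeeping.

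First I would fix the two prefix classes that go into the theorem. Let $V$ be the set of quantifier prefixes of the form $\{\ex,\fa\}^*\EX^*\{\ex,\fa\}^*$ (first-order quantifiers, then a block of existential set quantifiers, then first-order quantifiers — the prefixes of the first-order closure of monadic NP), and let $W$ be the set of prefixes of the form $\EX^*\{\ex,\fa\}^*$ (the prefixes of plain monadic NP). By \cite{AFS00} there is a graph property $S$ definable by a prenex sentence whose prefix has a form in $V$ — one may in fact take $S$ of the shape $\exists x\,\psi(x)$ with $\psi$ in monadic NP, so that the prefix is even of the form $\ex\,\EX^*\{\ex,\fa\}^*$ — but $S$ is \emph{not} definable by any prenex sentence whose prefix has a form in $W$, i.e.\ $S\notin$ monadic NP.

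Then I would apply the theorem of \cite{JM01} to this $S$, $V$ and $W$. It produces a graph property $S'$ that is definable by a prenex sentence with a prefix of a form in $\EX\,\fa\fa\,V$ and not definable by any prenex sentence with a prefix of a form in $\{\ex,\fa\}^*\,W$. It then only remains to normalise these two classes. On the positive side, a prefix in $\EX\,\fa\fa\,V$ has the shape $\EX\,\fa\fa\,\{\ex,\fa\}^*\,\EX^*\,\{\ex,\fa\}^*$; absorbing $\fa\fa\,\{\ex,\fa\}^*$ into one first-order block and reading the single leading $\EX$ as an instance of $\EX^*$, every such prefix is of the form $\EX^*\{\ex,\fa\}^*\EX^*\{\ex,\fa\}^*$, so $S'$ is definable in exactly the shape demanded by the corollary. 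On the negative side, $\{\ex,\fa\}^*\,W$ is literally the class of prefixes of the form $\{\ex,\fa\}^*\EX^*\{\ex,\fa\}^*$, so $S'$ is not definable by any prenex sentence of that type. Hence $S'$ is the desired witness.

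I expect all of the genuine difficulty to sit in the one black box used above, namely the base separation of \cite{AFS00}: everything downstream of the application of \cite{JM01}'s theorem is mere prefix arithmetic. Proving that the chosen $S$ escapes monadic NP is the hard step, and it goes through an Ajtai--Fagin-style argument: one designs a family of graph instances, positive and negative ones that look locally alike (e.g.\ obtained from one another by small, essentially random modifications), shows that the duplicator wins the Ajtai--Fagin game for monadic $\Sigma^1_1$ of any fixed quantifier rank on this family — so that no monadic-NP sentence can tell the positive from the negative instances — while simultaneously exhibiting the one extra first-order quantifier (here the $\exists x$) that does separate them. That is precisely the content of \cite{AFS00}; the point of the theorem of \cite{JM01} is to make the passage from it to the corollary automatic.
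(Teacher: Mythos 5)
You take the same route the paper gestures at: Corollary~\ref{cor:AFS00} is not proved in this paper at all, it is imported from the literature with the remark that \cite{JM01} obtain it by applying their transfer theorem (stated just above the corollary) to a suitable base separation, and that \cite{AFS00} proved it directly. Your prefix bookkeeping for such an application is fine: with $W=\EX^*\{\ex,\fa\}^*$ one gets $\{\ex,\fa\}^*W=\{\ex,\fa\}^*\EX^*\{\ex,\fa\}^*$, and with $V\subseteq\{\ex,\fa\}^*\EX^*\{\ex,\fa\}^*$ one gets $\EX\,\fa\fa\,V\subseteq\EX^*\{\ex,\fa\}^*\EX^*\{\ex,\fa\}^*$.

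The genuine gap is in the base separation you feed into the theorem. You claim one may take $S$ of the shape $\exists x\,\psi(x)$ with $\psi$ in monadic NP (prefix type $\EX^*\{\ex,\fa\}^*$) and $S$ not in monadic NP. No such $S$ exists: an existential first-order quantifier commutes with the leading block of existential set quantifiers, so $\exists x\,\exists X_1\ldots\exists X_k\,\phi$ is equivalent to $\exists X_1\ldots\exists X_k\,\exists x\,\phi$, whose prefix is again of type $\EX^*\{\ex,\fa\}^*$; hence every property of your proposed shape lies in monadic NP, the hypothesis of the \cite{JM01} theorem is instantiated with a vacuous witness, and the envisaged ``Ajtai--Fagin game in which the one extra $\exists x$ does the separating'' cannot work in principle. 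What the application actually needs is a graph property definable with a prefix in $\{\ex,\fa\}^*\EX^*\{\ex,\fa\}^*$ whose first-order quantifiers in front of the set block are essentially universal (e.g.\ of the shape $\forall x\,\psi(x)$ with $\psi$ of type $\EX^*\{\ex,\fa\}^*$) together with a proof that it is not of type $\EX^*\{\ex,\fa\}^*$. That is a substantive non-expressibility statement in its own right; it is not quotable as the black box you describe, and you neither exhibit a correct candidate nor indicate how its non-membership in monadic NP would be established. Until that base separation is supplied in a correct form, the application of the transfer theorem does not get off the ground.
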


In this paper, we focus on pictures (as opposed to arbitrary finite graphs).
We show that the above corollary is true for pictures,
too, thereby giving yet another proof for it (see
Corollary~\ref{cor:AFS00-gen} for the case $k=1$).
Besides we consider (as in \cite{Matz-Diss}) formulas that have a quantifier
prefix of the form
\[
\{\ex, \fa\}^* \{\EX, \FA\}^* \{\ex, \fa\}^*,
\]
where the set quantifier block in the middle contains only $k$ alternations,
and compare their expressive power to formulas with a quantifier prefix of the form
\[
\{\EX, \FA\}^* \{\ex, \fa\}^*,
\]
where the set quantifier block contains only $k+1$ alternations.  The main
result of this paper (Corollary~\ref{cor:main-sep}) is that there is a
formula of the latter kind that is not equivalent to any of the former kind.
Once again, the formula constructed in the proof does not actually use $k+1$
set quantifier alternations; it has only two set quantifier blocks---the
$k+1$ alternations stem from the first-order quantifier blocks in between these.

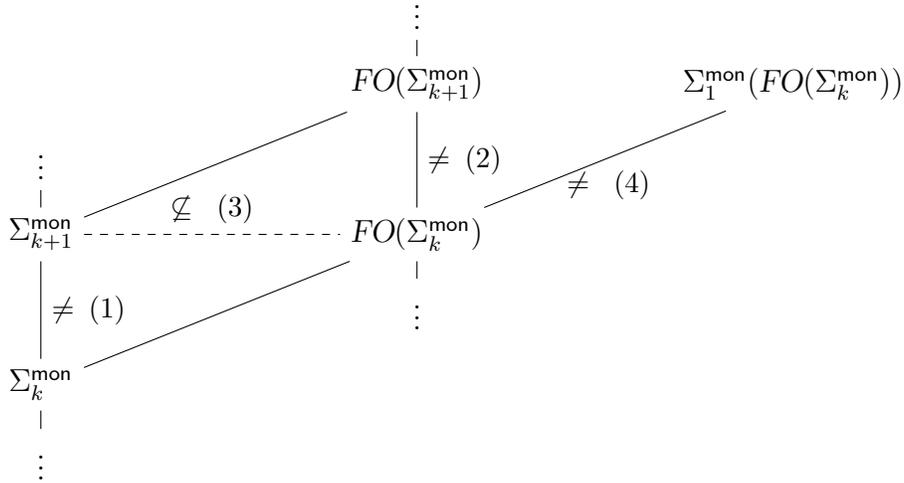
\begin{figure}[tb]
  \centering

\vspace{0cm minus 0.7cm}
\begin{tikzpicture}[node distance=2cm, auto]
  \node (Sigma1)                   {$\sSigma{k+1}$};
  \node (Sigma0) [below of=Sigma1] {$\sSigma{k}$};
  \node (FO0)    [right of=Sigma1, node distance=5cm] {$\FO(\sSigma {k})$};
  \node (FO1)    [above of=FO0]    {$\FO(\sSigma {k+1})$};

  \node (SFO0)   [right of=FO1, node distance=5cm] {$\SOSigma1{\FO(\sSigma {k})}$};

  \node (Sigma2) [above of=Sigma1, node distance=1cm] {$\vdots$};
  \node (Sigma-) [below of=Sigma0, node distance=1cm] {$\vdots$};
  \node (FO-)    [below of=FO0,    node distance=1cm] {$\vdots$};
  \node (FO2)    [above of=FO1,    node distance=1cm] {$\vdots$};

  \draw[-]  (Sigma2) to node                {} (Sigma1);
  \draw[-]  (Sigma1) to node[right] {\small $\not=$\, (1)} (Sigma0);
  \draw[-]  (Sigma0) to node       {} (Sigma-);
  \draw[-]  (FO2)    to node       {} (FO1);
  \draw[-]  (FO1)    to node[right]  {\small $\not=$\, (2)} (FO0);
  \draw[-]  (FO0)    to node       {} (FO-);
  \draw[-]  (Sigma1) to node       {} (FO1);
  \draw[-]  (Sigma0) to node       {} (FO0);

  \path (Sigma1) edge[dashed] 
             node[anchor=south,above] {\small $\not\subseteq$ \, (3)}
        (FO0);
  \draw[-]  (FO0)    to node  [below] {\small \;$\not=$ \, (4)} (SFO0);
\end{tikzpicture}
\vspace{0cm minus 0.7cm}

  \caption{Hasse diagram}
   \begin{minipage}{0.9\hsize}\small\smallskip
     Hasse diagram of the monadic second-order alternation hierarchy and the
     first-order closures hierarchy over pictures
     ($k \geq 1$).  All inclusions are trivial.  The dashed
     line indicates that the two classes are incomparable.

     The non-inclusions~(1) (from \cite{Schw97}) and (2) (from
     \cite{Matz-Diss}) are re-proved here (for non-trivial alphabets), see
     Corollary~\ref{cor:main-sep}.  That corollary shows also the
     non-inclusion~(3), which is new.  Non-inclusion~(4) has been shown for
     directed graphs in \cite{AFS00, JM01} and is here re-proved (for pictures
     over a non-trivial alphabet) as Corollary~\ref{cor:AFS00-gen}.
   \end{minipage}
   \vspace{0cm plus 0.5cm}
  \label{fig:hierarchy}
\end{figure}


The lower bound proof is based on the block product introduced in
\cite{RhoTil89}, which, by \cite{Straub94}, allows to characterize the effect
of first-order quantifiers on the syntactic monoid of word languages, see
Lemma~\ref{lem:semexists}.  The application to pictures (as opposed to
words) follows the same approach as \cite{MT97, MST98, Matz-Diss, Matz02}: The
common essential idea to show that a picture language $L$ 
cannot be defined by a formula class $\cF$ is the following: 
We consider the family $(\fix Lm)_{m \geq 1}$, where $\fix Lm$ contains the
pictures of height $m$.  That so-called \emph{height-$m$ fragment} $\fix Lm$ may be
regarded as a word language over the $m$-fold Cartesian product of the
alphabet.  Then we show that, for a sufficiently large $m$,
the complexity of that height-$m$ fragment (wrt.\ some suitable
complexity measure of word languages) is too high, so that $L$ cannot be defined by a
$\cF$-formula.

Typical complexity measures used in \cite{MT97, MST98, Matz-Diss} are firstly
the number of states needed for a recognizing non-deterministic finite
automaton and, secondly, the length of the shortest word of a (unary) word
language.  This paper is the first one in which the used complexity measure is
the group complexity of the syntactic monoid.

\section{Basic Notions}

\subsection{Pictures}

Let $\Gamma$ be a finite alphabet. A \emph{non-empty picture} of size $(m,n)$
over $\Gamma$ (where $m,n\geq 1$) is an $m{\times}n$-matrix over $\Gamma$.  If
$p$ is a non-empty picture of size $(m,n)$, we denote the length $n$ (i.e.,
the number of columns) by $\length p$, the height $m$ (i.e., the number of
rows) by $\height p$, and the domain $\ganzrec mn$ by $\dom p$.  The component
at position $(i,j) \in \dom p$ is denoted $p\pos ij$.

The set of non-empty pictures over $\Gamma$ is denoted $\Gammapp$.  A set of
non-empty pictures is called a \emph{picture language}.

If $p$ and $q$ are non-empty pictures with $\height p = \height q$, 
then the non-empty picture that results by
appending $q$ to the right of $p$ is denoted $p \pcol q$.
This partial operation is called the \emph{column concatenation}.

Picture languages must not contain the empty picture.  Nevertheless, when we
assemble picture languages by column concatenation, it is often convenient to
have a neutral element.  That is why we consider a special, distinct
\emph{empty picture} which we denote $\epsilon$ and for which height, width, domain,
and size are not defined.  For every (empty or non-empty) picture $p$, we
define $p \pcol \epsilon = \epsilon \pcol p = p$.

Column concatenation is lifted to sets of pictures as
usual.
For every set $L$ of pictures, the iterated column concatenation is defined by
$L^0 = \{\epsilon\}$ and $L^{i+1} = L^i \pcol L$ and every $i \geq 0$.

The set of all non-empty pictures of height $m$ over alphabet $\Gamma$ is
denoted $\Gamma^{m,+}$, and $\Gamma^{m,*}$ abbreviates $\Gamma^{m,+} \cup \{\epsilon\}$.

Let $p$ be a non-empty picture over $\Gamma$ and $m = \height p$.  We frequently
consider each column of $p$ as a letter of the new alphabet $\Gamma^m$.  This
way, we identify every non-empty picture $p$ with a word of length $\length p$ over
alphabet $\Gamma^m$.
For a set of pictures $L$ and $m\geq 1$ we define the \emph{height-$m$
  fragment} (denoted $\fix Lm$) as the set of these words over alphabet $\Gamma^m$.

\subsection{Pictures over Attributes Alphabets}

While in general the nature of an alphabet $\Gamma$ is indifferent, it will be
technically convenient to have certain notions for the case that $\Gamma$ is of the form
$\schema I$ for a finite set of so-called \emph{attributes}.  That means that each
letter $a\in\Gamma$ is a mapping $I \rightarrow \{0,1\}$.


If $a\in \schema{I}$ and $J\subseteq I$, then $\restr{J}{a} = a \restrict J$ is the
restriction of $a$ to a $J$-indexed family.  The mapping $\textsf{restr}_{J}$
is an alphabet projection from $\schema I \rightarrow \schema J$, which is
lifted to pictures and picture languages the usual way.

The alphabet projection
$\Exset{J} : \schema I \rightarrow\schema{I\backslash J}$ is defined by
$\Exset{J} = \Restr{I\backslash J}$.

Furthermore, we define for every $\mu\in I$ the mapping 
\[
\textsf{pr}_{\mu} : \schema I \rightarrow \{0,1\},\quad a \mapsto a(\mu).
\]
It is an alphabet projection, too, and it is lifted to pictures and picture
languages the usual way.

Typically, each attribute corresponds to a free variable in a formula (see
next section).  With regard to sentences (i.e., formulas without free
variables), it is therefore consequent to allow also the empty attribute set,
so by convention, $\schema\emptyset$ is some fixed singleton alphabet,
and $\Restr\emptyset$ denotes the alphabet projection to that singleton
alphabet.

\subsection{Monadic Second-Order Formulas}

We describe our conventions for formulas.  We will be concerned with a
fixed signature with two binary successor predicates $S_1$, $S_2$ and
with the specific class of structures associated to non-empty pictures.

Let $J, K$ be two disjoint sets of attributes, which we use as
indices of variables.  We use first-order variables $x_\nu$ with $\nu\in K$
and set variables $X_\mu$ with $\mu \in J$.  Atomic formulas are of the
form $\X{\mu}{x_\nu}$ (for $\mu\in J$ and $\nu \in K$), or $S_1x_\mu x_\nu$, or
$S_2x_\mu x_\nu$, or $x_\mu=x_\nu$ (for $\mu,\nu\in K$).
Formulas are assembled in the usual
way using the boolean connectives as well as first-order
quantification ($\exists x_\nu \phi$ or $\forall x_\nu \phi$) 
and set quantification ($\exists X_\mu \phi$ or $\forall X_\mu \phi$).

\subsection{Pictures as Models}

Let $J, K$ be two disjoint attribute sets. Set $I = J \cup K$.
%
Let $\Unique IK$ be the set of those non-empty pictures $p$ over alphabet $\schema I$
such that for all $\nu\in K$ there is exactly one position $(i,j)\in\dom p$
with $p\pos ij(\nu) = 1$.

Let $p\in\Unique IK$, say with size $(m,n)$.  Let $\phi$ be a
formula with free set variables in $\{ X_\mu \mid \mu \in J\}$ and free
first-order variables in $\{ x_\nu \mid \nu \in K\}$.  To $p$, we associate
the grid structure with universe $\ganzrec mn$ 
and an assignment $(X_\mu^p)_{\mu\in J}, (x_\nu^p)_{\nu\in K}$ to the
free variables in the following way:
\begin{itemize}
\item $X_\mu^p = \{ (i,j) \in \dom p \mid p\pos ij(\mu) = 1 \}$,
\item $x_\nu^p$ is the unique $(i,j) \in \dom p$ with $p\pos ij(\nu) = 1$.
\end{itemize}
We write 
\[
p \models \phi
\]
iff this assignment makes $\phi$ true in the
structure with universe $\ganzrec mn$, where the predicates $S_1$ and $S_2$
are interpreted as the vertical and horizontal successor relation, respectively, i.e.,
$S_2x_1x_2$ asserts that $x_2$ is the horizontal successor of $x_1$.
That means, $p\models S_2x_1x_2$ iff there exist $\pos ij, \pos i{j+1} \in
\dom p$ such that $x_1^p = \pos ij$ and $x_2^p = \pos i{j+1}$.

Another notation convention will be convenient.  Let $\phi$ and $p$ be as
above.  Suppose that $\nu_1,\ldots, \nu_n\in K$ are attributes of first-order
variables. If $a_1,\ldots,a_n\in \dom p$, then, by abuse of notation, we write
\[
p, \subst{a_1}{x_{\nu_1}} \ldots \subst{a_n}{x_{\nu_n}} \models \phi
\]
iff $\phi$ is made true in the structure from above, where the assignment
for the attributes $\{\nu_1,\ldots, \nu_n\}$ is provided by setting
$x_{\nu_j}^p = a_j$ for every $j\in\vonbis1n$.

Let $\Mod{I,K}{\phi} = \{ p \in \Unique IK \mid p \models \phi \}$.
We write $\Mod{}{\phi}$ rather than $\Mod{I,K}{\phi}$ if $I,K$
are clear from the context; typically $I$ (or $K$) is the set of those attributes that
may appear as indices of free variables (or free first-order
variables, respectively) of $\phi$, or any superset thereof.
Indeed we have the following remark, which shows that adding an element to
the attribute set does not make that much of a difference:
\begin{thRem}\label{rem:restr}
  Let $I$ be an attribute set, $K\subseteq I$, and $\mu \in I \backslash K$.
  If $\phi$ is a formula with indexes of free variables in $I\backslash \{\mu\}$,
  then 
  \[
  \Mod{I, K}{\phi} =
  \exsetinv{\{\mu\}}{\Mod{I\backslash \{\mu\},K}{\phi}}.
  \]
\end{thRem}

%

The concept of existential set quantification is captured by the
alphabet projection on picture languages in the following sense, motivating the
notation $\Exset{\{\mu\}}$ for the alphabet projection.
\begin{thRem}\label{rem:exists-set}
  Let $I$ be an attribute set, $K\subseteq I$, and $\mu \in I \backslash K$.
  If $\phi$ is a formula with indexes of free variables in $I$, then
  \[
  \Mod{I\backslash\{\mu\},K}{\exists X_\mu \phi} = 
  \exset{\{\mu\}}{\Mod{I,K}{\phi}}.
  \]
\end{thRem}

Two formulas $\phi, \psi$ are \emph{equivalent} iff $\Mod{}{\phi} =
\Mod{}{\psi}$.
Note that our notion of equivalence implicitly refers to the class
of pictures and is thus coarser than logical equivalence.

A formula $\phi$ \emph{defines} a picture language $L$ over alphabet $\schema
J$ if $\phi$ has no free first-order variables, the free set variables of
$\phi$ are among $(X_\mu)_{\mu \in J}$, and $\Mod{J,\emptyset}{\phi} = L$.

We use the following convention for variable substitution.  Let $\phi$ be a
formula.  Let $X_1,\ldots, X_m$ (and $x_1,\ldots, x_n$) be set
variables (or first-order variables, respectively).  Note that we do not
require that all free variables of $\phi$ are among these, but typically,
this is the case.

We may write $\phi(X_1,\ldots, X_m, x_1,\ldots, x_n)$ instead
of $\phi$ in order to pick these variables for a later substitution. 
If we later write $\phi(X'_1,\ldots, X'_m, x'_1,\ldots, x'_n)$, 
for other variables $X'_1,\ldots, X'_m, x'_1,\ldots, x'_n$, 
then this
denotes the formula that results from $\phi$ by replacing each indicated
variable from the first variable tuple by the respective variable from the
latter.

For example, if we introduce the formula $\phi$ as $\phi(x_1,x_2)$, then by
$\phi(x_2,x_1)$ we mean the formula that results from $\phi$ by exchanging
the occurrences of $x_1$ and $x_2$.

\begin{Exp}\label{exp:ftop}
  The first-order formula $\ftop(x) := \neg\exists y (S_1 yx)$ asserts for a
  position $x$ that it is in the top row.
  Similarly, $\fleft(x) := \neg\exists y(S_2yx)$ and $\fright(x) :=
  \neg\exists y(S_2xy)$ assert that $x$ is in the leftmost (or rightmost,
  respectively) column.
\end{Exp}

\begin{Exp}\label{exp:leq}
  Let 
  \[
   \psi = \forall x_1\forall x_2 
     ((S_2x_1x_2 \wedge \X\cld x_1) \rightarrow \X\cld x_2).
  \]
  Then $X_\cld$ is the only free set variable of $\psi$.  The formula $\psi$
  asserts that $X_\cld$ is closed under horizontal successors.  In other
  words, for a non-empty picture $p$ over alphabet $\schema{\{\cld\}}$ we have that $p
  \models \psi$ iff every row of the picture $\attr p\cld$ is in $0^*1^*$.
  Let 
  \[
  \phi = \forall X_{\cld} (\X\cld x \wedge \psi \rightarrow \X\cld x').
  \]
  Then $\psi$ asserts that position $x'$ is right to $x$.  More precisely:
  for a non-empty picture $p$ and two positions $\pos ij, \pos{i'}{j'} \in \dom p$
  we have $p,\subst{\pos ij}{x}\subst{\pos {i'}{j'}}{x'} \models \phi$ iff $i=i'$ and $j\leq j'$.

  Consider the formula $\fright(x)$ from the preceding example.
  Let 
  \[
  \phi' = 
    \exists X_{\cld} \Big(
    \begin{array}[t]{@{}l}
      \X\cld x \wedge 
      \forall x_1 \big(\fright(x_1) \wedge x_1{\not=}x' 
      \rightarrow \neg\X\cld x_1\big) \wedge {}\\
      \forall x_1x_2 \big( 
      S_2x_1x_2 \wedge x_1{\not=}x'
                 \wedge \X\cld x_1 \rightarrow \X\cld x_2\big)\Big). \\
  \end{array}
  \]
  Then $\phi'$ is equivalent to $\phi$.
\end{Exp}

The formula $\phi$ from Example~\ref{exp:leq} 
will be abbreviated as $x\leq_2 x'$
and will be needed later.
The above is a standard example of how to use set quantification to express
the horizontal ordering, which we do not have in the signature.

\subsection{Quantifier Alternation Classes}
\label{sec:quantifier-alternation-classes}

In this section we define the formula classes that are characterized by the
structure of their prenex normal form wrt.\ blocks of existential or
universal set quantifiers or first-order quantifiers.

A \emph{first-order formula} is a formula that does not make use of set
quantification.  The class of first-order formulas is denoted $\FO$.
For a class $\cF$ of formulas, let $\co{\cF}$ be the class of formulas
$\neg \phi$ with $\phi \in \cF$.

Let $\cF$ be a class of formulas.  The 
\begin{enumerate}
\item \emph{boolean closure} of $\cF$, denoted
  $\clB{\cF}$,
\item \emph{existential first-order closure} of $\cF$, denoted
  $\FOSigma{1}{\cF}$,
\item \emph{existential monadic closure} of $\cF$, denoted
  $\SOSigma{1}{\cF}$,
\item \emph{first-order closure} of $\cF$, denoted $\clFO{\cF}$,
\end{enumerate}
respectively, are defined as the smallest superclass of $\cF$ that is
closed under
\begin{enumerate}
\item boolean combinations,
\item existential first-order quantifications and positive boolean
  combinations,
\item existential set quantifications and positive boolean
  combinations,
\item first-order quantifications and boolean combinations,
\end{enumerate}
respectively.
\pagebreak[1]

We define $\SOSigma0{\cF} = \cF$ and
$\SOSigma{k+1}{\cF} = \SOSigma1{\clB{\SOSigma k{\cF}}}$ for every
$k\geq 0$.  Let $\SOPi k{\cF} = \co{\SOSigma k{\co{\cF}}}$ for every
$k$.
We write $\sSigma k{}$ and $\sPi k{}$ instead of $\SOSigma k{\FO}$
and $\SOPi k{\FO}$, respectively.

The formula classes $\FOSigma k{\cF}$ and $\FOPi k{\cF}$ are defined analogously
but for first-order rather than set quantification.

In the sequence, every formula that is equivalent to a formula in $\cF$ will
be called an $\cF$-formula, too.

$\nDelta k$ is the class of formulas that are both a
$\sSigma k{}$-formula and a $\sPi k{}$-formula.

Some of the quantifier alternation classes can be characterized very
succinctly by giving regular expressions over the alphabet
$\{\EX,\FA,\ex,\fa\}$ to describe the quantification structure of their formulas
in prenex normal form, as we did in the introduction.  For example, $\sSigma
3{}$ corresponds to $\EX^*\FA^*\EX^*\{\ex, \fa\}^*$, and $\FOPi 2{\sPi 2{}}$
corresponds to $\fa^* \ex^* \FA^*\EX^* \{\ex,\fa\}^*$.

The situation is more difficult for classes that involve the boolean closure,
the first-order closure, or $\sDelta k$, as for such a class it is not
possible to give a corresponding expression.  For example, $\{\fa,\ex\}^*
\EX^* \{\ex,\fa\}^*$ corresponds to what is called the \emph{positive first-order
closure of $\sSigma 1$} in \cite{AFS00}, and that class is between $\clFO{\sDelta 1}$ and
$\clFO{\sSigma1{}}$.

\begin{Exp}\label{exp:leqdelta}
  In Example~\ref{exp:leq} we have seen that $x \leq_2 x'$ is a
  $\sSigma1$-formula and also a $\sPi1$-formula.  
  Hence it is a $\nDelta1$-formula.
\end{Exp}
Every boolean combination
of $\nDelta1{}$-formulas is a $\nDelta1{}$-formula.
By a standard argument, first-order quantification may be expressed by a
suitable set quantification in the following sense: 
\begin{thRem}\label{rem:set-fo}
  Let $\cF\supseteq \FO$ is a class of formulas closed under conjunction and
  $\phi \in \cF$.  Then there is a formula $\phi'\in \cF$ such that 
  $\exists x_\mu \phi$ is equivalent to $\exists X_\mu \phi'$.
\end{thRem}

By the standard calculation rules of predicate logic and by the above remark,
we have the following.
\begin{thRem}\label{rem:calc}
  Let $k\geq 1$ and $\phi$ be a $\FOPi{k}{\nDelta1{}}$-formula.  If $k$ is
  odd, then $\phi$ is a $\FOPi{k-1}{\sPi1{}}$-formula.
  If $k$ is even, then $\phi$ is a $\FOPi{k-1}{\sSigma1{}}$-formula.
  Hence in any case $\phi$ is a $\sPi k{}$-formula.
\end{thRem}

\subsection{Syntactic Congruence and Syntactic Monoid}
\label{subs:syncong}

Let $L$ be a word language over alphabet $\Gamma$.
The \emph{syntactic congruence} $\equiv_L$ is defined as follows:
For two words $x, y\in\Gamma^*$ we define $x\equiv_L y$ iff for all
$u,v\in\Gamma^*$ we have $uxv\in L \Leftrightarrow uyv\in L$.
The \emph{syntactic monoid} $\Mon L$ of $L$ is the quotient of $\Gamma^*$ by
this congruence, i.e., $\Mon L$ consists of all congruence classes of the
syntactic congruence of $L$. 

A comprehensive introduction to the concept of syntactic monoids of word
languages can be found in many text books, see e.g. \cite{Pin86} (especially
Section~2.4.) or \cite{Straub94} (especially Theorem~V.1.3).

A \emph{group contained in} a semigroup $S$ is a subsemigroup of $S$ that is a
group.  If $G$ is a group contained in a monoid $M$, the neutral element of $G$ may
or may not be the same as that of $M$.
A \emph{submonoid} of a monoid $M$ is a subsemigroup of $M$ that contains the
neutral element of $M$.

Let $S, T$ be semigroups.  Then $S$ \emph{divides} $T$ (written $S\prec T$)
iff $S$ is a homomorphic image of a subsemigroup of $T$. Then $\prec$ is
transitive.

\subsection{Pseudovarieties and Group Complexity}
\label{sec:group-complexity}
The group complexity was introduced in \cite{KroRho65} and assigns a
non-negative integer $\group S$ to every semigroup $S$.  We briefly introduce
the related notions, see e.g.\ \cite{Eil76, rhodes2009q}.


A class $\pV$ of finite semigroups is a \emph{pseudovariety} (see
\cite{rhodes2009q}, Definition~1.2.30)  if the following
properties hold:
\begin{itemize}
\item $\pV$ contains a one-element semigroup;
\item For all semigroups $S_1, S_2 \in \pV$ we have $S_1 \times S_2 \in \pV$,
  where $S_1 \times S_2$ is the direct product;
\item For all semigroups $S,T$ with $S\prec T$ and $T\in\pV$ we have $S\in\pV$.
\end{itemize}

The following are pseudovarieties:
\begin{itemize}
\item the class $\pG$ of finite groups, and
\item the class $\pA$ of finite aperiodic semigroups.
\end{itemize}

\begin{thDef}\label{def:left-action}
Let $S, T$ be finite semigroups, and let us write $S$ additively.
A \emph{left action} of $T$ on $S$ is a map $T\times S \rightarrow S$, where
the image of $(t,s)$ is denoted $ts$, that satisfies the following two
properties:
\begin{itemize}
\item $t(s+s') = ts + ts'$ for every $t\in T$ and every $s,s'\in S$,
\item $(tt')s = t(t's)$ for every $t,t'\in T$ and every $s\in S$.
\end{itemize}
\end{thDef}

For two semigroups $S$ and $T$, the \emph{semidirect product} $S\rtimes T$
wrt.\ a given left action is defined as the set $S\times T$ with
multiplication given by
\[
(s,t)(s',t') = (s+ts', tt').
\]


For two pseudovarieties $\pV$ and $\pW$, the \emph{semidirect product}
 $\pV * \pW$ is
the pseudovariety generated by the semigroups of the form $V \rtimes W$ with
$V\in\pV$ and $W\in\pW$.

We define (cf.\ \cite{rhodes2009q}, Definition~4.3.10):
\[
\begin{array}{rcl}
  \pC_0 & = & \pA,\\
  \pC_{n+1} & = & \pA * (\pG * \pC_n)\\
            & = &  \pC_n * (\pA * \pG) , \mbox{ for every } n\geq 0.
\end{array}
\]
The famous ``prime decomposition theorem'' of \cite{KroRho65} asserts that
every finite semigroup is in $\bigcup_{n\geq 1} \pC_n$.  Thus for every finite
semigroup
$S$, the \emph{group complexity} $\group S := \min\{ n \mid S \in \pC_n\}$
is well defined.

For a regular word language $L$, the \emph{group complexity of $L$} is defined
as $\group L = \group{\Mon L}$.


\subsection{Height Fragment Technique and Results}

Let $\cP$ be a class of picture languages over a fixed alphabet $\Gamma$
and let $\alpha$ be a function that assigns to every regular word language an
element from $\NN$.
We say that
\begin{itemize}
\item $\cP$ is \emph{at most $k$-fold exponential wrt.\ $\alpha$} if for every
  $L\in \cP$ we have that $\alpha(L[m])$ is at most $k$-fold exponential in $m$.
\item $\cP$ is \emph{at least $k$-fold exponential wrt.\ $\alpha$} if there exists
  $L\in \cP$ such that $\alpha(L[m])$ is at least $k$-fold exponential in $m$.
\end{itemize}

The following remark is immediate.  Its contraposition provides a technique to separate two
classes of picture languages, given a complexity measure $\alpha$ as above.
\begin{thRem}[Height Fragment Technique]\label{rem:asymptotic}
  Let $\cP,\cP'$ be picture language classes.

  If $\cP' \subseteq \cP$ and $\cP$ is at most $k$-fold exponential wrt.\
  $\alpha$, then so is $\cP'$.
\end{thRem}


We say that a formula class $\cF$ is \emph{at most (or at least, respectively)
  $k$-fold exponential wrt.\ $\alpha$} if the class $\{ \Mod{}{\phi} \mid \phi
\in \cF\}$ is. 

The following is the main result of this paper and will be proved in Section~\ref{sec:lower-bound}.
\begin{thThe}\label{th:main}
  Let $k\geq 1$.
  The 
  formula classes
  $\SOSigma 1{\FOPi {k-1}{\sDelta1}}$, $\FO(\sSigma{k})$, and $\sSigma{k}$ are
  both at most and at least $k$-fold exponential wrt.\ group complexity.
\end{thThe}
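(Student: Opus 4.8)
The plan is to prove Theorem~\ref{th:main} by establishing separately an upper bound (all three classes are at most $k$-fold exponential wrt.\ group complexity) and a lower bound (there exists a picture language definable already in $\sSigma k{}$ whose height-$m$ fragment has group complexity at least $k$-fold exponential in $m$), and then to glue the two halves together using the inclusions $\sSigma k{} \subseteq \FO(\sSigma k{})$ and, via Remark~\ref{rem:calc} together with Remark~\ref{rem:set-fo}, $\SOSigma1{\FOPi{k-1}{\sDelta1}} \subseteq \sSigma k{}$ (and conversely, checking that $\sSigma k{}$ is contained in $\SOSigma1{\FOPi{k-1}{\sDelta1}}$ up to equivalence, since the outermost existential set block can absorb the rest). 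Because group complexity is monotone under the relevant inclusions in the sense of Remark~\ref{rem:asymptotic} (the Height Fragment Technique), once I know the smallest of the three classes is at least $k$-fold exponential and the largest is at most $k$-fold exponential, all three are sandwiched and the theorem follows.

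For the \textbf{upper bound}, I would proceed by induction on $k$, following the structure of the formula-class definition $\SOSigma{k}{\cF} = \SOSigma1{\clB{\SOSigma{k-1}{\cF}}}$. The base case concerns first-order formulas over pictures of height $m$, regarded as word languages over $\Gamma^m$: here I would invoke the algebraic characterization of first-order quantifiers via the block product (Lemma~\ref{lem:semexists}) — applying one first-order quantifier to a word language whose syntactic monoid lies in $\pC_n$ yields one whose syntactic monoid divides a block product with an aperiodic factor, hence stays in $\pA * (\pG * \pC_n)$ up to a bounded increase; iterating over arbitrarily many first-order quantifier alternations keeps us inside a \emph{fixed} level $\pC_{k}$ of the Krohn--Rhodes hierarchy but multiplies the monoid size by an exponential factor at each application, giving the $k$-fold exponential growth of the \emph{size}, while the \emph{complexity number} $\group{\cdot}$ stays bounded only when the starting point is aperiodic. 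The subtlety is that it is the complexity \emph{value} $\group{L[m]}$, not the monoid size, that we are bounding, so I must track carefully how each of the $k$ set-quantifier blocks (each preceded by a first-order block) can raise the group complexity by exactly one — an existential set quantification is an alphabet projection (Remark~\ref{rem:exists-set}), a boolean closure does not raise complexity much, and each intervening first-order block, via the block product, adds at most a bounded amount — and then how the height parameter $m$ enters: going from height $m$ to height $m+1$ composes in a copy of the $m$-fold product, and the recursion on $m$ is what produces the $k$-fold \emph{exponential} rate rather than a merely polynomial or singly-exponential one. I expect the careful bookkeeping here — disentangling the contribution of $m$ (which drives the exponential tower) from the contribution of the quantifier-alternation depth $k$ (which drives the \emph{height} of that tower) — to be the main technical obstacle.

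For the \textbf{lower bound}, I would exhibit an explicit picture language $L \in \sSigma k{}$ — a natural candidate is an iterated ``counting/coloring'' language along the lines of the separating languages in \cite{MT97, Matz-Diss}, whose height-$m$ fragment encodes, column by column, the behaviour of a $k$-fold nested counter so that a recognizing automaton must traverse a group of order roughly a $k$-fold exponential in $m$ — and show that its syntactic monoid genuinely requires group complexity at least $k$-fold exponential in $m$ (it contains, as a divisor, a direct product of $k$-fold-exponentially-many copies of a nontrivial group, or an iterated wreath product realizing Krohn--Rhodes complexity $k$ at size that large). The definability side requires writing $L$ with a quantifier prefix of the form $\EX^*\{\ex,\fa\}^*$ at $\sSigma k{}$ level (or the equivalent $\SOSigma1{\FOPi{k-1}{\sDelta1}}$ form), using the $\leq_2$ trick from Example~\ref{exp:leq} to recover the horizontal order inside the innermost set block; this is essentially a bookkeeping exercise in logic. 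The harder half is the algebraic lower bound: I would compute (or estimate from below) the syntactic monoid of $L[m]$ directly from its combinatorial description, identify the relevant group divisors, and apply the fact that $\pC_n$ is closed under division to conclude $L[m] \notin \pC_{n}$ for $n$ below the target bound — the delicate point being to make the group-complexity lower bound (as opposed to a mere size lower bound) robust, since a large monoid can still have small group complexity. Combining the two bounds with the inclusion chain completes the proof.
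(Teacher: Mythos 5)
Your overall sandwich structure (inclusions $\SOSigma 1{\FOPi {k-1}{\nDelta1}} \subseteq \sSigma{k} \subseteq \FO(\sSigma{k})$, one lower-bound witness in the smallest class, one upper bound for the largest class) is exactly the paper's skeleton, but both halves of your plan have genuine gaps. The serious one is the lower bound. You need the group-complexity \emph{value} $\group{\fix Lm}$ to grow $k$-fold exponentially in $m$, and your proposed witnesses and divisors cannot deliver that: a cyclic group of $k$-fold exponential order, a direct product of $k$-fold exponentially many copies of a nontrivial group, or an iterated wreath product ``realizing Krohn--Rhodes complexity $k$'' all have group complexity $1$, $1$, and the constant $k$ respectively --- you are conflating the size of the groups involved with the complexity number. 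Worse, the MT97/Matz-style nested-counter languages you suggest are size-based; their height-$m$ fragments (over a small or unary column alphabet) have essentially length-counting syntactic monoids of complexity at most $1$, which is why the paper explicitly remarks that this separation cannot be obtained from state-set-size or singleton-length witnesses. The paper's actual device is different: it encodes iterated Boolean matrix multiplication (Lemmas~\ref{lem:finaldef} and~\ref{lem:upperbound0}), so that $\Mon{\fix Lm}$ contains a copy of the full monoid of binary relations $B_{f(m)}$ with $f$ $k$-fold exponential, and then invokes Rhodes' theorem $\group{B_n}=n-1$ (Theorem~\ref{th:cBn}); without $B_n$ (or something of comparable complexity) and that theorem, your ``harder half'' has no argument.

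On the upper bound your accounting is also off. The $k$-fold exponential bound for $\sSigma{k}$ does not come from ``each set block raising the group complexity by one'' plus a recursion on the height --- that reasoning would give a bound of roughly $k$, a constant. The paper instead imports the automata-theoretic bound of \cite{MT97} (Theorem~\ref{th:mt97}): the height-$m$ fragment of a $\sSigma{k}$-definable language is accepted by an NFA with $s^k(cm)$ states, its transition monoid embeds in $B_{s^k(cm)}$, and again $\group{B_n}=n-1$ converts the state bound into the complexity bound (Proposition~\ref{prop:divide-Bn}). The block product enters only to show that the first-order closure does not increase group complexity at all: Lemma~\ref{lem:semexists} plus Lemma~\ref{lem:doublestar-aperiodic} (via the Fundamental Lemma of Complexity, Theorem~\ref{th:fundamental}) give $\group{U_1^{N\times N}\doublestar N}=\group N$, whence Propositions~\ref{prop:disjunction} and~\ref{prop:semibound}. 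Your ``adds at most a bounded amount per first-order block'' is asserted without proof (and proving even that essentially requires the same aperiodic-block-product lemma). Finally, the claimed converse inclusion $\sSigma{k}\subseteq\SOSigma 1{\FOPi {k-1}{\nDelta1}}$ is unjustified and unnecessary; the sandwich argument needs only the stated chain of inclusions together with Remark~\ref{rem:asymptotic}.
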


By Remark~\ref{rem:asymptotic}, this implies.
\begin{thCor}\label{cor:main-sep}
  Let $k\geq 1$.
  There is a picture language definable by a $\sSigma{k+1}$-formula but not by
  an $\FO(\sSigma{k})$-formula.
  In particular, there is a picture language definable by a
  $\FO(\sSigma{k+1})$-formula but not by an $\FO(\sSigma{k})$-formula.
\end{thCor}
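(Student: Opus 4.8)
The plan is to derive the corollary directly from Theorem~\ref{th:main} via the Height Fragment Technique (Remark~\ref{rem:asymptotic}). The key observation is that $\sSigma{k+1}$ is \emph{at least} $(k{+}1)$-fold exponential wrt.\ group complexity — this is the ``at least $(k{+}1)$-fold'' half of Theorem~\ref{th:main} applied with $k+1$ in place of $k$ — whereas $\FO(\sSigma{k})$ is \emph{at most} $k$-fold exponential wrt.\ group complexity, which is the ``at most $k$-fold'' half of Theorem~\ref{th:main} applied with $k$. Since a $(k{+}1)$-fold exponential function is not $O$ of any $k$-fold exponential function, the two classes cannot be equal; more precisely, if every $\sSigma{k+1}$-definable picture language were $\FO(\sSigma{k})$-definable, then by Remark~\ref{rem:asymptotic} the class $\sSigma{k+1}$ would be at most $k$-fold exponential wrt.\ group complexity, contradicting the ``at least $(k{+}1)$-fold'' statement. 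Hence some $\sSigma{k+1}$-formula defines a picture language that no $\FO(\sSigma{k})$-formula defines.

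Concretely, I would first invoke Theorem~\ref{th:main} with parameter $k+1$ to obtain a picture language $L$ definable by a $\sSigma{k+1}$-formula such that $\group{L[m]}$ grows at least $(k{+}1)$-fold exponentially in $m$. Then I would invoke Theorem~\ref{th:main} with parameter $k$ to get that every $\FO(\sSigma{k})$-definable picture language $L'$ satisfies: $\group{L'[m]}$ is at most $k$-fold exponential in $m$. A $(k{+}1)$-fold exponential function eventually dominates every $k$-fold exponential function, so $L$ cannot equal any such $L'$; thus $L$ witnesses the first claim. For the ``in particular'' clause, one only needs that $\sSigma{k+1} \subseteq \FO(\sSigma{k+1})$ (trivial, since the first-order closure only adds formulas), so the same $L$ — being $\sSigma{k+1}$-definable, hence $\FO(\sSigma{k+1})$-definable — is not $\FO(\sSigma{k})$-definable.

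Since the statement is an immediate corollary, there is no real obstacle here; the entire content lies in Theorem~\ref{th:main}, whose proof is deferred to Section~\ref{sec:lower-bound}. The only point requiring a word of care is the elementary asymptotic comparison: ``at least $(k{+}1)$-fold exponential'' must be read as witnessing \emph{some} language whose complexity function is $\geq$ a $(k{+}1)$-fold iterated exponential infinitely often (or eventually), while ``at most $k$-fold exponential'' bounds \emph{every} language's complexity function by a $k$-fold iterated exponential; the gap between these growth classes is genuine and standard, so the separation follows. One should also note that the separating picture languages are over a non-trivial alphabet, consistent with the remarks accompanying Figure~\ref{fig:hierarchy}.
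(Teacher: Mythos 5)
Your proposal is correct and matches the paper's own argument: the paper derives the corollary exactly as you do, by applying Theorem~\ref{th:main} at levels $k+1$ and $k$ and invoking the Height Fragment Technique (Remark~\ref{rem:asymptotic}) to rule out containment of $\sSigma{k+1}$ in $\FO(\sSigma{k})$, with the ``in particular'' clause following from the trivial inclusion $\sSigma{k+1}\subseteq\FO(\sSigma{k+1})$. Nothing further is needed.
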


The first statement of the above corollary is a new result.
The second statement has already been proved in
\cite{Matz-Diss}, Corollary~2.31.  The proof in this paper is
simpler but requires at least four (or, with standard encoding arguments, at
least two) symbols in the alphabet, whereas the proof in \cite{Matz-Diss}
applies also to singleton alphabets.

Theorem~\ref{th:main} also provides new witnesses for the strictness of the
alternation hierarchy of MSO over pictures, i.e, the result of \cite{Schw97}
that $\sSigma{k+1}$ is more expressive than $\sSigma{k}$.  Unlike the
witnesses in \cite{Schw97, MST98, Matz-Diss}, these witnesses are not over a
singleton alphabet, so the result is formally weaker.


Another consequence of Theorem~\ref{th:main} is the following:
\begin{thCor}\label{cor:AFS00-gen}
  Let $k\geq 1$. There is picture language definable by a
  $\SOSigma 1{\FO(\sSigma 1)}$-formula but not by a $\FO(\sSigma k)$-formula.
\end{thCor}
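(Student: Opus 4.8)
The plan is to derive this purely from Theorem~\ref{th:main} and the Height Fragment Technique (Remark~\ref{rem:asymptotic}), exactly as Corollary~\ref{cor:main-sep} was obtained. First I would observe that Theorem~\ref{th:main} gives us two pieces of information about group complexity as a word-language measure $\alpha$: on the one hand, the class $\FO(\sSigma k)$ is at most $k$-fold exponential wrt.\ group complexity; on the other hand, the class $\SOSigma 1{\FOPi{k-1}{\sDelta1}}$ is at least $k$-fold exponential wrt.\ group complexity. So the separating witness will be a picture language $L$ that is $\SOSigma 1{\FOPi{k-1}{\sDelta1}}$-definable and whose height-$m$ fragments have $k$-fold exponential group complexity; by Remark~\ref{rem:asymptotic} (contraposition), such an $L$ cannot be $\FO(\sSigma k)$-definable.

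The remaining point is to check that $L$ can also be obtained by a formula of the shape $\SOSigma 1{\FO(\sSigma 1)}$, i.e.\ that $\SOSigma 1{\FOPi{k-1}{\sDelta1}} \subseteq \SOSigma 1{\FO(\sSigma 1)}$, or at least that the specific witness language lies in the smaller class. Here Remark~\ref{rem:calc} does the work in one direction: a $\FOPi{k-1}{\sDelta1}$-formula is a $\sPi k$-formula, but more to the point, the first-order closure $\FO(\cdot)$ absorbs the bounded first-order quantifier alternation $\FOPi{k-1}{\cdot}$, so $\FOPi{k-1}{\sDelta1}$-formulas are $\FO(\sDelta 1)$-formulas, hence $\FO(\sSigma 1)$-formulas since $\sDelta1 \subseteq \sSigma1$ and $\FO$ of a larger class is larger. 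Applying the existential monadic closure $\SOSigma 1{\cdot}$ to both sides, $\SOSigma 1{\FOPi{k-1}{\sDelta1}} \subseteq \SOSigma 1{\FO(\sSigma 1)}$. Thus the very same witness language used for the lower-bound part of Theorem~\ref{th:main} is $\SOSigma 1{\FO(\sSigma 1)}$-definable but not $\FO(\sSigma k)$-definable, which is the claim.

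I do not expect a genuine obstacle here, since all the heavy lifting is already packaged in Theorem~\ref{th:main}; the corollary is a bookkeeping consequence about how the formula classes nest. The one place that needs a careful sentence is the class inclusion $\FOPi{k-1}{\sDelta1} \subseteq \FO(\sSigma 1)$: it relies on $\FO(\cF)$ being closed under first-order quantification and boolean combinations by definition, and on monotonicity of the closure operators in their argument class, together with $\sDelta1 \subseteq \sSigma1$. Once that is spelled out, the corollary follows by citing Theorem~\ref{th:main} and Remark~\ref{rem:asymptotic}.
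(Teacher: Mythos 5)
Your reduction of the corollary to Theorem~\ref{th:main} plus a class inclusion is the right general idea, and the inclusion itself is fine: $\FOPi{k-1}{\sDelta1}$-formulas are built from $\sDelta1$-formulas using only first-order quantifications and boolean combinations, hence lie in $\FO(\sDelta1)\subseteq\FO(\sSigma 1)$, and applying $\SOSigma1{\cdot}$ preserves the inclusion. The gap is in the separation step. Under the paper's definitions, ``at least $k$-fold exponential'' and ``at most $k$-fold exponential'' are not mutually exclusive --- Theorem~\ref{th:main} explicitly asserts that $\FO(\sSigma{k})$ is \emph{both}. So from the fact that your chosen witness $L$, definable in $\SOSigma 1{\FOPi{k-1}{\sDelta1}}$, has $m\mapsto\group{\fix Lm}$ at least $k$-fold exponential, you cannot invoke the contraposition of Remark~\ref{rem:asymptotic} against $\FO(\sSigma{k})$: that would require this function to fail to be \emph{at most} $k$-fold exponential, which ``at least $k$-fold exponential'' does not imply. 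In fact the level-$k$ witness is provably useless for this purpose: by Remark~\ref{rem:calc}, $\SOSigma 1{\FOPi{k-1}{\nDelta1}}\subseteq\sSigma{k}\subseteq\FO(\sSigma{k})$, so the language witnessing the lower bound at level $k$ \emph{is} $\FO(\sSigma{k})$-definable.

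The repair is an index shift. Apply the lower-bound part of Theorem~\ref{th:main} (Theorem~\ref{th:upperbound}) at level $k+1$: there is a picture language $L$ definable by a $\SOSigma 1{\FOPi{k}{\nDelta1}}$-formula such that $\group{\fix Lm}$ is at least $(k{+}1)$-fold exponential in $m$, hence not at most $k$-fold exponential. Since $\FO(\sSigma{k})$ is at most $k$-fold exponential wrt.\ group complexity (Proposition~\ref{prop:semibound}), Remark~\ref{rem:asymptotic} yields that $L$ is not $\FO(\sSigma{k})$-definable. Your inclusion argument, which works for any number of first-order quantifier alternations, gives $\SOSigma 1{\FOPi{k}{\nDelta1}}\subseteq\SOSigma 1{\FO(\sDelta1)}\subseteq\SOSigma 1{\FO(\sSigma 1)}$, so this same $L$ is $\SOSigma 1{\FO(\sSigma 1)}$-definable, which is exactly the corollary. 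With that one-level shift your bookkeeping goes through; this is also the argument the paper intends when it calls the corollary a consequence of Theorem~\ref{th:main}.
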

For the case $k=1$, this is the generalization of
Corollary~\ref{cor:AFS00} to picture languages.

\section{Expressibility Result}
\label{sec:largegroup}

In this section we will prove the upper bound part of Theorem~\ref{th:main}.
For this aim, we will construct a picture language whose height-$m$ fragment
has group complexity $k$-fold exponential in $m$ and that is
definable in the respective formula classes.

The idea is to code large Boolean square matrices with pictures of small height.
The picture language then consists of pictures of the form $p_1 \pcol \ldots
\pcol p_n$, where $p_1,\ldots,p_n$ encode square matrices whose matrix product
over the Boolean semiring is not the zero matrix.
This way, the syntactic monoid contains the monoid of binary relations,
established by the syntactic congruence classes of the encoded matrices.  By a
result of Rhodes (see Theorem~\ref{th:cBn}), that monoid has high group
complexity.

\subsection{Iterated Matrix Multiplication}

Let us consider the semiring of $m{\times}m$-Matrices over the
Boolean semiring.  Its product $\cdot$ is the standard matrix product.
\begin{thLem}\label{lem:matmult}
  Let $m \geq 1$.  Then for every $n\geq 0$ and all matrices $A_1, \ldots, A_n
  \in \{0,1\}^{m\times m}$, we have the following equivalence:
  \[
  \prod_{h=1}^n A_h \neq 0 \Leftrightarrow 
  \exists i_1,\ldots, i_{n+1}\leq m : \forall h \leq n:
  A_h\pos{i_{h}}{i_{h+1}} = 1.
  \]
\end{thLem}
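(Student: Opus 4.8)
The plan is to prove the equivalence by interpreting the matrix product entrywise using the definition of Boolean matrix multiplication, and then recognizing the nested existential that appears as exactly the statement that there is a ``path'' of indices witnessing a nonzero entry of the product.

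First I would recall that in the Boolean semiring, for matrices $B, C \in \{0,1\}^{m\times m}$ we have $(B \cdot C)\pos{i}{j} = 1$ iff there exists $\ell \leq m$ with $B\pos{i}{\ell} = 1$ and $C\pos{\ell}{j} = 1$; in other words, $(B\cdot C)\pos ij = \bigvee_{\ell \leq m} B\pos i\ell \wedge C\pos \ell j$. Iterating this, a straightforward induction on $n$ shows that for the product $P = \prod_{h=1}^n A_h$ we have, for all $i,j\leq m$,
\[
P\pos ij = 1 \quad\Leftrightarrow\quad \exists i_2,\ldots,i_n \leq m :\ A_1\pos i{i_2} = 1 \wedge A_2\pos{i_2}{i_3} = 1 \wedge\cdots\wedge A_n\pos{i_n}{j} = 1,
\]
i.e.\ setting $i_1 = i$ and $i_{n+1} = j$, the entry $P\pos{i_1}{i_{n+1}}$ is $1$ iff there is a choice of intermediate indices $i_2,\ldots,i_n$ making all the conjuncts $A_h\pos{i_h}{i_{h+1}} = 1$ true. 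The base case $n = 1$ is the trivial observation $P = A_1$, and for $n = 0$ the empty product is the identity matrix, which is nonzero (as $m \geq 1$), matching the right-hand side which becomes $\exists i_1 \leq m : \top$, also true.

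Then I would finish by quantifying over $i$ and $j$: the matrix $P$ is the zero matrix iff $P\pos ij = 0$ for all $i,j \leq m$, hence $P \neq 0$ iff $\exists i \leq m\,\exists j \leq m : P\pos ij = 1$. Substituting the path characterization above, renaming $i = i_1$ and $j = i_{n+1}$, and merging all the existential quantifiers over $i_1,\ldots,i_{n+1}$ yields exactly
\[
\prod_{h=1}^n A_h \neq 0 \quad\Leftrightarrow\quad \exists i_1,\ldots,i_{n+1}\leq m :\ \forall h\leq n:\ A_h\pos{i_h}{i_{h+1}} = 1,
\]
as claimed.

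There is no real obstacle here; the only point requiring a little care is the bookkeeping in the induction step (making sure the newly introduced summation index becomes the next path index and that the associativity of matrix multiplication is invoked so that one may peel off either the first or the last factor), together with the degenerate cases $n = 0$ and $n = 1$, which must be checked separately against the conventions for empty products and the hypothesis $m \geq 1$.
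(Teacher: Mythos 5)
Your proposal is correct and follows essentially the same route as the paper: an induction on $n$ establishing the entrywise characterization $\left(\prod_{h=1}^n A_h\right)\pos{i_1}{i_{n+1}} = 1$ iff there exist intermediate indices $i_2,\ldots,i_n$ with all $A_h\pos{i_h}{i_{h+1}}=1$, followed by existentially quantifying over the two endpoints. The only cosmetic difference is your choice of base case ($n=1$ versus the paper's $n=0$) and the remark about peeling off either end; the substance is identical.
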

\begin{proof}
  First, we show by induction on $n$ that the following equivalence holds for
  every $n$, every $k,l\in\vonbis1n$ and all matrices $A_1, \ldots, A_n \in
  \{0,1\}^{m\times m}$:
  \[
  \begin{array}{l}
  \left(\prod_{h=1}^n A_h\right)\pos kl = 1 \;\Leftrightarrow\; {}\\
  \qquad\exists i_1,\ldots, i_{n+1}\leq m :
    \begin{array}[t]{@{}l}
      i_1 = k \wedge i_{n+1} = l
      \wedge \big(\forall h \leq n : A_h\pos{i_{h}}{i_{h+1}} = 1\big).
    \end{array}
  \end{array}
  \]
  This equivalence is immediate for $n=0$.  Assume that the equivalence holds
  for some $n\geq 0$, and let $A_1, \ldots, A_{n+1} \in \{0,1\}^{m\times m}$.
  Let $k,l \leq m$. Then $(\prod_{h=1}^{n+1} A_h)\pos kl =
  \big((\prod_{h=1}^n A_h) \cdot A_{n+1}\big)\pos kl$.  Thus the following equivalence
  chain holds: 
  \[
  \begin{array}{rcl}
    \multicolumn{3}{l}{(\prod_{h=1}^{n+1} A_h)\pos kl = 1} \\
    & \Leftrightarrow &  \exists l'\leq m :
    (\prod_{h=1}^n A_h)\pos{k}{l'} = 1 \wedge A_{n+1}\pos{l'}{l} = 1
    \\
    & \Leftrightarrow & \exists i_1,\ldots, i_{n+1},l'\leq m : 
    \begin{array}[t]{@{}l}
      i_1 = k \wedge i_{n+1} = l' 
      \wedge \big(\forall h \leq n : A_h\pos{i_{h}}{i_{h+1}}=1\big) \\
      {} \wedge A_{n+1}\pos{l'}{l}=1
    \end{array}\\
    & \Leftrightarrow & \exists i_1,\ldots, i_{n+1}\leq m : 
    \begin{array}[t]{@{}l}
      i_1 = k
      \wedge \big(\forall h \leq n : A_h\pos{i_{h}}{i_{h+1}}=1\big) \\
      {} \wedge A_{n+1}\pos{i_{n+1}}{l}=1 
    \end{array}\\
    & \Leftrightarrow & \exists i_1,\ldots, i_{n+2}\leq m :
    \begin{array}[t]{@{}l}
      i_1 = k \wedge i_{n+2} = l
      \wedge  \big(\forall h \leq n{+}1 : A_h\pos{i_{h}}{i_{h+1}}=1\big).
    \end{array}\\
  \end{array}
  \]
  This completes the induction.  The lemma is now immediate.
\end{proof}


\subsection{Picture Languages Defined by a Regular Top-Row Language}

For a non-empty picture $p$, we define $\rtop(p)$ to be the word of length
$\length p$ in the top row, i.e., $\rtop(p) = p\pos11\ldots p\pos1{\length p}$.

We observe the following.
\begin{thProp}\label{prop:topreg}
  Let $\Gamma = \schema I$ be an attributed alphabet.
  Let $L\subseteq\Gamma^+$ be a regular word language.  
  Then $\rtop^{-1}(L) := \{ p \in \Gammapp \mid \rtop(p) \in L\}$ is definable by
  a $\nDelta1$-formula.
\end{thProp}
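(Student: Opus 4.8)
The plan is to reduce the statement to the definability of a property of the top row by a $\nDelta1$-formula, using the Büchi–Elgot–Trakhtenbrot-style machinery implicit in the paper. Since $L\subseteq\Gamma^+$ is regular, it is recognized by a deterministic finite automaton $\mA = (Q,\Gamma,\delta,q_0,F)$. I would first pick a bijective encoding of states by subsets of a fresh attribute set: let $Q = \{q_0,\dots,q_{r-1}\}$ and introduce attributes $\sigma_0,\dots,\sigma_{s-1}$ with $s = \lceil\log_2 r\rceil$, so that the value of $(X_{\sigma_0},\dots,X_{\sigma_{s-1}})$ at a position encodes the state reached by $\mA$ after reading the prefix of $\rtop(p)$ up to and including that column. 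Then $\rtop^{-1}(L)$ is defined by the formula that says: ``there exists a labelling of positions by states such that the labelling is a consistent accepting run of $\mA$ on $\rtop(p)$''.

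The next step is to write this as a $\nDelta1$-formula, i.e., as a formula that is both $\sSigma1$ and $\sPi1$ over the class of pictures in $\Unique{I}{\emptyset}$. The run-guessing is naturally $\exists X_{\sigma_0}\cdots\exists X_{\sigma_{s-1}}$ followed by a first-order kernel asserting consistency: using $\ftop(x)$ (Example~\ref{exp:ftop}) to restrict attention to the top row, $\fleft(x)$ to pin down the initial transition from $q_0$, the horizontal successor $S_2$ to assert $\delta$-consistency between adjacent columns, and $\fright(x)$ together with ``$x$ has state label in $F$'' to assert acceptance. The subtlety is that the state-labelling should only be constrained on the top row, so the formula must be written so that the choice of the sets $X_{\sigma_j}$ outside the top row is irrelevant; that is routine (every clause is guarded by $\ftop$). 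This gives a $\sSigma1$-formula. To get a $\sPi1$-formula I would exploit determinism: for a deterministic automaton the accepting run is unique, so ``$p\models\phi$'' is equivalent to ``for all state-labellings of the top row, if the labelling is $\delta$-consistent and starts at $q_0$ then the rightmost label is in $F$'', which is $\forall X_{\sigma_0}\cdots\forall X_{\sigma_{s-1}}$ followed by a first-order kernel. (Alternatively, $\rtop^{-1}(L)$ and its complement $\rtop^{-1}(\Gamma^+\setminus L)$ are of the same $\sSigma1$ shape since regular languages are closed under complement, which by definition makes $\rtop^{-1}(L)$ a $\nDelta1$-formula; one should double-check that the complement is taken within $\Gamma^+$ and that $\length p\ge 1$ so there is always a rightmost top-row position.)

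The main obstacle is bookkeeping rather than conceptual: one must be careful that the first-order kernel is genuinely first-order (it is — it only uses $S_1$, $S_2$, equality, and the atomic $\X{\sigma_j}{x}$), that the horizontal adjacency relation on the top row is available (it is just $S_2$ restricted by $\ftop$, so no use of the derived $\leq_2$ is needed and we stay in $\FO$), and that the encoding of transitions of $\mA$ as a first-order formula $\transport(x,x')$ expressing ``$\delta(\text{state}(x),p\pos{}{}\text{-letter}(x)) = \text{state}(x')$'' can indeed be written with the attributes of $\Gamma=\schema I$ available as atomic predicates $\X{\mu}{x}$ for $\mu\in I$. Since each letter $a\in\Gamma$ is by definition a map $I\to\{0,1\}$, the letter at position $x$ is first-order definable from the atoms $\X{\mu}{x}$, and $\delta$ is a finite table, so $\transport$ is a finite boolean combination of such atoms. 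Assembling these pieces yields the $\sSigma1$- and $\sPi1$-formulas, hence a $\nDelta1$-formula defining $\rtop^{-1}(L)$, as required.
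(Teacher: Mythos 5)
Your proposal is correct and follows essentially the same route as the paper's proof: encode the states of a (complete) deterministic automaton for $L$ by fresh attributes, assert along the top row (guarded by $\ftop$, using $\fleft$, $S_2$, $\fright$) that the set variables describe the unique run and that it accepts, and use determinism to write this both as an existential ($\sSigma1$) and a universal ($\sPi1$) set quantification, hence as a $\nDelta1$-formula. The only cosmetic differences are your $\lceil\log_2 |Q|\rceil$-bit encoding versus the paper's assumption $Q=\schema J$, and your optional detour via complementation of $L$, neither of which changes the argument.
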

\begin{proof}
  Let $\mA$ be a deterministic finite automaton that recognizes $L$.
  We may assume w.l.o.g.\ that the state set $Q$ of $\mA$ is of the form $\schema
  J$ for some index set $J$ disjoint from $I$.   Let $q_0\in Q$ be the initial
  state of $\mA$ and $F\subseteq Q$ be the set of final states of $\mA$.  Let
  $\Delta \subseteq Q \times \Gamma \times Q$ be the transition relation of $\mA$.

  We will construct a $\nDelta1$-formula that asserts that the uniquely determined
  run of $\mA$ on the top row of a non-empty input picture is accepting.  For this aim, we
  encode the run into an assignment to free variables $(X_\mu)_{\mu \in J}$ in
  the obvious way.

  Recall the formulas $\ftop(x)$, $\fleft(x)$, and $\fright(x)$
  from Example~\ref{exp:ftop}.

  For every $a: I \rightarrow \{0,1\}$, set 
  \[
  \letter_a(x) := 
     \bigwedge_{\mathclap{\mu\in a^{-1}(1)}}      \X\mu x \wedge
     \bigwedge_{\mathclap{\mu\in a^{-1}(0)}} \neg \X\mu x.
  \]
  Similarly, for every $q : J \rightarrow \{0,1\}$, set
  \[
  \State_q(x) :=
     \bigwedge_{\mathclap{\mu\in q^{-1}(1)}}      \X\mu x \wedge
     \bigwedge_{\mathclap{\mu\in q^{-1}(0)}} \neg \X\mu x.
  \]
  Let
  \[
  \phi = 
  \begin{array}[t]{@{}l}
    \forall x 
           \big( \fleft(x) \wedge \ftop(x) \rightarrow \State_{q_0}(x)\big) \\
    {} \wedge \bigwedge_{(q,a,q') \in \Delta} \forall x\forall x' 
    \big(
    \begin{array}[t]{@{}l}
    \State_q(x) \wedge \letter_a(x) \wedge S_2xx' \wedge \ftop(x)
         {} \rightarrow \State_{q'}(x')\big).
    \end{array}
  \end{array}
  \]
  Then $\phi$ asserts, for a non-empty input picture $p$ over alphabet
  $\schema{I\cup J}$, that $\rtop(\restr Jp)$ encodes (in the obvious way) 
  the unique run of $\mA$ on the input word $\rtop(\restr Ip)$.

  Let
  \[
  \psi = \exists x 
       \Big( \ftop(x) 
       \wedge \fright(x)
       \wedge \bigvee\nolimits_{f\in F}\State_f(x)\Big).
  \]
  Then $\psi$ asserts that for a non-empty input picture over alphabet
  $\schema{J}$ that the state encoded in the top right corner is final.
  Consider the formulas 
  \[
  \begin{array}{lcl}
    \sigma & = & \exists (X_\mu)_{\mu \in J} (\phi \wedge \psi), \\
    \pi    & = & \forall (X_\mu)_{\mu \in J} (\phi \rightarrow \psi).
  \end{array}
  \]
  Both $\sigma$ and $\pi$ assert for a non-empty picture $p$ over alphabet $\schema{I}$
  that the unique run of $\mA$ on the top row
  of $p$ reaches a final state, i.e., that its top row is in $L$.
  Thus $\sigma$ and $\pi$ are equivalent and 
  define $\rtop^{-1}(L)$.  This completes the proof.
\end{proof}

\begin{thCor}\label{cor:topin}
  Let $L\subseteq \{0,1\}^+$ be a regular language of non-empty words.
  Let $\Gamma = \schema I$ be some
  attributed alphabet, let $\mu \in I$ be an attribute.
  Define 
  $\rtopin\mu L := \{ p \in \Gammapp \mid \rtop(\attr{p}{\mu}) \in L\}$.
  There exists a $\nDelta1$-formula $\topin \mu L$ such that
  \(
  \Mod{}{\topin\mu L} = \rtopin\mu L.
  \)
\end{thCor}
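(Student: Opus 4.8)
The plan is to reduce the statement to Proposition~\ref{prop:topreg}, applied over the singleton attributed alphabet $\schema{\{\mu\}}$, and then to transport the resulting formula to the full attribute set $I$.

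First I would identify $\{0,1\}$ with $\schema{\{\mu\}}$ via the bijection $b\mapsto(\mu\mapsto b)$. Under this identification, restricting a letter $a\in\schema I$ to the singleton attribute set $\{\mu\}$ yields precisely $a(\mu)$, so that $\attr p\mu=\restr{\{\mu\}}{p}$ for every $p\in\Gammapp$, and hence $\rtopin\mu L=\restrinv{\{\mu\}}{\rtop^{-1}(L)}$, where $\rtop^{-1}(L)$ now denotes the picture language $\{\,q\mid q\text{ a non-empty picture over }\schema{\{\mu\}},\ \rtop(q)\in L\,\}$ over $\schema{\{\mu\}}$. Since $L$ is a regular language of non-empty words over $\schema{\{\mu\}}$, Proposition~\ref{prop:topreg} hands us a $\nDelta1$-formula $\chi$ with no free first-order variables and with free set variables among $\{X_\mu\}$ such that $\Mod{\{\mu\},\emptyset}{\chi}=\rtop^{-1}(L)$.

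Then I would simply set $\topin\mu L:=\chi$. Membership in $\nDelta1$ is a purely syntactic condition on the prenex normal form and is independent of the ambient attribute set, so $\topin\mu L$ is still a $\nDelta1$-formula when read over the attribute set $I$. Applying Remark~\ref{rem:restr} once for each attribute in $I\backslash\{\mu\}$ --- which is legitimate because $\chi$ mentions no variable index besides $\mu$ --- the successive preimages compose to the preimage along $\Restr{\{\mu\}}$, so
\[
\Mod{I,\emptyset}{\topin\mu L}=\restrinv{\{\mu\}}{\Mod{\{\mu\},\emptyset}{\chi}}=\restrinv{\{\mu\}}{\rtop^{-1}(L)}=\rtopin\mu L .
\]
Here one uses that $\Unique I\emptyset=\Gammapp$, so that no first-order uniqueness side condition has to be checked. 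This establishes the corollary.

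The argument is essentially bookkeeping; the two points that deserve a little care are the identification of $\attr p\mu$ with the restriction to the singleton attribute set $\{\mu\}$, and the observation that Remark~\ref{rem:restr}, stated for adjoining one attribute at a time, may be iterated to pass from $\{\mu\}$ directly to $I$.
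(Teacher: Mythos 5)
Your argument is correct, but it routes through Proposition~\ref{prop:topreg} differently than the paper does. The paper absorbs the projection to attribute $\mu$ on the \emph{word-language} side: it observes $\rtop(\attr p\mu)\in L$ iff $\attr{\rtop(p)}\mu\in L$ iff $\rtop(p)\in\prinv L\mu$, and then applies Proposition~\ref{prop:topreg} once, directly over the full alphabet $\Gamma=\schema I$, to the regular language $\prinv L\mu\subseteq\Gamma^+$ (regularity is preserved under the inverse of the letter-to-letter projection $\mathsf{pr}_\mu$). You instead absorb the projection on the \emph{logic} side: apply the proposition over the singleton attributed alphabet $\schema{\{\mu\}}$ and transport the resulting formula to attribute set $I$ by iterating Remark~\ref{rem:restr}, using that the iterated preimages compose to $\restrinv{\{\mu\}}{\cdot}$ and that $\Unique I\emptyset=\Gammapp$. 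Both reductions are sound; the paper's is a one-liner needing no transfer between attribute sets, while yours makes explicit the (true and occasionally useful) fact that formulas over a sub-attribute-set define the $\Restr{}$-preimage language over any larger attribute set. One small imprecision: membership in $\nDelta1$ is not purely syntactic, since it means being \emph{equivalent} to both a $\sSigma1$- and a $\sPi1$-formula, and equivalence refers to the ambient class of models. This does not damage your proof: the proof of Proposition~\ref{prop:topreg} produces an explicitly $\sSigma1$ formula $\sigma$ and an explicitly $\sPi1$ formula $\pi$ defining the same language, and the same Remark~\ref{rem:restr} argument shows their model classes remain equal over attribute set $I$, so $\topin\mu L=\sigma$ is still a $\nDelta1$-formula there. (Also, bound set variables of $\chi$ indexed by automaton states can be renamed to avoid any clash with attributes of $I$; only the free-variable condition of Remark~\ref{rem:restr} matters.)
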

\begin{proof}
  For every picture $p\in\Gammapp$ we have the equivalence chain
  $p\in\rtopin\mu L$ iff $\rtop(\attr p\mu) \in L$ iff $\attr{\rtop(p)}\mu\in
  L$ iff $p\in \rtop^{-1}(\prinv L\mu)$. Now apply
  Proposition~\ref{prop:topreg} to the regular word language $\prinv L\mu$.
\end{proof}

When we write $\rtopin\mu L$ in the following, the alphabet $\Gamma$ must be
clear from the context.

\subsection{Relativization and Closure Under Concatenation}

In this section, we will show the following result:
\begin{thProp}\label{prop:col-closure}
  For every $k\geq 0$, the class of $\SOSigma 1 {\FOPi k{\nDelta1}}$-definable
  picture languages is closed under column concatenation and column closure.
\end{thProp}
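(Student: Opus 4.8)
The plan is to handle the two closure operations separately, and in each case to reduce the statement about pictures to a standard syntactic manipulation of the defining formula, using a relativization technique. For column concatenation, suppose $L_1, L_2$ are $\SOSigma1{\FOPi k{\nDelta1}}$-definable, say $L_i = \Mod{}{\phi_i}$ with $\phi_i = \exists(X_\mu)_{\mu\in J_i}\,\psi_i$ where $\psi_i \in \FOPi k{\nDelta1}$. The idea is to introduce one fresh attribute, call it $\cld$, whose intended interpretation is the set of positions lying in the left block $p_1$ (equivalently, the columns before a chosen ``cut''). Using the formula $x \leq_2 x'$ constructed in Example~\ref{exp:leq}, which is a $\nDelta1$-formula, one can express that $X_\cld$ is exactly a downward-closed-under-$\leq_2$ set of columns, i.e.\ an initial segment of columns; this ``$X_\cld$ is a column prefix'' condition is a boolean combination of $\nDelta1$-formulas and hence again $\nDelta1$. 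Then I would relativize $\psi_1$ to the positions in $X_\cld$ and $\psi_2$ to the positions outside $X_\cld$, replacing each quantifier $\exists x\,\theta$ by $\exists x\,(\X\cld x \wedge \theta)$ (resp.\ $\neg\X\cld x$), and each $S_2$-edge appropriately at the boundary. The relativized formulas remain in $\FOPi k{\nDelta1}$ since relativizing by an atomic (hence $\nDelta1$) guard does not increase first-order quantifier alternation depth and $\nDelta1$ is closed under the needed boolean combinations (noted just after Example~\ref{exp:leqdelta}). Finally $p \in L_1 \pcol L_2$ iff there is a column prefix making the relativized $\psi_1$ true on the prefix and the relativized $\psi_2$ true on the suffix; this is $\exists X_\cld$ over an $\FOPi k{\nDelta1}$-matrix together with the existential quantifiers $(X_\mu)$ from $\phi_1,\phi_2$ (on disjoint attribute sets), which after collecting all set quantifiers into one outer block is exactly a $\SOSigma1{\FOPi k{\nDelta1}}$-formula.

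For column closure, the approach is the same in spirit but one now needs to guess a whole decomposition of $p$ into finitely many consecutive blocks, each a member of $L$. The standard trick is to guess, with one fresh attribute $\blk$, a subset of columns marking the \emph{left borders} of the blocks, so that the columns between consecutive marks (and between the last mark and the right edge) are exactly the blocks. Expressing ``$X_\blk$ marks a valid block decomposition'' and ``the block starting at a marked column and ending before the next mark satisfies $\psi$'' again uses only $\leq_2$-style formulas and relativization, all within $\nDelta1$ guards over an $\FOPi k{\nDelta1}$ body. One subtlety is that $\psi$ must now be checked on \emph{each} block, i.e.\ under a universal first-order quantifier over block-start positions; since the body of $\psi$ after relativization is $\FOPi k{\nDelta1}$ and we are prepending a universal first-order block, the result is still $\FOPi k{\nDelta1}$ (for $k \geq 1$; for $k=0$ one uses that $\FO(\nDelta1)$-absorbs an extra quantifier block, or one notes $\FOPi0{\nDelta1}$-definable languages are already first-order). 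Another subtlety is the empty-picture convention: $L^0 = \{\epsilon\}$, but picture languages contain no empty picture, so $\cupclosure{L} = \bigcup_{i\geq 1} L^i$ and the formula need only handle $i\geq 1$, i.e.\ at least one block — this is automatic since every non-empty picture has a nonempty leftmost column that can carry the first mark. Collecting the guessed attributes $\blk$ together with the existential $(X_\mu)$ from $\phi$ into a single outer existential set-quantifier block yields a $\SOSigma1{\FOPi k{\nDelta1}}$-formula for $\cupclosure{L}$.

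The main obstacle I expect is the bookkeeping of the relativization at block boundaries — in particular making sure that the horizontal successor predicate $S_2$ is correctly ``cut'' so that a position in the last column of one block is not treated as having a successor inside that block, while first-order quantifiers in $\psi$ still range correctly over the intended sub-picture. This is where one must be careful that all the guard conditions and the modified atomic formulas genuinely stay inside $\nDelta1$ (so that the alternation count of the $\FOPi k{\nDelta1}$-body is preserved), rather than secretly requiring a nested set quantifier that would bump the complexity. The $\leq_2$ formula from Example~\ref{exp:leq} being simultaneously $\sSigma1$ and $\sPi1$ is precisely what makes this work, since it lets us use ``$x \leq_2 x'$'' both positively and negatively inside the $\FOPi k{\nDelta1}$ matrix without cost. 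Once the boundary encoding is set up cleanly, the rest is routine: the closure of $\FOPi k{\nDelta1}$ under relativization, conjunction, and (for concatenation) disjoint renaming of attribute sets, plus Remark~\ref{rem:exists-set} to absorb the newly guessed attributes into the outer existential set-quantifier block.
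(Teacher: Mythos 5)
Your overall architecture coincides with the paper's (Lemma~\ref{lem:closed}): guess the cut, respectively the block markers, by fresh set attributes, relativize the $\FOPi k{\nDelta1}$-body to the resulting blocks, and absorb the new attributes into the outer existential set block. The genuine gap is that the step the paper spends most of its effort on --- showing that relativization does not leave $\FOPi k{\nDelta1}$ (Lemmas~\ref{lem:relativizefo} and~\ref{lem:relativize}) --- is asserted rather than proved, and the mechanism you propose does not suffice. Guarding the first-order quantifiers of the outer $\Pi_k$-prefix is indeed harmless, but the innermost matrix of a $\FOPi k{\nDelta1}$-formula is itself a $\nDelta1$-formula, i.e.\ it comes with a $\sSigma1$-form $\exists \ov Y\,\alpha$ and an equivalent $\sPi1$-form $\forall \ov Y\,\beta$, and its \emph{internal} first-order quantifiers must be relativized as well. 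With your atomic guard $\X\cld x$ (concatenation case) the relativized forms $\exists \ov Y\,\alpha^{\mathrm{rel}}$ and $\forall \ov Y\,\beta^{\mathrm{rel}}$ are equivalent only under the assumption that $X_\cld$ is a column prefix; since $X_\cld$ is free at that point, this does not yet show that the relativized matrix is a $\nDelta1$-formula, because equivalence is required over all assignments. This is repairable (push the purely first-order prefix condition into the innermost matrices, or use the paper's device of quantifying the auxiliary set both existentially and universally inside the relativization, cf.\ Lemma~\ref{lem:relativizefo} and Remark~\ref{rem:interval}), but your justification ``an atomic guard does not increase alternation depth'' does not address it.

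In the column-closure case the gap is more serious: there your guards are $\leq_2$-style $\nDelta1$-formulas, and inserting such guards at the quantifiers \emph{inside} $\alpha$ and $\beta$ destroys the prenex $\sSigma1$/$\sPi1$ shape, since the guards carry their own set quantifiers, which cannot in general be pulled out across the first-order quantifiers of $\alpha$ or $\beta$. This is precisely what the $\Between$/$X_\rect$ construction of Lemma~\ref{lem:relativizefo} is for: the interval is coded by \emph{one} auxiliary set fixed once, so the first-order core stays first-order, and existential versus universal quantification of that set yields the two required forms (the paper's induction in Lemma~\ref{lem:relativize} then uses $\leq_2$-guards only on the outer universal first-order blocks). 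Without this or an equivalent mechanism, your ``relativize $\psi$ to each block'' step is unsupported. Finally, your $k=0$ fallbacks are incorrect as stated: $\FOPi 0{\nDelta1}$-definable languages are $\nDelta1$-definable, not first-order, and prepending a universal first-order block does not stay inside $\nDelta1$; the correct treatment is the paper's, namely that $\SOSigma 1{\nDelta1}=\sSigma1$ is the class of recognizable picture languages, whose closure under column concatenation and column closure is known. (A minor slip: the column-prefix condition is not a boolean combination of $\nDelta1$-formulas --- it needs universal first-order quantifiers --- but it can be written in pure first-order logic with $S_1,S_2$, so this is harmless.)
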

Its proof (see end of this section) is well-known for the important case
$k=0$.
We prepare the proof for the case $k\geq 1$ with the
concept of relativization, cf.\ \cite{Straub94}, Lemma~VI.1.3.

Recall the definition of the $\nDelta1$-formula $x \leq_2 x'$ from
Examples~\ref{exp:leq},~\ref{exp:leqdelta}.
For a non-empty picture $p$ and $j,j'$ with $1\leq j \leq j' \leq \length p$
we define $p[j,j']$ to be the picture that is assembled by the columns $j,
\ldots, j'$ of $p$.

Let $J$, $K$ be disjoint attribute sets.
Let $\phi$ be a formula with free set variables in $\{X_\mu \mid \mu \in J\}$
and free first-order variables in $\{x_\nu \mid \nu \in K\}$.  
Let $x,x'$ be
two fresh first-order variables.  Let $\phi'$ be a formula
whose free variables are those of $\phi$ as well as $x, x'$.  We say that
$\phi'$ \emph{relativizes $\phi$ to $x,x'$} iff for all non-empty pictures
$p\in \Unique {J\cup K}K$ and all $j,j'\leq \length p$ we have
\[
p,\subst{\pos1j}{x}\subst{\pos1{j'}}{x'} \models \phi' \Leftrightarrow
  j \leq j' \mbox{ and } p[j,j'] \models \phi.
\]
Intuitively, $\phi'$ says about the subpicture demarcated by the top-row
positions $x,x'$ the same as $\phi$ says about the whole picture. 

We will need the following obvious remark:
\begin{thRem}\label{rem:interval}
  Let $\phi_{[x,x']}$ relativize $\phi$ to $x,x'$.  Let $p_1,p_2$ be two
  non-empty pictures
  of the same size and $j,j'\leq \length {p_1}$ 
  such that $p_1[j,j'] = p_2[j,j']$.  Then 
  \[
  p_1,\subst{\pos1j}{x}\subst{\pos1{j'}}{x'} \models \phi_{[x,x']} \Leftrightarrow 
  p_2,\subst{\pos1j}{x}\subst{\pos1{j'}}{x' }\models \phi_{[x,x']}.
  \]
\end{thRem}

\begin{thLem}\label{lem:relativizefo}
  For every first-order formula $\phi$ that does not use the first-order variables
  $x, x'$ there is a $\nDelta1$-formula $\phi'$ that relativizes $\phi$ to
  $x,x'$.
\end{thLem}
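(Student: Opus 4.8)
The plan is to prove Lemma~\ref{lem:relativizefo} by induction on the structure of the first-order formula $\phi$, carrying along the invariant that the relativized formula is a $\nDelta1$-formula. Throughout, I will use the macro $x\leq_2 x'$ from Example~\ref{exp:leq}, which is a $\nDelta1$-formula asserting that $x$ and $x'$ lie in the same row with $x$ weakly left of $x'$; the point of $x,x'$ being in the top row is just that they unambiguously demarcate a column interval. First I would set up the relativization map $\phi \mapsto \phi_{[x,x']}$ on all first-order formulas (not only those without free occurrences of $x,x'$ --- the induction needs to recurse through quantifiers, but since $x,x'$ are fixed demarcating variables I should rename bound variables so no clash arises, and the hypothesis ``does not use $x,x'$'' is what makes the base case and the final statement clean).

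For the \textbf{base cases}, the atomic formulas are handled directly. A formula $S_1 x_\mu x_\nu$ (vertical successor) is unaffected by which column interval we look at, so I set $(S_1 x_\mu x_\nu)_{[x,x']} := S_1 x_\mu x_\nu \wedge \theta$, where $\theta$ is a $\nDelta1$-formula asserting that $x_\mu$ and $x_\nu$ both lie in the interval $[x,x']$ and that $x\leq_2 x'$ holds; ``lies in the interval $[x,x']$'' for a position $y$ is expressed by $\exists y_1\exists y_2\, (\ftop(y_1)\wedge\ftop(y_2)\wedge x\leq_2 y_1 \wedge y_1\leq_2 x' \wedge \text{$y_1,y$ in same column} \wedge \dots)$ --- but more simply, since $x$ is in the top row, ``$y$ is weakly right of $x$'s column'' is captured by a $\nDelta1$-formula built from $\leq_2$ composed with $S_1$-chains, or one directly writes a $\nDelta1$-formula $x \leq_2' y$ meaning ``the column of $y$ is $\geq$ the column of $x$''. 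All of these are boolean combinations of $\nDelta1$-formulas, hence $\nDelta1$ (as noted right after Example~\ref{exp:leqdelta}). For $S_2 x_\mu x_\nu$ (horizontal successor) and $x_\mu = x_\nu$ the treatment is the same: conjoin ``both endpoints in $[x,x']$''. For $\X\mu x_\nu$, relativization is just $\X\mu x_\nu$ conjoined with ``$x_\nu$ in $[x,x']$'', again $\nDelta1$.

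For the \textbf{inductive step}, boolean connectives are immediate: $(\phi_1 \wedge \phi_2)_{[x,x']} := (\phi_1)_{[x,x']} \wedge (\phi_2)_{[x,x']}$, similarly for $\vee$, and for negation one sets $(\neg\phi)_{[x,x']} := x\leq_2 x' \wedge \neg(\phi)_{[x,x']}$ (the extra conjunct is needed because when $j > j'$ both sides of the defining equivalence must be false, so the relativized negation must not accidentally become true). These preserve $\nDelta1$ by closure under boolean combinations. The quantifier case is the heart: for $\exists x_\nu\,\psi$, I set $(\exists x_\nu\,\psi)_{[x,x']} := \exists x_\nu\,(\text{``$x_\nu$ in the interval $[x,x']$''} \wedge (\psi)_{[x,x']})$, i.e.\ I guard the quantified variable to range only over the demarcated subpicture; universal quantification is dual, guarding with an implication. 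The correctness of the equivalence is then a routine unwinding of semantics using the induction hypothesis, together with the bijection between $\dom{p[j,j']}$ and the relevant sub-rectangle of $\dom p$.

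The \textbf{main obstacle}, and the reason to be careful, is the $\nDelta1$-membership bookkeeping through the first-order quantifiers: a first-order $\exists x_\nu$ is \emph{not} for free inside $\nDelta1$ the way a $\leq_2$-style set quantifier is, so I cannot naively push an extra $\exists x_\nu$ in front and still claim $\nDelta1$. The resolution is Remark~\ref{rem:set-fo}: since $\nDelta1$ contains $\FO$ and is closed under conjunction, each first-order quantification $\exists x_\nu\,\chi$ with $\chi\in\nDelta1$ can be rewritten as $\exists X_\nu\,\chi'$ with $\chi'\in\nDelta1$, which is again a $\nDelta1$-formula (existential set quantification of a $\Sigma_1$-formula is $\Sigma_1$; and by doing the same to the De Morgan dual, the $\Pi_1$ side is matched). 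So in the quantifier case of the induction I invoke Remark~\ref{rem:set-fo} to convert the guarded first-order quantifier into a guarded set quantifier, keeping the formula in $\nDelta1$. Once this is in place, the induction goes through and the final $\nDelta1$-formula $\phi'$ relativizing a given $x,x'$-free first-order $\phi$ is exactly $(\phi)_{[x,x']}$; it satisfies the defining equivalence of ``relativizes $\phi$ to $x,x'$'' for all $p\in\Unique{J\cup K}{K}$ and all $j\leq j'\leq\length p$, and it is false whenever $x\leq_2 x'$ fails, as required.
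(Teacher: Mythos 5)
There is a genuine gap at the heart of your induction: the quantifier case. You claim that from $(\psi)_{[x,x']}\in\nDelta1$ it follows that $\exists x_\nu\,(\text{guard}\wedge(\psi)_{[x,x']})$ is again a $\nDelta1$-formula, justified by Remark~\ref{rem:set-fo} plus ``doing the same to the De Morgan dual''. Remark~\ref{rem:set-fo} only lets you replace the first-order quantifier by an \emph{existential} set quantifier, which shows the result is a $\sSigma1$-formula; it gives you nothing on the $\sPi1$ side. Applying the dual remark to $\neg$ of the formula again produces an existential set quantifier after pushing the negation through, so the ``$\Pi_1$ side'' is never matched. Indeed the closure principle you are implicitly invoking --- that $\nDelta1$ (over pictures) is closed under first-order quantification --- cannot be true if the paper's results are: it would give $\FOPi{k}{\nDelta1}=\nDelta1$ for all $k$, hence $\SOSigma 1{\FOPi {k-1}{\nDelta1}}=\sSigma1$, and then Theorem~\ref{th:main} would make $\sSigma1$ simultaneously at most $1$-fold and at least $k$-fold exponential with respect to group complexity. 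So the invariant ``relativization is $\nDelta1$'' cannot be carried through the quantifier step by any such generic closure argument, and your induction does not go through as written. (The atomic and boolean cases, including the extra guard $x\leq_2 x'$ in the negation case, are fine, and the column-comparison predicates you need are indeed $\nDelta1$ by the method of Example~\ref{exp:leq}.)

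The paper's proof sidesteps exactly this obstacle by never quantifying over sets inside the induction at all: it introduces a \emph{single} set variable $X_\rect$ which, given the top-row positions $x,x'$, is \emph{uniquely} determined by the first-order formula $\Between$ (it is the block of columns cyclically between $x$ and $x'$, with $\nowrap$ distinguishing the non-wrapping case), and it relativizes all first-order quantifiers of $\phi$ to membership in $X_\rect$, so the matrix stays first-order no matter how many quantifiers $\phi$ has. Because the witness set is unique, $\exists X_\rect(\Between\wedge\nowrap\wedge\phi')$ and $\forall X_\rect(\Between\rightarrow(\nowrap\wedge\phi'))$ are equivalent, which is precisely what yields membership in both $\sSigma1$ and $\sPi1$, i.e.\ in $\nDelta1$, with one set quantifier. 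If you want to salvage a structural induction, you would have to build in some such uniqueness device for every quantifier; the guard ``$x_\nu$ lies in $[x,x']$'' provides no uniqueness and hence no route to the $\sPi1$ half.
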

\begin{proof}
  Set 
  \[
  \columns = \forall x_1\forall x_2 (
                 S_1x_1x_2 \rightarrow (\X\rect x_1 \leftrightarrow \X\rect x_2)).
  \]
  Then $\columns$ asserts that $X_\rect$ is 
  closed under vertical predecessors
  and successors and hence a union of columns.
  Recall the formulas $\ftop(x)$ and $\fleft(x)$ from Example~\ref{exp:ftop}.
  Set
  \[
  \Between = 
  \begin{array}[t]{@{}l}
    \columns \wedge \X\rect x \wedge \X\rect x' \wedge {}\\
    \forall x_1 (S_2x_1x \rightarrow \neg \X\rect x_1)  \wedge {}\\
    \forall x_2 (S_2x'x_2 \rightarrow \neg \X\rect x_2)   \wedge {}\\
    \forall x_1x_2 \big( 
    \begin{array}[t]{@{}l}
      S_2x_1x_2 \wedge \ftop(x_1) \wedge x_1{\not=}x' \wedge x_2{\not=}x 
                 \rightarrow 
                 (\X\rect x_2 \leftrightarrow \X\rect x_1)\big). \\
    \end{array}
  \end{array}
  \]
  Intuitively, $\Between$ asserts for a set $X_\rect$ and two top-row
  positions $x,x'$ that $X_\rect$ is the subblock cyclically
  between the columns marked by the top-row positions $x$ and $x'$.
  %
  In other words, for $j,j',m,n\geq 1$ with $j,j'\leq n$ there exists 
  exactly one picture $p$ of size $(m,n)$ over alphabet $\schema{\{\rect\}}$
  such that $p, \subst{\pos1j}{x}\subst{\pos1{j'}}{x'}\models \Between$.
  That picture is characterized as follows:
  If $j\leq j'$, then $\attr p{\rect}$ carries a $1$ exactly in all positions
  in the columns $j, \ldots, j'$.
  If $j > j'$, then $\attr p{\rect}$ carries a
  $1$ exactly in all positions in the columns 
  $1,\ldots,j',j, \ldots, \length p$.
  
  We introduce a formula $\nowrap$ that asserts that the first
  case is true.
  \[
  \nowrap = \fleft(x) \vee 
          \neg\exists x_2 (\fleft(x_2) \wedge \X\rect x_2).
  \]
  For every non-empty picture $p$ over $\schema{\{\rect\}}$ 
  and every $j,j'\leq \length p$ with
  $p, \subst{\pos1j}{x}\subst{\pos1{j'}}{x'}\models \Between$ we have
  \[
  p, \subst{\pos1j}{x} \models \nowrap \;\Leftrightarrow\; j\leq j'.
  \]

  Let $\phi$ be a first-order formula that does not use the variables $x$,
  $x'$, and $X_\rect$.
  Assume w.l.o.g.\ that there are no universal quantifications in
  $\phi$.  
  Let the formula $\phi'$ result from $\phi$ by relativization to $X_\rect$,
  i.e., by successively replacing every first-order quantification
  of the form $\exists x \psi$ by $\exists x (\X\rect x \wedge \psi)$.

  Define two formulas $\sigma$, $\pi$ as:
  \[
  \begin{array}{rcl}
  \sigma(x,x') & = &
      \exists X_{\rect} ( \Between \wedge \nowrap \wedge \phi'),\\
  \pi(x,x')    & = &
      \forall X_{\rect} ( \Between \rightarrow (\nowrap \wedge \phi')).
  \end{array}
  \]
  Then $\sigma \in \sSigma1$ and $\pi \in \sPi1$.  We have for every
  non-empty picture $p$ and every $j,j' \leq \length p$ that 
  \[
  p, \subst{\pos1j}{x}\subst{\pos1{j'}}{x'}  \models \pi
  \;\Leftrightarrow \;
  (j\leq j' \wedge p[j,j'] \models \phi)
  \;\Leftrightarrow \;
  p, \subst{\pos1j}{x}\subst{\pos1{j'}}{x'} \models \sigma.
  \]
  Hence $\phi_{[x,x']} = \sigma$ is a $\nDelta1$-formula and has the desired
  property. 

  This completes the proof.
\end{proof}

A formula class $\cF$ is \emph{relativizable} iff for every formula $\phi \in
\cF$ and two fresh first-order variables $x,x'$, there is a formula $\phi'\in
\cF$ that relativizes $\phi$ to $x,x'$.

\begin{thLem}\label{lem:relativize}
  For every $k\geq 0$, the formula class $\FOPi k{\nDelta1}$ is relativizable.
\end{thLem}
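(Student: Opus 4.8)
The plan is to prove Lemma~\ref{lem:relativize} by induction on $k$, using the closure properties that define the classes $\FOPi k{\nDelta1}$ together with the base case already handled by Lemma~\ref{lem:relativizefo}. For $k=0$ the class $\FOPi 0{\nDelta1}$ is just the class of $\nDelta1$-formulas (or boolean combinations thereof, depending on the precise reading of $\FOPi 0$), and in any case every formula there is built from $\nDelta1$-formulas by boolean combinations; since relativization commutes with boolean connectives (relativizing $\neg\psi$, $\psi_1\wedge\psi_2$, $\psi_1\vee\psi_2$ is obtained by relativizing the constituents and, for negation, also using that the interval condition $j\le j'$ is itself $\nDelta1$-expressible via the formula $x\leq_2 x'$), and since Lemma~\ref{lem:relativizefo} gives relativizations of the first-order building blocks inside each $\nDelta1$-formula, the base case reduces to Lemma~\ref{lem:relativizefo}. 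Actually the cleanest route for the base case is: a $\nDelta1$-formula is a boolean combination of $\sSigma1$- and $\sPi1$-formulas, each of which is a set-quantifier block over a first-order matrix; relativize the first-order matrix by Lemma~\ref{lem:relativizefo} and guard the set quantifiers by $X_\rect$ as in that proof, noting that the subpicture $p[j,j']$ is exactly the substructure whose universe is carried by $X_\rect$ once $\Between\wedge\nowrap$ holds.

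For the induction step, assume $\FOPi k{\nDelta1}$ is relativizable; I want to show $\FOPi{k+1}{\nDelta1}$ is. A formula $\phi\in\FOPi{k+1}{\nDelta1}$ is, by definition, obtained from formulas in $\clB{\FOSigma k{\nDelta1}}$ (equivalently, from formulas in $\FOSigma k{\nDelta1}$ and their negations) by universal first-order quantifications and boolean combinations — or, unwinding, $\phi$ is a boolean combination of formulas of the shape $\forall y_1\cdots\forall y_r\,\theta$ with $\theta\in\FOSigma k{\nDelta1}$, and dually for the $\Sigma$-side. The key observation is the standard one for relativization: to relativize $\forall y\,\theta$ to $x,x'$, one takes $\forall y\,\big(x\leq_2 y \wedge y\leq_2 x' \rightarrow \theta_{[x,x']}\big)$, where $\theta_{[x,x']}$ relativizes $\theta$, and one uses that $x\leq_2 y$ and $y\leq_2 x'$ are $\nDelta1$-formulas (Examples~\ref{exp:leq},~\ref{exp:leqdelta}) so they can be absorbed without raising the alternation level; dually for $\exists y$. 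So the induction reduces to: (i) relativization of $\nDelta1$-formulas at the leaves, handled by the base case; (ii) relativization commutes with boolean combinations, as above; (iii) relativization of a quantifier block, handled by prepending the $\nDelta1$-guards. One must also check that relativizing the leaf $\nDelta1$-formulas and then reassembling lands back in $\FOPi{k+1}{\nDelta1}$: the guards are $\nDelta1$, hence $\sPi1$ and $\sSigma1$ simultaneously, so a conjunction (resp.\ implication) of a guard with a $\FOSigma k{\nDelta1}$- (resp.\ $\FOPi k{\nDelta1}$-) formula stays in the same class by the calculation rules of Remark~\ref{rem:calc} and the closure of $\nDelta1$ under boolean combinations.

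The main obstacle, and the place where care is needed, is the bookkeeping at the leaves: the relativization $\phi_{[x,x']}$ produced by Lemma~\ref{lem:relativizefo} for a first-order $\phi$ introduces a fresh set variable $X_\rect$ and an existential (or universal) set quantifier over it, so it is genuinely a $\nDelta1$-formula rather than first-order, and one must be sure that substituting such a formula for a first-order leaf inside a $\FOPi{k+1}{\nDelta1}$-skeleton does not push the formula out of the class — but this is exactly why the class is parametrized by $\nDelta1$ rather than $\FO$ at the base, so it goes through. A second, more delicate point: when $\phi$ itself already contains set quantifiers (the $k\ge 1$ case proper), relativization must reach inside those set quantifiers too, guarding the quantified sets to the subblock; this is precisely the manoeuvre in the proof of Lemma~\ref{lem:relativizefo} (replacing $\exists X$ by $\exists X$ together with the constraint that $X\subseteq X_\rect$, expressible with $S_2$ and $X_\rect$), and the induction hypothesis packages it cleanly provided one states the inductive claim for the full class and not merely for first-order matrices. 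I expect the write-up to be short once the commutation lemmas (relativization vs.\ booleans, vs.\ first-order quantifier blocks, vs.\ set-quantifier blocks) are isolated, with Lemma~\ref{lem:relativizefo} and the $\nDelta1$-definability of $\leq_2$ doing the real work.
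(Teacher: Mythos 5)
Your proposal is correct and follows essentially the same route as the paper: induction on $k$, with the base case obtained by relativizing the first-order matrices of the $\sSigma1$- and $\sPi1$-forms of a $\nDelta1$-formula via Lemma~\ref{lem:relativizefo}, and the induction step obtained by guarding the first-order quantifier block with the $\nDelta1$-formulas $x\leq_2 x_i$ and $x_i\leq_2 x'$ so that the alternation level is preserved. Your additional guarding of the re-attached set quantifiers by $X_\rect$ is harmless but unnecessary---the paper instead invokes Remark~\ref{rem:interval}, since the relativized matrix depends only on the columns between $x$ and $x'$.
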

\begin{proof}
   The proof is by induction on $k$.  For the induction basis, let
   $k=0$ and $\phi \in \sDelta1$.
  Then $\phi$ is equivalent to a formula of the form $\exists
  X_1\ldots X_n (\phi')$ as well as to a formula of the form $\forall
  X_1\ldots X_n (\phi'')$ for two first-order formulas $\phi', \phi''$.
  
  As we have shown in the preceding lemma, there exist a $\sSigma1$-formula
  $\phi'_{[x,x']}$ and a $\sPi1$-formula $\phi''_{[x,x']}$
  that relativize $\phi'$ (and $\phi''$, respectively) to $x,x'$.

  Let 
  \[
  \begin{array}{rcl}
    \sigma & = & \exists X_1\ldots \exists X_n (\phi'_{[x,x']}),\\
    \pi    & = & \forall X_1\ldots \forall X_n (\phi''_{[x,x']}).
  \end{array}
  \]
  Then $\sigma\in\sSigma1$ and $\pi\in\sPi1$.  By
  Remark~\ref{rem:interval}
  they are equivalent and relativize $\phi$ to $x,x'$.
  This completes the proof for the case $k=0$.

  Now assume $k\geq 1$ and the claim is true for $k-1$ instead of $k$.
  Let $\phi\in\FOPi k{\sDelta1}$.
  Then $\phi$ is equivalent to a formula of the form
  \[
    \forall x_1\ldots \forall x_n (\neg \psi),
  \]
  for some $n\geq 1$ and $\psi \in \FOPi {k-1}{\nDelta1}$.
  By assumption there is a $\FOPi {k-1}{\nDelta1}$-formula $\psi_{[x,x']}$ 
  that relativizes $\psi$ to $x,x'$.
  Choose 
  \[
  \phi_{[x,x']} = \forall x_1 \ldots \forall x_n 
        \left(\left(\bigwedge_{i=1}^n x \leq_2 x_i \wedge x_i \leq_2 x'\right) 
                    \rightarrow \neg\psi_{[x,x']}\right).
  \]
  Since $x \leq_2 x'$ is a $\nDelta1$-formula, 
  $\phi_{[x,x']}$ is indeed a $\FOPi {k}{\nDelta1}$-formula.
  Besides, $\phi_{[x,x']}$ relativizes $\phi$ to $x,x'$.  
  This completes the proof.
\end{proof}

In Section~\ref{sec:semantic-fo} we will
come back to the following observation.  See also Remark~\ref{rem:exists-set}.

\begin{thRem}\label{rem:closed-easy}
  Let $\cF$ be a class of formulas closed under conjunction and disjunction.
  The class of $\SOSigma 1{\cF}$-definable picture languages is closed under
  intersection, union, and alphabet projection.
\end{thRem}

\begin{thLem}\label{lem:closed}
  Let $\cF \supseteq \nDelta1$ be a relativizable class of formulas.  Let
  $\Gamma = \schema I$.  Let
  $L_1, L_2 \subseteq \Gammapp$ be two $\SOSigma 1{\cF}$-definable picture
  languages.
  \begin{enumerate}
  \item\label{it:closedpcol} If $\cF$ is closed under conjunction and
    disjunction, then $L_1 \pcol L_2$ is $\SOSigma 1{\cF}$-definable.

  \item\label{it:closedplus} If $\cF$ is closed under conjunction,
    disjunction, and universal $\FO$-quantification, then $L_1^+$ is $\SOSigma
    1{\cF}$-definable.
  \end{enumerate}
\end{thLem}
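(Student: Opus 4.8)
The plan is to treat the two parts separately; both reduce to the relativization machinery of Lemmas~\ref{lem:relativizefo}--\ref{lem:relativize}. Throughout, since $\cF$ is closed under conjunction and disjunction, a $\SOSigma1{\cF}$-formula may be assumed to have the form $\exists\bar X\,\psi$ with $\psi\in\cF$.

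\emph{Part~\ref{it:closedpcol} (column concatenation).} Write $L_i=\Mod{}{\exists\bar X_i\,\psi_i}$ with $\psi_i\in\cF$ and $\bar X_1,\bar X_2$ disjoint tuples of fresh set variables, and let $(\psi_i)_{[x,x']}\in\cF$ relativize $\psi_i$ to $x,x'$. A non-empty picture $p$ lies in $L_1\pcol L_2$ iff some column index $c$ with $1\le c<\length p$ satisfies $p[1,c]\in L_1$ and $p[c+1,\length p]\in L_2$. I would encode $c$ and $c+1$ by fresh first-order variables $u,v$ in the top row with $S_2uv$, and the two corners by fresh top-row variables $u_0,v_1$, giving
\[
\exists\bar X_1\exists\bar X_2\exists u\exists v\exists u_0\exists v_1\big(
 \ftop(u)\wedge S_2uv\wedge\fleft(u_0)\wedge\ftop(u_0)\wedge\fright(v_1)\wedge\ftop(v_1)
 \wedge(\psi_1)_{[u_0,u]}\wedge(\psi_2)_{[v,v_1]}\big).
\]
This is a positive boolean combination of $\cF$-formulas under existential set and first-order quantifiers; the first-order ones are absorbed into set quantifiers by Remark~\ref{rem:set-fo} (applied to the class of $\SOSigma1{\cF}$-formulas, which contains $\FO$ and is closed under conjunction), so the result is a $\SOSigma1{\cF}$-formula. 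This part is routine.

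\emph{Part~\ref{it:closedplus} (column closure).} First I would reduce to a $\cF$-defined factor language: write $L_1=\Mod{}{\exists\bar X\,\psi}$ with $\psi\in\cF$, put $L_1'=\Mod{}{\psi}$ over the enlarged attribute set (so $\exset{\bar X}{L_1'}=L_1$), and note that $\exset{\bar X}{\cdot}$ commutes with column concatenation, whence $L_1^+=\exset{\bar X}{(L_1')^+}$. Since $\SOSigma1{\cF}$-definable picture languages are closed under alphabet projection (Remark~\ref{rem:closed-easy}), it then suffices to show $(L_1')^+$ is $\SOSigma1{\cF}$-definable. For that, I would guess a set $X_\blk$ of top-row positions marking the last column of each factor (so the rightmost top-row position must lie in $X_\blk$); for $z\in X_\blk$ the factor ending at $z$ begins at the column immediately after the previous $X_\blk$-element, or at column~$1$, and I write $\form{isstart}(x,z)$ for the formula asserting that $x$ is that uniquely determined top-row start position. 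Then $(L_1')^+$ should be defined by
\[
\exists X_\blk\Big(\form{valid}(X_\blk)\wedge\forall z\big(z\in X_\blk\rightarrow
 \forall x\big(\form{isstart}(x,z)\rightarrow\psi_{[x,z]}\big)\big)\Big),
\]
where $\form{valid}(X_\blk)$ says ``$X_\blk$ is a set of top-row positions containing the rightmost one'' and $\psi_{[x,z]}\in\cF$ relativizes $\psi$ to the pair of top-row positions $x,z$ (it exists by relativizability of $\cF$). Note the use of the \emph{universal} form $\forall x(\form{isstart}(x,z)\rightarrow\dots)$ rather than $\exists x(\form{isstart}(x,z)\wedge\dots)$ --- legitimate because $\form{start}(z)$ exists and is unique --- precisely so as not to appeal to existential first-order closure of $\cF$.

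The one delicate point, and the main obstacle, is to keep every subformula inside $\cF$, given that $\cF$ is assumed closed only under conjunction, disjunction and \emph{universal} first-order quantification, not under existential first-order quantification or negation. The crux is the subclaim that $\form{isstart}(x,z)$ --- and likewise $\form{valid}$ --- is a $\nDelta1$-formula. Its $\sPi1$-shape is plain: besides the first-order conditions ``$x$ is in the top row'' and ``$x$ is leftmost, or its left neighbour lies in $X_\blk$'', it asks for ``$x\leq_2 z$'' (a $\nDelta1$-formula, Example~\ref{exp:leqdelta}) together with ``no element of $X_\blk$ lies strictly between the columns of $x$ and $z$'', which is a universal first-order formula over $\leq_2$. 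Its $\sSigma1$-shape comes from the apparatus of Lemma~\ref{lem:relativizefo}: one may instead guess the column interval $X_\rect$ between $x$ and $z$ via $\Between\wedge\nowrap$ --- a first-order formula in $X_\rect,x,z$ --- and add the first-order requirement that the only element of $X_\rect$ lying in $X_\blk$ is $z$ itself. Being $\nDelta1$, both $\form{isstart}(x,z)$ and its negation lie in $\cF$, so $\neg\form{isstart}(x,z)\vee\psi_{[x,z]}\in\cF$, hence $\forall x(\dots)\in\cF$ and then $\forall z(z\in X_\blk\rightarrow\dots)\in\cF$ by universal first-order closure; conjoining with $\form{valid}(X_\blk)\in\FO\subseteq\cF$ and prefixing $\exists X_\blk$ yields a $\SOSigma1{\cF}$-formula. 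Correctness of the displayed characterization --- that the factors cut out by $X_\blk$ tile $p$ and that $X_\blk$ ranges exactly over the factorizations of $p$ into non-empty factors --- is immediate. Everything beyond the $\nDelta1$-ness of these auxiliary order formulas is bookkeeping.
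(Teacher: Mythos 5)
Your overall route is the paper's own: Part~\ref{it:closedpcol} is the same ``guess the cut point in the top row and relativize'' argument (pulling the set quantifiers of the two $\SOSigma1{\cF}$-formulas to the front first is a harmless variation), and in Part~\ref{it:closedplus} your $X_\blk$, $\form{valid}$ and $\form{isstart}$ play precisely the roles of the paper's fresh attribute $\fin$ with per-factor top-row pattern $0^*1$, of $\topin\fin{\{0,1\}^*1}$, and of $\nextfin$ (which the paper builds from $\forall w(S_2wx \rightarrow X_\fin w)$ together with the relativized $\topin\fin{0^*1}$); the extra initial projection step via Remark~\ref{rem:closed-easy} changes nothing essential.

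There is, however, one concrete flaw, located exactly where your two renderings of $\form{isstart}$ disagree. Your $\sPi1$ version only forbids elements of $X_\blk$ \emph{strictly} between the columns of $x$ and $z$, and this admits spurious witnesses: let $X_\blk$ mark the top-row positions of columns $3,4,7$ in a picture of length $7$ and let $z$ be in column $7$; the factor ending at $z$ starts in column $5$, but the top-row position $x$ in column $4$ also satisfies your conditions (its left neighbour, column $3$, is marked, and columns $5,6$ are not). So the start position is not unique, your clause $\forall x(\form{isstart}(x,z)\rightarrow\psi_{[x,z]})$ additionally demands $\psi$ on the block of columns $4$ to $7$, and the witness $X_\blk$ obtained from a factorization of $p\in (L_1')^+$ need not satisfy your formula -- the ``immediate'' correctness claim breaks. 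Moreover your $\sSigma1$ rendering (``the only element of $X_\rect$ lying in $X_\blk$ is $z$'') is \emph{not} equivalent to this $\sPi1$ rendering, so the claimed $\nDelta1$-ness collapses as stated. The repair is local: in both shapes require that no element of $X_\blk$ occupies any column from the column of $x$ up to, but excluding, the column of $z$ (i.e.\ include $x$'s own column), and restrict both shapes to top-row witnesses so that they agree on all pictures and not only on those satisfying $\form{valid}$. With this correction the start position exists and is unique, the two shapes coincide (so $\form{isstart}$ is indeed $\nDelta1$), and the remaining bookkeeping -- negations of $\nDelta1$-formulas stay in $\cF$, universal $\FO$-closure, the final $\exists X_\blk$ -- is exactly the paper's argument; the paper avoids the slip altogether by expressing this condition as the relativization of $\topin\fin{0^*1}$ (Corollary~\ref{cor:topin}, Lemma~\ref{lem:relativizefo}), whose $0^*1$ pattern forbids marks from the column of $x$ up to, but excluding, the column of $y$.
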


\begin{proofc}{of Lemma~\ref{lem:closed}, Claim~\ref{it:closedpcol}}
  Let $\phi, \psi \in \SOSigma1\cF$ be formulas with $\Mod{}{\phi} = L_1$ and
  $\Mod{}{\psi} = L_2$.
  Let $x_0,x_1, x_2, x_3$ be four fresh first-order variables. Choose
  formulas $\phi_{[x_0,x_1]}$ and $\psi_{[x_2,x_3]}$ that relativize $\phi$
  and $\psi$, respectively.
  %
  Recall the formulas $\ftop(x)$, $\fleft(x)$, and $\fright(x)$
  from Example~\ref{exp:ftop}.
  Choose 
  \[
  \phi' = \exists x_0,x_1,x_2,x_3 \big(
  \begin{array}[t]{@{}l}
    \ftop(x_0) \wedge 
    \ftop(x_1) \wedge 
    \ftop(x_2) \wedge 
    \ftop(x_3) \wedge {}\\
    \fleft(x_0) \wedge
    S_2x_1x_2 \wedge \fright(x_3) \wedge \phi_{[x_0,x_1]} \wedge \psi_{[x_2,x_3]} \big).
  \end{array}
  \]
  Then $\phi' \in \FOSigma 1{\SOSigma 1{\cF}} = \SOSigma 1{\cF}$.  It remains
  to show that $\Mod{}{\phi'} = L_1 \pcol L_2$.

  Let $p\in \Mod{}{\phi'}$.  Then there exist top-row positions
  $\pos1{j_0}, \pos1{j_1}, \pos1{j_2}, \pos1{j_3} \in \dom p$ such that
  $j_0 = 1$ and $j_1+1 = j_2$ and $j_3=\length p$ and $p[j_0,j_1] \models \phi$ and
  $p[j_2,j_3] \models \psi$.  This implies $p\in \Mod{}\phi \pcol \Mod{}\psi =
  L_1 \pcol L_2$. Thus we have shown $\Mod{}{\phi'} \subseteq L_1 \pcol L_2$. 
  The converse inclusion is similar.
  This completes the proof of
  Claim~\ref{it:closedpcol}.
\end{proofc}

\begin{proofc}{of Lemma~\ref{lem:closed}, Claim~\ref{it:closedplus}}
  Let $\ov X = (X_\mu)_{\mu \in J} $ be a tuple
  of set variables and $\phi \in\cF$ be a formula such that
  \[
  \Mod{}{\exists \ov X \phi} = L_1.
  \]
  Let $\fin$ be a fresh attribute.
  Choose $\nDelta1$-formulas $\topin\fin{0^*1}$ and $\topin\fin{\{0,1\}^*1}$ according to
  Corollary~\ref{cor:topin}.  Set $\phi' = \phi \wedge \topin\fin{0^*1}$.
  Let $x,y$ be fresh first-order variables.  Let $\phi'_{[x,y]}$ be an
  $\cF$-formula that relativizes $\phi'$ to $x,y$.
  Let $\fnobetween$ be a $\sDelta1$-formula that relativizes
  $\topin\fin{0^*1}$ to $x,y$.

  Recall formula $\ftop(x)$ from Example~\ref{exp:ftop}.
  Set
  \[
  \begin{array}{rcl}
  \nextfin & = & 
  \ftop(x) \wedge \ftop(y)
  \wedge \big(\forall w (S_2 wx \rightarrow X_\fin w)\big)
  \wedge \fnobetween.\\
  \phi'' &= &\forall x \forall y
  (\nextfin \rightarrow \phi'_{[x,y]}).\\
  \phi''' &= &\exists \ov X \exists X_\fin (\phi'' \wedge
  \topin\fin{\{0,1\}^*1}).
  \end{array}
  \]
  Then $\nextfin$ is a $\sDelta1$-formula and  $\phi''' \in \SOSigma 1\cF$.
  We claim that 
  \begin{equation}
    \label{eq:L1+}
    \Mod{}{\phi'''} = L_1^+.
  \end{equation}
  Let $I' = I \cup J \cup \{\fin\}$.

  First let $p\in\Mod{I}{\phi'''}$.  There exists a picture
  $p''\in\Mod{I'}{\phi''}$ such that $\exset {J\cup\{\fin\}}{p''} = p$
  and $\rtop(\attr{p''}\fin) \in \{0,1\}^*1$.

  Pick $n\geq 1$ and top-row positions $\pos1{j_1},\ldots, \pos1{j_n}$ such
  that $j_1 < \ldots < j_n$ such that these positions are those top-row
  positions of $p''$ that carry $1$ for attribute $\fin$.  (Note that $j_n =
  \length p$.)  Choose $j_0 = 0$.

  Choose a decomposition $p'' = p'_1 \pcol \ldots \pcol p'_n$ into non-empty
  pictures $p'_1,\ldots,p'_n$ 
  such that $\length{p'_1 \pcol \ldots \pcol p'_k} = j_k$ for every
  $k\in\vonbis1n$.
  For every $k$, choose picture $p_k$ over alphabet $\schema I$ such that
  $\exset {J\cup\{\fin\}}{p'_k} = p_k$.
  Then
  \[
  p = p_1 \pcol \ldots \pcol p_n.
  \]
  %
  Let $k\in\vonbis1n$.  The position $\pos1{j_{k-1}+1}$ is a position in the
  picture $p''$ whose $S_2$-predecessor either does not exist (in case $k=1$)
  or carries a $1$
  for attribute $\fin$.  Thus
  \[
  p'', \subst{\pos1{j_{k-1}+1}}{x}\subst{\pos1{j_k}}{y} \models \nextfin.
  \]
  Since $p''\models \phi''$, this implies 
  \[
  p'', \subst{\pos1{j_{k-1}+1}}{x}\subst{\pos1{j_k}}{y}
  \models \phi'_{[x,y]},
  \]
  which in turn implies
  $p'_k \in \Mod{I'}{\phi'} \subseteq \Mod{I'}{\phi} = \Mod{I \cup J \cup \{\fin\}}{\phi}$.
  Thus $\exset{\{\fin\}}{p'_k} \in \Mod {I\cup J}\phi$.
  By choice of $p_k$, we have $p_k\in\Mod{J}{\exists{\ov X}\phi} = L_1$.
  
  Since $k$ has been chosen
  arbitrarily from $\vonbis1n$, this implies $p = p_1 \pcol \ldots \pcol
  p_n \in L_1^n \subseteq L_1^+$.
  This completes the proof of the direction ``$\subseteq$'' of
  Equation~(\ref{eq:L1+}).

  For the converse direction, let $p\in L_1^+$.  Pick $n\geq 1$ and
  $p_1,\ldots,p_n\in L_1$ such that $p = p_1 \pcol \ldots \pcol p_n$.
  Choose $j_0 = 0$ and for $k\in\vonbis1n$, choose 
  $j_k = \length{p_1 \pcol \ldots \pcol p_k}$.

  For every $k$, we have $p_k \in \Mod{I} {\exists \ov X (\phi)}$, thus for
  every $k$, there exists $p'_k\in \Mod{I'}{\phi}$ such that
  $\exset {J\cup\{\fin\}}{p'_k} = p_k$.
  Furthermore, we may pick $p'_k$ in such a way that
  $\rtop(\attr{p'_k}\fin) \in 0^*1$.

  Choose $p'' = p'_1 \pcol \ldots \pcol p'_n$.  Then
  $\exset {J\cup\{\fin\}}{p''} = p$ and
  $\rtop(\attr{p''}\fin) \in (0^*1)^n \subseteq \{0,1\}^*1$.  
  In order to show $p \in \Mod{I}{\phi'''}$, it remains to show $p'' \in
  \Mod{I'}{\phi''}$.
  To see this, let $z,z'$ be positions in $p''$ such that 
  $p'',\subst{z}{x}\subst{z'}{y} \models \nextfin(x,y)$.
  By definition of $\nextfin$, these are top-row positions, so there are
  $j$, $j'$ such that $z=\pos1j$ and $z' = \pos1{j'}$, and
  (because $\rtop(\attr{p'_k}\fin)\in 0^*1$ for every $k$),
  the columns $j$ and $j'$ are the start- and end-column of one and the
  same $p'_k$-subblock of $p''$ (for some $k\in\vonbis1n$).
  Since $p'_k \models \phi \wedge \topin\fin{0^*1}$, we conclude that
  this $p'_k$-subblock is in $\Mod{}{\phi'}$, i.e.,
  $p'', \subst{z}{x}\subst{z'}{y} \models \phi'_{[x,y]}$.
  Since $z,z'$ have
  been chosen arbitrarily, this implies that $p''\models \phi''$.
  We also have $p''\models \topin\fin{\{0,1\}^*1}$.
  This shows $p\models \phi'''$ and completes the proof of
  Equation~\ref{eq:L1+} and of Lemma~\ref{lem:closed}.
\end{proofc}

We are now ready to prove the result of this section.

\begin{proofc}{of Proposition~\ref{prop:col-closure}}
  For $k \geq 1$, this is a consequence of Lemmas~\ref{lem:relativize},
  \ref{lem:closed} and the fact that $\FOPi k{\nDelta1}$ is closed under
  conjunction, disjunction, and universal first-order quantifications.

  For $k=0$, we have $\SOSigma 1 {\FOPi k{\nDelta1}} = \sSigma1$, which is the
  class of recognizable picture languages.  For this class, the statement is
  well known, see e.g.\ \cite{GRST96}.
\end{proofc}

\subsection{Assembling the Picture Language}

For the rest of this section, let $\cF \supseteq \nDelta1$ be a class of
formulas such that the class of $\SOSigma 1\cF$-definable picture languages
is closed under column
concatenation, column closure, union and intersection.  We will need our
results only for the cases provided by  Proposition~\ref{prop:col-closure},
i.e., for the
case $\cF = \FOPi k{\nDelta1}$ for some $k\geq 0$.

For a function $f : \NNone \rightarrow \NNone$ and an alphabet $\Gamma$, the
picture language \emph{associated to} $f$ is defined as
\[
  \fpics{f,\Gamma} = \{ p \in \Gamma^{+,+} \mid \length p = f(\height p) \}.
\]

Typically $\Gamma$ is  clear from the context or irrelevant because of
Remark~\ref{rem:restr}, so we usually omit the second subscript.

If $f : \NNone \rightarrow \NNone$ is a function, we define another function
$f+1 : \NNone \rightarrow \NNone$, $n \mapsto f(n) + 1$.
\begin{thRem}\label{rem:f+1}
  Let us denote the set of all non-empty pictures of length $1$ over $\Gamma$
  by $\Gamma^{+,1}$.  Then $\fpics{f+1, \Gamma} = \fpics{f,\Gamma} \pcol
  \Gamma^{+,1}$.  Clearly $\Gamma^{+,1}$ is $\cF$-definable, thus if $\fpics
  f$ is $\SOSigma 1\cF$-definable, then, by Lemma~\ref{lem:closed}, so is
  $\fpics{f+1}$.
\end{thRem}

If $n\geq 1$ and $i,j\in\vonbis1n$, set $\code_n(i,j) = (i-1)n + j$.
Then $\code_n$ defines a bijection from $\vonbis1n \times \vonbis1n$ onto
$\vonbis1{n^2}$.

If $n\geq 1$ and $p$ is a picture of length $n^2$ over $\{0,1\}$,
we define $\fold_n(p)$ as the $n{\times}n$-matrix over $\{0,1\}$
with $\fold_n(p)\pos ij = p\pos1{\code_n(i,j)}$ for every $i,j\in\vonbis1n$.
%
%

Every top-row position of a picture $p$ of length $n^2$ corresponds to
exactly one
position in the $n{\times}n$-matrix $\fold_n(p)$.
%
The intuition for the attributes used in the following lemma is:
Picture positions marked by $\blk$ (or $\diag$, or $\fin$) correspond to the
matrix positions in the right column (or on the diagonal, or in the
bottom-right corner, respectively).

\begin{thLem}\label{lem:smalldef}
  Let $f : \NNone \rightarrow \NNone$ be a function such that its associated
  picture language $\fpics f$ is $\SOSigma 1\cF$-definable.
  Consider attribute set $I=\{\diag, \fin, \blk\}$.

  There exists a $\SOSigma 1{\cF}$-definable picture language $L_0$ such that 
  for all pictures $p$ over alphabet $\schema I$ we have: $p\in L_0$ iff
  $\length p = f(\height p)^2$ 
  and for  every $k\leq \length p$ we have the following three
  conditions:
  \begin{eqnarray}
    \label{eq:mod-blk}
    p\pos1k(\blk) =1 & \Leftrightarrow &  f(\height p) \,|\, k,\\
    \label{eq:mod-diag-2}
    p\pos 1k(\diag)  =  1 & \Leftrightarrow &
    (f(\height p)+1)\,|\,(k-1),\\
    \label{eq:fin}
    p\pos1k(\fin) = 1 & \Leftrightarrow & k = f(\height p)^2.
  \end{eqnarray}
\end{thLem}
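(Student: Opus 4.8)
Throughout write $\Gamma=\schema I$ and, for a picture $p$, $n=f(\height p)$. The plan is to read (\ref{eq:mod-blk})--(\ref{eq:fin}) off and let $L_0$ consist exactly of the pictures $p$ of width $n^2$ whose top rows along $\blk$, $\diag$, $\fin$ are the words $(0^{n-1}1)^{n}$, $1(0^{n}1)^{n-1}$ and $0^{n^2-1}1$; via $\code_n$ these are the pictures coding an $n{\times}n$ Boolean matrix so that $\blk$, $\diag$, $\fin$ mark its last column, its diagonal, and its bottom-right corner. These patterns are not regular (they depend on $n$), so I would not define $L_0$ through a top-row automaton but as an intersection of four $\SOSigma 1{\cF}$-definable picture languages: three prescribing one of the three top-row patterns each (up to $\length p$ being correct), and a fourth, first-order, one which --- together with the first three --- forces $\length p=n^2$.

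Concretely, I would take $L_\blk:=\big(\fpics f\cap\rtopin\blk{0^*1}\big)^{+}$ (with $\fpics f$ viewed over $\Gamma$, harmless by Remark~\ref{rem:restr}); this forces $\blk$ to mark exactly the multiples of $n$, and $\length p$ to be a multiple of $n$, but nothing more. For $\diag$ I would use that $\fpics{f+1}$ is $\SOSigma 1{\cF}$-definable (Remark~\ref{rem:f+1}) and set $L_\diag:=\big((\fpics{f+1}\cap\rtopin\diag{10^*})^{+}\pcol\rtopin\diag{\{1\}}\big)\cup\rtopin\diag{\{1\}}$, whose members are exactly the $p$ whose $\diag$-top-row lies in $1(0^{n}1)^{*}$, i.e.\ whose $\diag$-marked columns are the $k$ with $(n+1)\mid(k-1)$, the last of which is $\length p$. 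For $\fin$ I would take $L_\fin:=\rtopin\fin{0^*1}$. Finally, let $L_\twin$ be defined by the first-order sentence stating that at most one top-row position (using $\ftop$ from Example~\ref{exp:ftop}) lies in both $X_\blk$ and $X_\diag$. Each of $\rtopin\blk{0^*1}$, $\rtopin\diag{10^*}$, $\rtopin\diag{\{1\}}$, $L_\fin$ is $\nDelta1$-definable by Corollary~\ref{cor:topin}, and $L_\twin$ is first-order, hence $\nDelta1$-definable; since $\nDelta1\subseteq\cF\subseteq\SOSigma 1{\cF}$ and the $\SOSigma 1{\cF}$-definable picture languages are closed under intersection, union, column concatenation and column closure, all four languages --- and hence $L_0:=L_\blk\cap L_\diag\cap L_\fin\cap L_\twin$ --- are $\SOSigma 1{\cF}$-definable.

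It then remains to check that $L_0$ has the claimed extension. The inclusion ``$\supseteq$'' is immediate: a picture with $\length p=n^2$ satisfying (\ref{eq:mod-blk})--(\ref{eq:fin}) has the three top-row patterns above and exactly one column (the last) carrying both $\blk$ and $\diag$, so it lies in all four intersectands. For ``$\subseteq$'', from $L_\blk$ and $L_\diag$ one gets $\length p=sn=1+t(n+1)$ with $s\geq 1$, $t\geq 0$; reducing modulo $n$ gives $t+1=(q+1)n$ for some $q\geq 0$, and since a column is $\blk$- and $\diag$-marked exactly when its $\diag$-index is divisible by $n$, there are precisely $q+1$ such columns, so $L_\twin$ forces $q=0$, hence $t=n-1$, $s=n$, $\length p=n^2$; the three conditions then follow by reading off the marks. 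I expect this last step to be the crux: neither periodic pattern alone bounds $\length p$ (e.g.\ $\length p=2n^2+n$ is consistent with both), and it is precisely the coprimality of the periods $n$ and $n+1$ --- a $\blk$-mark meeting a $\diag$-mark once per $n$ blocks --- that makes the single-coincidence requirement pin $\length p$ to $n^2$. I would also need to treat the degenerate case $n=1$ (then $t=0$, $\length p=1$) and to note that viewing every gadget over the common alphabet $\schema I$ is harmless by Remark~\ref{rem:restr}.
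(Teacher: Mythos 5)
Your proof is correct and follows essentially the same route as the paper: there, too, $L_0$ is an intersection of $(\fpics f\cap\rtopin\blk{0^*1})^+$, a column-concatenation language putting the $\diag$-marks at the positions $k$ with $(f(\height p)+1)\mid(k-1)$, and $\rtopin\fin{0^*1}$, with the length pinned to $f(\height p)^2$ by the coprimality of $f(\height p)$ and $f(\height p)+1$. The only difference is cosmetic: where you add a first-order sentence bounding the number of $\blk$/$\diag$ coincidences in the top row by one, the paper imposes the letter-level condition that $\fin$ holds exactly where both $\blk$ and $\diag$ do, which together with the single $\fin$-mark at the end plays the same role (and your extra union covering the width-one case $f(\height p)=1$ is a welcome refinement).
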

\begin{proof}
  First, we observe that for every $k\leq \length p$ the
  Equivalence~(\ref{eq:mod-diag-2}) is equivalent to 
  \begin{equation}
    \label{eq:mod-diag}
    p\pos1k(\diag) =1 \Leftrightarrow 
    k \in \{ \code_{f(\height p)}(i,i) \mid i \in \vonbis1{f(\height p)} \}.
  \end{equation}
  This is because  for every $k\leq \length p$, we have the
  equivalence chain:
  $({f(\height p)+1}) \mid ({k-1})$ \tiff{} 
  $\exists a\sgeq 0 : {a({f(\height p)+1})} = {k-1}$ \tiff{}
  $\exists a\sgeq 0 : a f(\height p) + a + 1 = k$ \tiff{}
  $\exists a\sgeq 0 : \code_{f(\height p)}({a+1},{a+1}) = k$ \tiff{}
  $\exists i\sgeq 1 : \code_{f(\height p)}(i,i) = k$.

  Recall the definition of $\rtop^{-1}_{\mu}$ from Corollary~\ref{cor:topin}.
  Let 
  \[
  \begin{array}{rcl}
  M_1 & = &
    \fpics f \cap \rtopin\blk{0^*1},\\
  M_2 & = &
    \fpics {f+1} \cap \rtopin\diag{0^*1},\\
  M_3 & = &
   \rtopin\diag{1} \pcol M_2^+,\\
  M_4 & = & 
    M_1^+  \cap M_3 \cap \rtopin\fin{0^*1}, \\
  X & = & \{ a \in \schema I \mid 
    a(\diag) = 1 \wedge a(\blk) = 1 \leftrightarrow a(\fin) = 1 \},\\
  M_5 & = &
    \rtop^{-1}(X^+),\\
  L_0 & =  & M_4 \cap M_5.
  \end{array}
  \]
  Then $L_0$ is $\SOSigma 1{\cF}$-definable by Corollary~\ref{cor:topin},
  Remarks~\ref{rem:closed-easy}, \ref{rem:f+1}, and Lemma~\ref{lem:closed}.
  
  For the only-if-direction of the lemma, let $p\in L_0$.
  Since $p\in M_1^+$, it may be
  decomposed into subpictures $p = p_1\pcol \ldots \pcol p_n$ such that for
  each $h\leq n$ we have $\length{p_h} = f(\height p)$ and
  $\rtop(\attr{p_h}{\blk}) \in 0^*1$.  This implies (\ref{eq:mod-blk}) for every
  $k\leq \length p$.

  Since $p\in M_3$, it may be decomposed into subpictures $p = q \pcol p'_1
  \pcol \ldots \pcol p'_{n'}$ such that $n'\geq 1$ and $\length q =1$ and
  $\rtop(\attr{q}{\diag}) = 1$ and for every $h\leq n'$ we have 
  $\length{p'_h} = f(\height p) + 1$ and $\rtop(\attr{p_h}{\diag}) \in 0^*1$.
  Thus for every $k\leq \length p$ we have Equivalence~(\ref{eq:mod-diag-2})
  and hence~(\ref{eq:mod-diag}).

  Since $p\in \rtopin\fin{0^+1}$, the position $\pos1{\length p}$ is the only
  top-row position that carries a $1$ for attribute $\fin$, thus (by $p\in
  M_5$), the value $\length p$ is the only value for $k$ that fulfills the
  right sides of both (\ref{eq:mod-blk}) and (\ref{eq:mod-diag-2}), thus
  \[
  \length p = \min\{k > 1 \mid f(\height p)   \,|\, k \wedge 
                                    (f(\height p)+1)\,|\,(k-1)\}.
  \]
  Thus $\length p = f(\height p)^2$ and (\ref{eq:fin}).
  This completes the proof of the only-if-direction.

  For the converse direction, let $p$ be a picture over $\schema I$ such that
  $\length p = f(\height p)^2$ and
  Equivalences~(\ref{eq:mod-blk})-(\ref{eq:fin}) hold for every $k\leq \length
  p$.
  Equivalences~(\ref{eq:mod-blk})-(\ref{eq:fin}) and
  $\length p = f(\height p)^2$ imply $p\in M_5$.
  Equivalence~(\ref{eq:fin}) and $\length p =
  f(\height p)^2$ imply $\rtop(\attr{p}{\fin}) \in 0^*1$.
  From~(\ref{eq:mod-blk})
  we conclude $p\in M_1^+$.  From~(\ref{eq:mod-diag-2}) we conclude $p\in
  M_3$.  This implies $p\in M_1^+ \cap M_3 \cap \rtopin\fin{0^*1} = M_4$.
  This completes the proof of the lemma.
\end{proof}

The intuition that the top-row positions of a picture correspond to square matrix
positions is helpful to understand the next lemma, too. Here, the attributes
intuitively mean:
The attribute $\inp$ marks those picture positions where the Boolean
``input'' matrix carries a one.  The attribute $\piv$ picks one of those as
the ``pivot''.  The attributes $\row$ (and $\col$) mark the position on the
diagonal that shares the row (or column, respectively) with the pivot
position.

\begin{thLem}\label{lem:mediumdef}
  Let $f : \NNone \rightarrow \NNone$ be a function such that its associated
  picture language $\fpics f$ is $\SOSigma 1\cF$-definable.
  There exists a $\SOSigma 1{\cF}$-definable picture language $L_7$ such
  that for all pictures $p$ over alphabet $\schema{\{\inp,\piv,\row,\col,\fin\}}$ we
  have: $p\in L_7$ iff $\length p = f(\height p)^2$ and there are $i,j \leq
  f(\height p)$ such that for all $k \leq \length p$ we have
  \begin{eqnarray}
    \label{eq:medium-fin}
    p\pos1k(\fin) = 1 & \Leftrightarrow & k = f(\height p)^2,\\
    \label{eq:medium-one-inp}
    p\pos1k(\piv) = 1 & \Rightarrow     & p\pos1k(\inp) = 1,\\
    \label{eq:medium-one}
    p\pos1k(\piv) = 1 & \Leftrightarrow & k = \code_{f(\height p)}(i,j), \\
    \label{eq:medium-row}
    p\pos1k(\row) = 1 & \Leftrightarrow & k = \code_{f(\height p)}(i,i) ,\\
    \label{eq:medium-col}
    p\pos1k(\col) = 1 & \Leftrightarrow & k = \code_{f(\height p)}(j,j).
  \end{eqnarray}
\end{thLem}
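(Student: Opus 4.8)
The plan is to build $L_7$ by intersecting the picture language $L_0$ from Lemma~\ref{lem:smalldef} (which already pins down $\length p = f(\height p)^2$ and gives us access, via the $\diag$ and $\fin$ attributes, to the set of diagonal positions $\{\code_{f(\height p)}(i,i)\}$) with further $\SOSigma 1\cF$-definable languages that constrain the new attributes $\inp, \piv, \row, \col$. More precisely, I would first rename: since Lemma~\ref{lem:smalldef} uses attributes $\{\diag,\fin,\blk\}$ and here we want $\{\inp,\piv,\row,\col,\fin\}$, I would invoke $L_0$ over the larger attribute set $\{\diag,\fin,\blk,\inp,\piv,\row,\col\}$ using Remark~\ref{rem:restr} (adding attributes is harmless under inverse projection), so that we may freely speak of the auxiliary markers $\diag$ and $\blk$ inside the construction and project them away at the end via $\Exset{\{\diag,\blk\}}$, which is allowed by Remark~\ref{rem:closed-easy}.

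The heart of the construction is to say, in $\SOSigma 1\cF$, the following four things about a picture whose length is already known to be $f(\height p)^2$: (a) $\piv$ marks exactly one top-row position, and that position also carries $\inp$; (b) the $\piv$-position lies at $\code_n(i,j)$ for some $i,j$, the $\row$-position is the unique diagonal position $\code_n(i,i)$ in the same ``block'' of $n=f(\height p)$ columns as the pivot, and the $\col$-position is the unique diagonal position $\code_n(j,j)$ whose offset-within-its-block equals the offset-within-its-block of the pivot. Statement (a) is a $\nDelta1$ matter: uniqueness of a marked top-row position and the implication $\piv\to\inp$ are first-order over the top row, and by Corollary~\ref{cor:topin}-style reasoning can be packaged as $\rtopin\piv{0^*10^*}$ intersected with a formula for $\piv\to\inp$; these are all $\nDelta1\subseteq\cF$-definable and the relevant intersections are $\SOSigma1\cF$-definable by Remark~\ref{rem:closed-easy}. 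For (b), I would exploit the column-concatenation structure: a picture in $L_0$ decomposes as $q\pcol p'_1\pcol\cdots\pcol p'_{n}$ with $\length q=1$ and each $\length{p'_h}=n+1$ carrying its single $\diag$-mark in the last column, which exactly identifies the diagonal positions, while the $\blk$-marks identify block boundaries; then ``$\piv$ and $\row$ are in the same $\blk$-block'' and ``$\piv$ and $\col$ have the same within-block offset'' become assertions about matching positions relative to these markers, expressible by relativized $\nDelta1$-formulas in the spirit of $\Between$, $\nowrap$, and the relativization machinery of Lemma~\ref{lem:relativizefo} and Lemma~\ref{lem:relativize}. Concretely I would introduce a fresh marker attribute for the block containing the pivot and another for the block containing $\col$, assert via column concatenation $M_?^+ \pcol (\text{pivot block}) \pcol M_?^+$ that exactly one block is distinguished, and chain such constraints; each individual building block is of the form $\rtopin\mu{R}$ for a regular $R$ or a relativized $\nDelta1$-formula, all of which are $\SOSigma1\cF$-definable, and the standing hypothesis on $\cF$ guarantees closure under the unions, intersections, column concatenation and column closure used to glue them.

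The main obstacle I anticipate is the ``same within-block offset'' condition relating the pivot at $\code_n(i,j)$ to the column-diagonal marker at $\code_n(j,j)$: unlike the row-diagonal marker, which lives in the \emph{same} block of $n$ columns as the pivot and is therefore easy to locate by a local relativized formula, the column-diagonal marker lives in block number $j$, a block whose index is determined by the pivot's offset $j$ within its own block — so one must transport a ``count'' of $j$ columns (the offset) to a ``count'' of $j$ blocks. I would handle this the same way the paper handles $\code_n$: use a fresh attribute, definable via column concatenation, that marks in every block the same relative position $j$ (copying the pivot's in-block offset across all blocks using a $\columns$-style closure and $S_2$-propagation that resets at block boundaries marked by $\blk$), and then assert that the block whose boundary-position coincides with the globally-$j$-th block is precisely the block containing the $\col$-diagonal position. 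This ``offset-to-block-index'' transport is the one genuinely fiddly part; everything else is a routine assembly from $L_0$, Corollary~\ref{cor:topin}, the relativization lemmas, and the closure properties of $\SOSigma1\cF$, after which we project away $\diag$, $\blk$, and the temporary marker attributes to obtain $L_7$ over $\schema{\{\inp,\piv,\row,\col,\fin\}}$.
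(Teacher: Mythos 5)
Your overall skeleton matches the paper's: lift $L_0$ to the enlarged attribute set via the inverse projection, pin down $\piv$ (unique, on an $\inp$-position), force $\row$ and $\col$ to be unique diagonal positions, tie $\row$ to the pivot's block, and finally project away $\diag$, $\blk$ and any auxiliaries. Up to that point your building blocks ($\rtopin\mu R$ languages, closure under intersection/union/concatenation, Remark~\ref{rem:restr}) are exactly what the paper uses. The problem is the step you yourself flag as the crux, and your proposed mechanism for it does not work. You want to ``copy the pivot's in-block offset $j$ into every block'' by a $\columns$-style closure with $S_2$-propagation that resets at $\blk$-boundaries. Equality of offsets across adjacent blocks of width $n=f(\height p)$ is a comparison of two unbounded counts; it cannot be enforced by successor propagation together with monadic predicates, nor by any single top-row condition of the form $\rtopin\mu R$ with $R$ regular (for fixed $n$ the property is regular, but $n$ grows with the height, and the $\rtopin\mu R$ machinery only gives you one $R$ uniform in the height). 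Any correct implementation of your ``transport'' has to go through the two-dimensional length-counting supplied by $\fpics f$ itself, i.e.\ through column concatenation with factors of length $f(\height p)$ --- which your sketch gestures at (``definable via column concatenation'') but never actually sets up.

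The paper sidesteps the transport entirely, and this is worth internalizing: it only demands (i) that $\col$ mark exactly one position and that this position be a $\diag$-position, and (ii) that the infix of the picture between the $\piv$-mark and the $\col$-mark (inclusive) lie in $\fpics f^{+}\pcol\Gamma^{+,1}$, i.e.\ that the distance between the two marks be a multiple of $n$ (this is the language $N_6$, padded by $\Gamma^{*,*}$ on both sides). Since $\code_n(i,j)\equiv j \pmod n$ and consecutive diagonal positions $\code_n(j',j')$ differ by $n+1$, the diagonal positions have pairwise distinct residues modulo $n$; hence (i) and (ii) already force the $\col$-mark to sit at $\code_n(j,j)$, with no per-block copying of the offset and no extra marker attributes. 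If you want to salvage your variant, replace the $S_2$-propagation by the same multiple-of-$n$ device (consecutive copies of the marker separated by a factor in $\fpics f$), but then you will find you have reconstructed $N_6$ with extra steps; as written, the proposal has a genuine gap at precisely this point, while the remaining assembly (including the $\row$/same-block condition, which the paper handles with the ``no interior $\blk$-mark'' language $N_5$ rather than relativization, though your relativized $\leq_2$-based version would also do) is sound.
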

\begin{proof}
  Recall the definition of $L_{f+1}$ from Remark~\ref{rem:f+1}.
  Choose $L_0$ over the alphabet $\schema{\{\\diag,\fin,\blk\}}$ according to
  Lemma~\ref{lem:smalldef}.
  We define the following picture languages over alphabet $\Gamma =
  \schema{\{\inp,\piv,\row,\col,\fin,\diag,\blk\}}$.
  \[
  \begin{array}{rcl}
    L_1& = & \exsetinv{\{\inp,\piv,\row,\col\}}{L_0}, \\
    L_2 & = & \rtopin\piv{0^*} \pcol 
               (\rtopin\inp1 \cap \rtopin\piv1) \pcol
              \rtopin\piv{0^*}, \\
    L_3 & = & \rtopin\row{0^*} \pcol 
    (\rtopin\diag1 \cap \rtopin\row1)\pcol
    \rtopin\row{0^*}, \\
    L_4 & = & \rtopin\col{0^*}\pcol
    (\rtopin\diag1 \cap \rtopin\col1)\pcol
    \rtopin\col{0^*}, \\
    N_5 & = & \rtopin\blk{0^*1 \cup 0^*} \cap \big(
    \begin{array}[t]{@{}l}
      (\rtopin\row{10^*} \cap \rtopin\piv{0^*1}) \cup {} \\ 
      (\rtopin\piv{10^*} \cap \rtopin\row{0^*1})\big),
    \end{array}\\ 
    L_5 & = & \Gamma^{*,*} \pcol N_5 \pcol \Gamma^{*,*},\\ 
    N_6 & = & \big(\fpics{f}^+\pcol \Gamma^{+,1}\big) \cap \big(
    \begin{array}[t]{@{}l}
      (\rtopin\col{10^*} \cap \rtopin\piv{0^*1}) \cup {} \\ 
      (\rtopin\piv{10^*} \cap \rtopin\col{0^*1})\big),
    \end{array}\\ 
    L_6 & = & \Gamma^{*,*} \pcol N_6 \pcol \Gamma^{*,*}.\\ 
  \end{array}
  \]
  Finally, we define the following picture language over alphabet
  $\schema{\{\inp,\piv,\row,\col,\fin\}}$:
  \[
    L_7 = \exset{\{\diag,\blk\}}{L_1 \cap \ldots \cap L_6}.
  \]
  By Corollary~\ref{cor:topin}, Remark~\ref{rem:closed-easy},
  and Lemma~\ref{lem:closed}, $L_7$
  is $\SOSigma 1{\cF}$-definable.  


  For the only-if-direction, let $p\in L_7$.  For abbreviation, set $n =
  f(\height p)$.
  Pick picture
  $\hat{p}$ over $\Gamma$ such that $\hat p \in L_1 \cap \ldots \cap L_6$ and
  $ \exset{\{\diag,\blk\}}{\hat{p}} = p$.
  We have to show the claim about the size of $p$ as well as the
  Implications~(\ref{eq:medium-fin})-(\ref{eq:medium-col}) for every $k\leq
  \length p$.
  \smallskip

  The choice of $L_1$ (or rather, of $L_0$) implies
  $\length p = n^2$ as well as (\ref{eq:medium-fin}).

  Implication~(\ref{eq:medium-one-inp}) follows from $\hat p \in L_2$.

  Since $\hat p \in L_2$, there exists $k_\piv$ such that 
  \[
  \forall k : p\pos1k(\piv) = 1 \Leftrightarrow k_\piv = k.
  \]
  Choose $i, j \leq n$ such that
  $k_\piv = \code_{n}(i, j)$.  
  This ensures Equivalence~(\ref{eq:medium-one}).

  Similarly, since $\hat p \in L_3$, there exist $k_\row$ such that
  \[
  \forall k : p\pos1k(\row) = 1 \Leftrightarrow k_\row = k.
  \]
  Choose $i_\row, j_\row \leq n$ such that
  \[
    k_\row  = \code_{n}(i_\row, j_\row).
  \]
  Since $\hat p \in L_5$, the substring of $\rtop(\hat p)$ demarcated by
  $k_\piv$ and $k_\row$ is in the same of its $n$ blocks of length
  $n$, thus we have $i = i_\row$.
  Since $\hat p \in L_3$ and (\ref{eq:mod-diag}), we have $i_\row = j_\row$.
  Thus $k_\row = \code_{n}(i,i)$, which proves
  Equivalence~(\ref{eq:medium-row}).

  Similarly, since $\hat p \in L_4$, there exist $k_\col$ such that
  \[
  \forall k : p\pos1k(\col) = 1 \Leftrightarrow k_\col = k.
  \]
  Choose $i_\col, j_\col \leq n$ such that
  \[
    k_\col  = \code_{n}(i_\col, j_\col).
  \]
  Since $\hat p \in L_6$, the difference $|k_\col - k_\piv|$ is a multiple of $n$, thus
  $j = j_\col$.  Since $\hat p \in L_4$ and (\ref{eq:mod-diag}),
  we have $i_\col = j_\col$.
  Thus $k_\col = \code_{n}(j,j)$, which proves
  Equivalence~(\ref{eq:medium-col}).  This completes the proof of the only-if-direction of
  this lemma.
  \medskip

  For the converse direction, let $p$ be a picture over alphabet
  $\schema{\{\inp,\piv,\row,\col,\fin\}}$ such that
  $\length p = f(\height p)^2$, and
  let $i,j \leq f(\height p)$ such that for all $k \leq \length p$ we
  have Implications~(\ref{eq:medium-fin})-(\ref{eq:medium-col}).

  Again, we set $n=f(\height p)$ for abbreviation.
  Choose picture $\hat p$ over alphabet $\Gamma$ such that
  $\restr{\{\inp,\piv,\row,\col,\fin\}}{\hat p} = p$ and Equivalences~(\ref{eq:mod-blk})
  and~(\ref{eq:mod-diag-2}) from Lemma~\ref{lem:smalldef} hold for
  $\restr{\{\diag,\fin,\blk\}}{\hat p}$ instead of $p$.

  We have to verify that $\hat p \in L_1 \cap \ldots \cap L_6$.

  \Ad{$L_1$} The picture $\restr{\{\diag,\fin,\blk\}}{\hat p}$ fulfills
  Lemma~\ref{lem:smalldef}, (\ref{eq:mod-blk})-(\ref{eq:fin}), so
  $\restr{\{\diag,\fin,\blk\}}{\hat p}\in L_0$, thus $\hat p \in
  \restrinv{\{\diag, \fin,\blk\}}{L_0} = \exsetinv{\{\inp,\piv,\row,\col\}}{L_0} =
  L_1$.

  \Ad{$L_2$} Choose a decomposition $p = q \pcol r \pcol s$ with $\length q =
  \code_{n}(i,j) -1$ and $\length r=1$.
  By (\ref{eq:medium-one}) we have
  $1 = p\pos1{\code_{n}(i,j)}(\piv) = r\pos11(\piv)$.
  From (\ref{eq:medium-one-inp}) it follows $1 = r\pos11(\inp)$. Thus 
  $r\in\rtopin\piv{1} \cap \rtopin\inp{1}$.
  Again by (\ref{eq:medium-one}), for all $k<\code_{n}(i,j)$ we have
  $0 = p\pos1k(\piv) = q\pos1k(\piv)$, thus $q\in \rtopin\piv{0^*}$.
  Similarly $s\in \rtopin\piv{0^*}$.

  We have shown $p = q \pcol r \pcol s \in \rtopin\piv{0^*} \pcol
  (\rtopin\piv{1} \cap \rtopin\inp{1}) \pcol \rtopin\piv{0^*} = L_2$.

  \Ad{$L_3$} Similar to $L_2$, but apply (\ref{eq:mod-diag}) instead of
  (\ref{eq:medium-one-inp}), and (\ref{eq:medium-row}) instead of
  (\ref{eq:medium-one}).

  \Ad{$L_4$} Similar to $L_2$, but apply (\ref{eq:mod-diag}) instead of
  (\ref{eq:medium-one-inp}), and (\ref{eq:medium-col}) instead of
  (\ref{eq:medium-one}).

  \Ad{$L_5$} Choose a decomposition $p = q \pcol r \pcol s$ with 
  $\length q = \min\{\code_{n}(i,i), \code_{n}(i,j)\}-1$
  and $\length r =  |\code_{n}(i,i)- \code_{n}(i,j)| + 1$.
  To complete the proof that $p\in L_5$, it suffices to show that $r\in N_5$.

  In case $i\leq j$, the picture $r$ is the infix of $p$ at the columns
  $\code_{n}(i,i),\ldots, \code_{n}(i,j)$ (inclusively).
  None of these numbers (except for maybe the last one) is a multiple of
  $n$, thus (by (\ref{eq:mod-blk})) none of the corresponding
  top-row positions carries a $1$ for attribute $\blk$, thus
  $r\in\rtopin\blk{0^*1 \cup 0^*}$.
  Furthermore,  $r\pos1{\code_{n}(i,i)}(\row) = 1$ by (\ref{eq:medium-row}) and
  $r\pos1{\code_{n}(i,j)}(\piv) = 1$ by (\ref{eq:medium-one}), which implies
  $r\in\rtopin\row{10^*} \cap \rtopin\piv{0^*1}$.
  We have shown $r\in N_5$.
  \newline
  In case $i > j$, one similarly shows $r\in\rtopin\blk{0^*1 \cup 0^*}$
  and $r\in\rtopin\piv{10^*} \cap \rtopin\row{0^*1}$,
  which also implies $r\in N_5$.

  \Ad{$L_6$}  Choose a decomposition $p = q \pcol r \pcol s$ with 
  $\length q = \min\{\code_{n}(j,j), \code_{n}(i,j)\}-1$
  and $\length r =  |\code_{n}(j,j)- \code_{n}(i,j)| + 1$.
  To complete the proof that $p\in L_6$, it suffices to show that $r\in N_6$.
  We have that $|\code_{n}(j,j)- \code_{n}(i,j)|$ is a
  multiple of $n$, thus $\length r-1$ is a multiple of $n = f(\height p)$, hence $r\in
  L_{f}^+\pcol \Gamma^{+,1}$.
  \newline
  In case $i\leq j$, one shows $r\in\rtopin\piv{10^*} \cap
  \rtopin\col{0^*1}$, which implies $r\in N_6$.
  \newline
  In case $i > j$, one shows $r\in\rtopin\col{10^*} \cap
  \rtopin\piv{0^*1}$, which also implies $r\in N_6$.
  
  \medskip We have shown $p\in L_1\cap \ldots\cap L_6$, which completes the proof
  of the lemma.
\end{proof}


In the next lemma, we consider a concatenations of a sequence of pictures,
each of which encodes a Boolean ``input'' square matrix in its top row.  The
marker attributes introduced in the preceding lemmas help to calculate one
position of their matrix multiplication of these input matrices.

\begin{thLem}\label{lem:finaldef}
  Consider the alphabet $\Gamma = \schema{\{\inp,\fin\}}$.
  Let $f : \NNone \rightarrow \NNone$ be a function such that its associated
  picture language $\fpics f$ is $\SOSigma 1\cF$-definable. 
  Let $L$ be the picture language
  over $\Gamma$ containing all pictures of the form 
  $p_1 \pcol \ldots \pcol p_n$
  where $n \geq 1$ and 
  $p_1,\ldots, p_n$ have length $f(\height p)^2$ and 
  $\rtop(\attr{p_h}\fin) \in 0^*1$ (for all $h\in\vonbis1n$) and
  $\prod_{h=1}^n \fold_{f(\height p)}(\attr{p_h}\inp) \neq 0$, where the
  product refers to standard matrix multiplication over the Boolean semiring.

  Then 
  $L$ is $\SOSigma 1{\cF}$-definable.
\end{thLem}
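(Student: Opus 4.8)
The plan is to build $L$ from the $\SOSigma1{\cF}$-definable building blocks already available, using the combinatorial description of a non-zero Boolean matrix product supplied by Lemma~\ref{lem:matmult}. Fix a picture $p$ and abbreviate $n:=f(\height p)$, so each factor $p_h$ has length $n^2$ and encodes $A_h:=\fold_n(\attr{p_h}\inp)$. By Lemma~\ref{lem:matmult}, $\prod_{h=1}^N A_h\neq 0$ holds iff there are $i_1,\dots,i_{N+1}\le n$ with $A_h\pos{i_h}{i_{h+1}}=1$, i.e.\ with $\attr{p_h}\inp\pos1{\code_n(i_h,i_{h+1})}=1$, for every $h$; no constraint is imposed on $i_1$ or $i_{N+1}$. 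So it is enough to place in each factor a ``pivot'' at column $\code_n(i_h,i_{h+1})$ carrying a $1$ for $\inp$, and to enforce that the column index used by the pivot of block $h$ equals the row index used by the pivot of block $h+1$.

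I would obtain the per-block pivot machinery from Lemma~\ref{lem:mediumdef}: let $L_7$ be the $\SOSigma1{\cF}$-definable language over $\schema{\{\inp,\piv,\row,\col,\fin\}}$ it provides, and form $L_7^+$, which is $\SOSigma1{\cF}$-definable by the standing hypothesis on $\cF$. Since every member of $L_7$ has length exactly $n^2$ for a picture of height $\height p$, a picture in $L_7^+$ decomposes uniquely along its $\fin$-marked columns as $q_1\pcol\dots\pcol q_N$ with each $q_h\in L_7$, and by Lemma~\ref{lem:mediumdef} block $q_h$ carries its single $\piv$-marker at some $\code_n(a_h,b_h)$ with a $1$ for $\inp$ there (Equivalences~(\ref{eq:medium-one-inp}),~(\ref{eq:medium-one})), its single $\row$-marker at the diagonal position $\code_n(a_h,a_h)$ (Equivalence~(\ref{eq:medium-row})), and its single $\col$-marker at $\code_n(b_h,b_h)$ (Equivalence~(\ref{eq:medium-col})). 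Because all blocks have the same length $n^2$, the compatibility $b_h=a_{h+1}$ amounts to saying that the $\col$-marked column of $q_h$ and the $\row$-marked column of $q_{h+1}$ lie exactly $n^2=f(\height p)^2$ columns apart. Writing $\Phi$ for the set of pictures in $L_7^+$ for which this holds for every $h$ with $1\le h<N$, Lemma~\ref{lem:matmult} gives
\[
L \;=\; \exset{\{\piv,\row,\col\}}{\,\Phi\,},
\]
and the $\fin$-pattern inherited from $L_7^+$ already yields $\rtop(\attr{p_h}\fin)\in 0^{*}1$ for every factor $p_h$ of the result; by Remark~\ref{rem:closed-easy} it therefore remains to show that $\Phi$ is $\SOSigma1{\cF}$-definable.

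For this, first note that ``a column $x$ and a column $y$ with $x$ to the left of $y$ lie $n^2$ columns apart'' can be tested, as in Lemma~\ref{lem:closed}, by column concatenation. Let $C$ be the language of pictures of length $f(\height)^2+1$ whose first column is $\col$-marked and whose last column is $\row$-marked; it is $\SOSigma1{\cF}$-definable, being the column concatenation of the length-$f(\height)^2$ language (obtained from $L_0$ of Lemma~\ref{lem:smalldef} by alphabet projection), intersected with the $\nDelta1$-condition ``first column $\col$-marked'' from Corollary~\ref{cor:topin}, and a one-column language intersected with the $\nDelta1$-condition ``$\row$-marked''. Then the two-block language $M:=(L_7\pcol L_7)\cap(\Gamma^{*,*}\pcol C\pcol\Gamma^{*,*})$ is $\SOSigma1{\cF}$-definable, and consists exactly of the pairs $q\pcol q'$ of $L_7$-blocks with $b_q=a_{q'}$ — the only infix of length $n^2+1$ with a $\col$-marked start and $\row$-marked end is the one running from the $\col$-marker of $q$ to the $\row$-marker of $q'$, the other candidate running off the right edge of $q\pcol q'$. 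Finally I would pin down $\Phi$ by checking adjacency of all block pairs through two interleaved pairings: with $\mathcal{E}:=M^+\cup(M^{*}\pcol L_7)$, the pictures in $\mathcal{E}$ are exactly the $q_1\pcol\dots\pcol q_N$ in $L_7^+$ for which the pairs $(q_1,q_2),(q_3,q_4),\dots$ chain, and with $\mathcal{O}:=(L_7\pcol\mathcal{E})\cup L_7$, the pictures in $\mathcal{O}$ are those for which $(q_2,q_3),(q_4,q_5),\dots$ chain; hence $\Phi=\mathcal{E}\cap\mathcal{O}$, which is $\SOSigma1{\cF}$-definable. This works uniformly in $\cF$, in particular for $\cF=\FOPi 0{\nDelta1}$, i.e.\ for the recognisable picture languages.

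The routine parts are the verifications of the semantics of $C$, $M$, $\mathcal{E}$ and $\mathcal{O}$, and the check that projecting $\Phi$ reproduces exactly the picture language $L$. The one genuinely load-bearing idea is the reduction of ``the pivots chain across all adjacent blocks'' — an a priori global, ``for all consecutive pairs'' statement — first to a single distance-$f(\height p)^2$ test between two marked columns, which is made possible precisely by the diagonal placement of the $\row$- and $\col$-markers supplied by Lemma~\ref{lem:mediumdef}, and then to a conjunction of two column-closure conditions via the even/odd pairing of the blocks; everything else is assembled from Lemmas~\ref{lem:matmult},~\ref{lem:mediumdef},~\ref{lem:smalldef}, Corollary~\ref{cor:topin}, and the closure properties of $\SOSigma1{\cF}$-definable picture languages.
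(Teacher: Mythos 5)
Your proposal is correct and takes essentially the same route as the paper's own proof: mark each block via Lemma~\ref{lem:mediumdef}, test chaining of adjacent blocks by requiring an infix running from the $\col$-marker of one block to the $\row$-marker of the next, capture ``all adjacent pairs chain'' by intersecting the odd- and even-pairing languages (your $\mathcal{E}\cap\mathcal{O}$ is the paper's $L_9\cap L_9'$), and finally project away $\piv,\row,\col$, with Lemma~\ref{lem:matmult} supplying the equivalence to a non-zero product. The only divergence is an implementation detail of the infix test: you fix its length to exactly $f(\height p)^2+1$ (built from $L_0$ of Lemma~\ref{lem:smalldef}), while the paper only requires length $\equiv 1 \pmod{f(\height p)}$ together with the $\fin$-pattern $0^*10^+(1\cup\epsilon)$ forcing the infix to straddle the block boundary --- both are sound.
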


\begin{proof}
  Set $\Sigma = \schema{\{\inp,\piv,\row,\col,\fin\}}$.
  Let $L_7$ be the picture language over $\Sigma$ defined in Lemma~\ref{lem:mediumdef}.
  Let $\Sigma^{+,1}$ denote the set of pictures of length $1$ over $\Sigma$.
  Consider the following picture languages over $\Sigma$:
  \[
  \begin{array}{rcl}
    Q & = &\rtopin\fin{0^*1},\\
    M & = & \fpics{f}^* \pcol \Sigma^{+,1} =
    \{ p \in \Sigmapp \mid (\length p-1) \mbox{ is a multiple of } f(\height p)\},\\
    N & = & \rtopin\col{1\{0,1\}^*} \cap 
            \rtopin\row{\{0,1\}^*1} \cap 
            \rtopin\fin{0^* 1 0^+(1 \cup \epsilon)} \cap 
            M, \\
    L_8 & = & (\Sigma^{*,*} \pcol N \pcol  \Sigma^{*,*}) \cap (Q\pcol Q), \\
    L_9 & = & (L_8^{\,+} \pcol Q) \cup L_8^{\,+}, \\
    L_9' & = & (Q \pcol L_9) \cup Q,\\
    L_{10} & = & (L_7^+ \cap L_9 \cap L_9') \cup L_7.
  \end{array}
  \]
  Let $L_{11} = \exset{\{\piv,\row,\col\}}{L_{10}}$.  Then $L_{11}$ is $\SOSigma
  1{\cF}$-definable by Corollary~\ref{cor:topin},
  Remark~\ref{rem:closed-easy}, Lemma~\ref{lem:closed}, and Lemma~\ref{lem:mediumdef}.
  We claim that $L_{11} = L$.

  First, let $p\in L_{11}$.  We will
  show that $p\in L$.
  There exists $\hat p\in L_{10}$ such that $p =
  \exset{\{\piv,\row,\col\}}{\hat p}$.
  Since $\hat{p}\in L_7^+$, there exist
  $n\geq 1$ and $\hat{p}_1,\ldots, \hat{p}_n \in L_7$ such that $\hat{p} = \hat{p}_1 \pcol
  \ldots \pcol \hat{p}_n$.
  Choose pictures $p_1,\ldots, p_n$ over $\Gamma$ such that for every $h\leq n$ we have $p_h =
  \exset{\{\piv,\row,\col\}}{\hat p_h}$.
  Then $p = p_1 \pcol \ldots \pcol p_n$.
  Since (by (\ref{eq:medium-fin})) we have $L_7 \subseteq Q$,
  we obtain $ \rtop(\attr{\hat{p}_h}\fin) \in 0^*1$ and thus
  $\rtop(\attr{p_h}\fin) \in 0^*1$, as required.

  As a shorthand, we set $m = \height p$.
  By Lemma~\ref{lem:mediumdef}, we have $\length {\hat{p}_h} = f(m)^2$ for
  every $h\leq n$.
  Choose $i_1,\ldots, i_n,j_1,\ldots,j_n \leq f(m)^2$ according to
  Lemma~\ref{lem:mediumdef}, i.e., such that for every $h\leq n$ and every
  $k\leq f(\height p)^2$, we have
  \begin{eqnarray}
    \label{eq:one-inp}
    \hat{p}_h\pos1k(\piv) = 1 & \Rightarrow     & \hat{p}\pos1k(\inp) = 1,\\
    \label{eq:one-single}
    \hat{p}_h\pos1k(\piv) = 1 & \Leftrightarrow & k = \code_{f(m)}(i_h,j_h), \\
    \label{eq:row-single}
    \hat{p}_h\pos1k(\row) = 1 & \Leftrightarrow & k = \code_{f(m)}(i_h,i_h), \\
    \label{eq:col-single}
    \hat{p}_h\pos1k(\col) = 1 & \Leftrightarrow & k = \code_{f(m)}(j_h,j_h).
  \end{eqnarray}
  Furthermore, choose $i_{n+1} = j_n$.
  We claim that
  \begin{equation}
    \label{eq:i=j}
    i_{h+1} = j_h \mbox{ for every } h \leq n.
  \end{equation}
  To see this, let $h \leq n$.  If $h=n$, we have $i_{h+1} = j_h$ by choice of
  $i_{n+1}$.  So we may assume $h < n$ and $n\geq 1$.
  We note that $\hat{p}_h \pcol \hat{p}_{h+1} \in \Sigma^{*,*}
  \pcol N \pcol \Sigma^{*,*}$.  (If $h$ is odd, this follows from $\hat{p}\in
  L_9$; if $h$ is even, this follows from $\hat{p}\in L_9'$.)  This means that
  $\hat{p}_h \pcol \hat{p}_{h+1}$ has an infix $r\in N$.  Let $l,l'\leq 2 f(m)^2$ be the
  start and end positions of that infix $r$ in $\hat{p}_h \pcol \hat{p}_{h+1}$.
  Since $\hat{p}_h, \hat{p}_{h+1} \in L_7$, we have $\length{\hat{p}_h} = \length{\hat{p}_{h+1}} = f(m)^2$.
  Since $\hat{p}_h,\hat{p}_{h+1}\in Q$, the only two top row positions of $\hat{p}_h \pcol
  \hat{p}_{h+1}$ carrying a $1$ for attribute $\fin$ are the rightmost positions of
  $\hat{p}_h$ and of $\hat{p}_{h+1}$.  Since $r\in\rtopin\fin{0^*10^+(1 \cup \epsilon)}$, the
  infix $r$ overlaps the center of $\hat{p}_h\pcol \hat{p}_{h+1}$, i.e., $l\leq f(m)^2$ and $l'>
  f(m)^2$.
  Since $r\in N$, we have 
  \begin{eqnarray}
    \label{eq:col1}
    1 & = & r\pos11(\col) =  (\hat{p}_h \pcol \hat{p}_{h+1})\pos1l(\col) = \hat{p}_h\pos1l(\col)\\
    \label{eq:row1}
    1 & = & r\pos1{\length r}(\row) = (\hat{p}_h \pcol \hat{p}_{h+1})\pos1{l'}(\row) = 
    \hat{p}_{h+1}\pos1{l'-f(m)^2}(\row)
  \end{eqnarray}
  By (\ref{eq:col-single}) and (\ref{eq:col1}), we have
  \begin{equation}
    \label{eq:l}
    l = \code_{f(m)}(j_h,j_h) = (j_h-1)f(m)^2 + j_h.
  \end{equation}
  By (\ref{eq:row-single}) and (\ref{eq:row1}), we have
  \begin{equation}
    \label{eq:l'}
    l'-f(m)^2 = \code_{f(m)}(i_{h+1},i_{h+1}) = (i_{h+1}-1)f(m)^2 + i_{h+1}.
  \end{equation}
  Since $r\in M$ we have 
  \begin{equation}
    \label{eq:distance}
    f(m) \mid l'-l = i_{h+1}f(m)^2 + i_{h+1} - (j_h-1)f(m)^2 - j_h.
  \end{equation}
  We may conclude from (\ref{eq:distance})
  that $f(m) \mid i_{h+1} - j_h$ and hence $i_{h+1} = j_h$.
  This completes the proof of (\ref{eq:i=j}).

  For every $h\in\vonbis1n$, choose $A_h = \fold_{m}(\attr{\hat{p}_h}{\inp})$.
  
  Let $h \leq n$.  By (\ref{eq:one-single}), we have 
  $\hat{p}_h\pos1{\code_{f(m)}(i_h, j_{h})}(\piv) = 1$.
  Thus 
  $A_h\pos{i_h}{i_{h+1}} = 
  A_h\pos{i_h}{j_{h}} = \hat{p}_h\pos1{\code_{f(m)}(i_h,j_{h})}(\inp) = 1$.
  (For the last equality, see (\ref{eq:one-inp}).)
  By Lem\-ma~\ref{lem:matmult}, this implies $\prod_{h=1}^n A_h \neq 0$ and
  hence $p = \exset{\{\piv,\row,\col\}}{\hat{p}} \in L$.
  This completes the proof of $L_{11} \subseteq L$.

  For the converse direction, let $p\in L$.  Again, set $m = \height p$.
  Choose $n\geq 1$ and pictures $p_1,\ldots,p_n$
  of length $f(m)^2$ over $\Gamma$ such that
  $p = p_1 \pcol \ldots \pcol p_n$ and
  $\rtop(\attr{p_h}\fin) \in 0^*1$ (for all $h\in\vonbis1n$) and
  $\prod_{h=1}^n \fold_{f(m)}(\attr{p_h}\inp) \neq 0$.

  We claim that there exists a picture $\hat p\in\Sigmapp$
  with $\exset{\{\piv,\row,\col\}}{\hat p} = p$ and
  $\hat p\in L_{10}$. 
  By Lemma~\ref{lem:matmult}, there exist $i_1,\ldots,i_n \leq f(m)$ such that
  $\fold_{m}(\attr{p_h}{\inp})\pos{i_h}{i_{h+1}} = 1$ for every $h<n$.
  For every $h\leq n$, choose $\hat{p}_h$ as follows:
  \[
  \begin{array}{rcl}
    \restr{\{\inp,\fin\}}{\hat p_h}       & = & p_h, \\
    \hat p_h\pos1k(\piv) = 1  &\Leftrightarrow & k = \code_{f(m)}(i_h,i_{h+1}),\\
    \hat p_h\pos1k(\row) = 1  &\Leftrightarrow & k = \code_{f(m)}(i_h,i_{h}),\\
    \hat p_h\pos1k(\col) = 1  &\Leftrightarrow & k = \code_{f(m)}(i_{h+1},i_{h+1}).
  \end{array}
  \]
  (The letters of $\hat p_h$ at positions $\pos ik$ with $i \neq 1$ are irrelevant
  and may be set arbitrarily.)
  It is straightforward to check that $\hat p_h \in L_7$
  for every $h\leq n$.  We set $\hat p = \hat p_1 \pcol \ldots \pcol \hat
  p_n$.  We have $\hat p \in  L_7^n$.
  
  If $n=1$, we have $\hat p\in L_7 \subseteq L_{10}$.  Thus for completing the
  proof that $\hat p\in L_{10}$, we may assume $n \geq 2$ and show that $\hat
  p\in L_9 \cap L_9'$.

  By definition of $Q$ and $L_9$, for showing $\hat p \in L_9$ it
  suffices to show
  \begin{equation}
    \label{eq:13}
    \hat p_{h} \pcol \hat p_{h+1} \in \Sigma^{*,*} \pcol N \pcol  \Sigma^{*,*}
  \end{equation}
  for every odd $h < n$, whereas for $\hat p \in L_9'$ it suffices to show
  (\ref{eq:13}) for every even $h < n$.  So for completing the proof that
  $\hat p \in L_{10}$ it suffices to show (\ref{eq:13}) for every $h < n$.

  Let $h < n$.  Let $l = \code_{f(m)}(i_{h+1},i_{h+1})$ and $l' = f(m)^2 + l$.
  Then the infix of $\hat{p}_{h} \pcol \hat{p}_{h+1}$ from column $l$ to $l'$ (inclusively)
  is indeed in $N$.
  This completes the proof that $\hat p\in L_{10}$ and thus the proof of the
  lemma.
\end{proof}

\begin{thDef}[Monoid of binary relations]\label{def:Bn}
  Let $n\geq 1$. The monoid $B_n$ is the set of binary relations over
  $\vonbis1n$ together with the usual relation product.
\end{thDef}
This monoid of binary relations has been studied extensively.  We will use it
in the lower bound proof (see Section~\ref{sec:lower-bound}).
Besides, we use it to state the next lemma a little more general than needed.
\begin{thLem}\label{lem:upperbound0}
  Consider the alphabet $\Gamma = \schema{\{\inp,\fin\}}$.
  Let $f : \NNone \rightarrow \NNone$ be a function such that its associated
  picture language $\fpics f$ is $\SOSigma 1\cF$-definable.
  There is a $\SOSigma 1\cF$-definable picture language $L$ over $\Gamma$ such
  that for every $m\geq 1$ the syntactic monoid of $\fix Lm$ contains a
  submonoid isomorphic to $B_{f(m)}$.
\end{thLem}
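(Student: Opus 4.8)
The plan is to take $L$ to be exactly the language constructed in Lemma~\ref{lem:finaldef}, i.e., the $\SOSigma 1\cF$-definable picture language over $\Gamma = \schema{\{\inp,\fin\}}$ consisting of all column concatenations $p_1 \pcol \ldots \pcol p_n$ where each $p_h$ has length $f(\height p)^2$, satisfies $\rtop(\attr{p_h}\fin) \in 0^*1$, and the Boolean matrix product $\prod_{h=1}^n \fold_{f(\height p)}(\attr{p_h}\inp)$ is nonzero. The definability is already given, so the work is entirely in analyzing the height-$m$ fragment $\fix Lm$ and its syntactic monoid.

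First I would fix $m\geq 1$ and set $n = f(m)$. The key observation is that for each Boolean matrix $A \in \{0,1\}^{n \times n}$ there is a canonical ``block'' picture $\pi_A$ of height $m$ and length $n^2$ with $\rtop(\attr{\pi_A}\fin) \in 0^*1$ (the $\fin$-attribute carries $1$ exactly in the last column) and $\fold_n(\attr{\pi_A}\inp) = A$; the letters at rows other than the top may be chosen arbitrarily, say all zero. Viewing each such picture as a single word over the column alphabet, the map $A \mapsto [\pi_A]_{\equiv_{\fix Lm}}$ — the syntactic congruence class of the word $\pi_A$ — is the candidate embedding of $B_n$ into $\Mon{\fix Lm}$. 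The two things to check are that this map is a monoid homomorphism with respect to relation product, and that it is injective.

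For the homomorphism property I would argue as follows: concatenating $\pi_A$ and $\pi_B$ gives a picture whose $\inp$-folds are $A$ then $B$; by Lemma~\ref{lem:finaldef} (and the structure of $L$), for any context $u \pcol (\cdot) \pcol v$ of pictures of height $m$, membership of $u \pcol \pi_A \pcol \pi_B \pcol v$ in $L$ depends only on the product of all the $\fold_n$-matrices of the blocks, and the middle two contribute exactly $A\cdot B$ (matrix product over the Boolean semiring, which is the relation product). Hence $\pi_A \pcol \pi_B \equiv_{\fix Lm} \pi_{AB}$, so the map respects products; it sends the identity relation to the class of $\pi_{\mathrm{Id}}$, which acts as a neutral element on the relevant classes. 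For injectivity, suppose $A \neq B$; then there exist $i,j$ with, say, $A\pos ij = 1$ and $B\pos ij = 0$. I would exhibit a separating context: choose pictures $E_i$ and $E_j'$ encoding the matrices $e_i$ (having a single $1$ in row $i$, some column, arranged so the product reads off row $i$) and $e_j$ respectively, so that $E_i \pcol \pi_A \pcol E_j' \in L$ (the product matrix is nonzero precisely because $A\pos ij = 1$) while $E_i \pcol \pi_B \pcol E_j' \notin L$. Concretely one uses matrices whose product with $A$ isolates the $(i,j)$ entry — e.g. the matrix that is all ones in row $i$ and zero elsewhere on the left, and all ones in column $j$ on the right — together with Lemma~\ref{lem:matmult} to see that the overall product is nonzero iff $A\pos ij = 1$. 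This shows $\pi_A \not\equiv_{\fix Lm} \pi_B$, so the map is injective.

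Putting these together, the image of $A \mapsto [\pi_A]$ is a submonoid of $\Mon{\fix Lm}$ isomorphic to $B_n = B_{f(m)}$, which is the claim. The main obstacle I anticipate is the bookkeeping in the injectivity step: one must verify that the separating ``selector'' matrices can actually be realized by legitimate blocks with the correct $\fin$-marking and length $n^2$, and that the resulting concatenations are genuine members (or non-members) of $L$ per the exact definition in Lemma~\ref{lem:finaldef} — in particular that the number of blocks being at least $1$ and all length/marking side-conditions are met. This is routine given Lemmas~\ref{lem:matmult} and~\ref{lem:finaldef}, but it is where the care is needed; the homomorphism and well-definedness parts are comparatively immediate once one notes that $L$-membership of a concatenation of blocks is a function of the product of their $\inp$-fold matrices alone.
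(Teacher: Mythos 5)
Your overall route is the paper's own: you take the language $L$ of Lemma~\ref{lem:finaldef}, send each Boolean matrix (equivalently, each relation in $B_{f(m)}$) to the syntactic class of a canonical block picture of size $(m,f(m)^2)$ with the $\fin$-mark in its last column, and argue that this map is an injective homomorphism into $\Mon{\fix Lm}$. The homomorphism step, including the bookkeeping you defer, is exactly the paper's: for arbitrary contexts $q,r$, membership of $q\pcol p_A\pcol p_B\pcol r$ in $\fix Lm$ forces $q$ and $r$ themselves to split into well-formed blocks (the inserted pictures have length $n^2$ and carry $\fin$ exactly in their last top-row position, so the block boundaries of any $L$-decomposition must align), after which only the product of the $\fold_n$-matrices matters.

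The one step that fails as written is your ``concrete'' separating context for injectivity. If $E$ is all ones in row $i$ (zero elsewhere) and $F$ is all ones in column $j$, then over the Boolean semiring $(EAF)\pos ij = \bigvee_{s,t} A\pos st$, so $E\cdot A\cdot F \neq 0$ iff $A \neq 0$, independently of $(i,j)$; this context cannot distinguish two distinct \emph{nonzero} matrices, which is precisely the case injectivity requires. The repair is what your parenthetical about ``a single $1$ \ldots arranged so the product reads off row $i$'' already hints at, and what the paper does with its matrices $C_{1,b}$ and $C_{a,2}$: use single-entry selectors, e.g.\ $E$ with its unique $1$ in column $i$ (say at position $\pos1i$) and $F$ with its unique $1$ in row $j$ (say at $\pos jj$); then $E\cdot A\cdot F \neq 0$ iff $A\pos ij = 1$, and realizing $E,F$ as legitimate blocks of length $n^2$ with $\fin$-marking in $0^*1$ is immediate. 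With that correction your proposal coincides with the paper's proof.
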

\begin{proof}
  Let $L$ be defined according to Lemma~\ref{lem:finaldef}.  Then $L$ is
  indeed $\SOSigma 1 {\cF}$-definable.  Let $m\geq 1$.  Set
  $n = f(m)$.
  We claim that the syntactic monoid $\Mon{\fix Lm}$ of the height-$m$
  fragment of
  $L$ contains a submonoid isomorphic to $B_n$.
  If $n=1$, this is trivial, so we may assume $n\geq 2$.
    
  For a relation $\pi \in B_n$, its \emph{characteristic matrix} is the
  $n{\times}n$-matrix $A_\pi$ defined by 
  \[
  A_\pi\pos ij = \left\{
    \begin{array}{cl}
      1 & \mbox{if } (i,j) \in \pi,\\
      0 & \mbox{else.}
    \end{array}
    \right.
  \]
  For a relation $\pi \in B_n$, we define $p_\pi$ as the picture of size $(m,
  n^2)$ over $\Gamma$ such that the top row of $p_\pi$ is
  defined as follows: $\rtop(\attr{p_\pi}\fin) \in 0^*1$ and
  $\fold_{n}(\attr{p_\pi}\inp) = A_\pi$.
  The letters of $p_\pi$ at positions $\pos ij$ with $i\neq 1$ are irrelevant
  and may be chosen arbitrarily.

  For a non-empty picture $p$ of height $m$, let
  $[p]$ denote its congruence class wrt.\ the syntactic congruence of $\fix Lm$.
  We claim that the mapping $\phi: \pi \mapsto [p_\pi]$ is an injective
  homomorphism from $B_n$ into the syntactic semigroup of $\fix Lm$.

  First we show that $\phi$ is a homomorphism.  For this aim, we have to show that
  for every $\pi, \tau \in B_n$, the pictures $p_{\pi\tau}$ and $p_\pi \pcol
  p_\tau$ are syntactically congruent wrt.\ $\fix Lm$.  
  So let $\pi,\tau \in B_n$.  Let $q,r\in\Gamma^{m,*}$.
  We will show that
  \begin{equation}
    \label{eq:congruent}
    q\pcol p_{\pi\tau} \pcol r \in L \Leftrightarrow 
    q\pcol p_{\pi} \pcol p_\tau \pcol r \in L
  \end{equation}
  
  Assume $q\pcol p_{\pi\tau} \pcol r \in L$.
  Each picture of $L$ is a column concatenation of one or more blocks of size
  $(m, n^2)$, each of which is in $\rtopin\fin{0^*1}$, i.e., demarcated by a $1$
  for attribute $\fin$ in its upper right corner.  Since $q\pcol p_{\pi\tau} \pcol
  r \in L$ and $p_{\pi\tau}$ is one of these blocks, the pictures $q$ and $r$
  are assembled by zero or more of these blocks.  In other words,
  there are $s, s'\geq 0$ and pictures $q_1,\ldots,q_s,r_1,\ldots, r_{s'}$ of
  size $(m, n^2)$ such that 
  \[
  q = q_1 \pcol \ldots \pcol q_s \mbox{\quad and\quad}  r = r_1 \pcol \ldots \pcol r_{s'}.
  \]
  Let $\cdot$ denote
  standard matrix multiplication over the Boolean semiring.
  Choose 
  \[
  \begin{array}{rcl}
    B & = & \fold_{n}(\attr{q_1}\inp) \cdots \fold_{n}(\attr{q_s}\inp),\\
    C & = &\fold_{n}(\attr{r_1}\inp) \cdots \fold_{n}(\attr{r_{s'}}\inp).
  \end{array}
  \]
  %
  Since $q\pcol p_{\pi\tau} \pcol r \in L$ and by
  \[
    \label{eq:mat-func}
    \fold_{n}(\attr{p_{\pi\tau}}\inp) = A_{\pi\tau} = A_\pi \cdot A_\tau =
    \fold_{n}(\attr{p_\pi}\inp) \cdot \fold_{n}(\attr{p_\tau}\inp),
  \]
  we have
  \[
  0 \neq B \cdot \fold_{n}(\attr{p_{\pi\tau}}\inp) \cdot C = 
  B \cdot \fold_{n}(\attr{p_\pi}\inp) \cdot \fold_{n}(\attr{p_\tau}\inp) \cdot C,
  \]
  which implies $q\pcol p_{\pi} \pcol p_\tau \pcol r \in L$.
  This completes the proof of the direction ``$\Rightarrow$'' 
  of (\ref{eq:congruent}).  The
  other direction is similar.
  This completes the proof that $\phi$ is a homomorphism, 
  so it remains to show that $\phi$ is injective.

  For every $a,b \in \setbis n$ we define the square matrix $C_{a,b}\in
  \{0,1\}^{n\times n}$ by
  \[
  C_{a,b}\pos ij = 1 \Leftrightarrow (i,j) = (a,b).
  \]
  Recall that $n\geq 2$. Then for every $\pi \in B_n$ we have
  \[
  \{(1,b)\} \pi \{(a,2)\} = 
  \left\{
    \begin{array}{cl}
      \{(1,2) \} & \mbox{ if } (b,a)\in\pi\\
      \emptyset & \mbox{ otherwise,}
    \end{array}
  \right.
  \]
  hence for the characteristic matrix $A_\pi$ we have
  \begin{equation}
    \label{eq:cab}
  C_{1,b} \cdot A_\pi \cdot C_{a,2} = 
  \left\{
    \begin{array}{cl}
      C_{1,2} & \mbox{ if } (b,a) \in \pi,\\
      0      &  \mbox{ otherwise,}
    \end{array}
  \right.
  \end{equation}
  where $0$ denotes the zero matrix of size $n{\times}n$.

  Now let $\pi,\pi' \in B_n$ such that $[p_\pi] = [p_{\pi'}]$.
  Then 
  \[
  \forall q,r \in \Gamma^{m,+} :  q \pcol p_\pi \pcol r \in \fix Lm 
               \Leftrightarrow   q \pcol p_{\pi'} \pcol r \in \fix Lm.
  \]
  This implies (for $q$, $r$ with $C_{1,b} = \fold_{n}(\attr q\inp)$ and $C_{a,2} =
  \fold_{n}(\attr r\inp)$):
  \[
  \forall a,b\in\setbis n: C_{1,b} \cdot A_\pi   \cdot C_{a,2} \neq 0
           \Leftrightarrow C_{1,b} \cdot A_{\pi'} \cdot C_{a,2} \neq 0,
  \]
  which by (\ref{eq:cab}) implies
  \[
  \forall a,b\in\setbis n: (b,a) \not\in \pi \Leftrightarrow (b,a) \not\in\pi'.
  \]
  This implies $\pi=\pi'$.  We have shown that $\phi$ is injective,
  which completes the proof.
\end{proof}

We need the following result:
\begin{thThe}[\cite{Schw97, Matz-Diss}]
  Let $k\geq 1$. There is a $k$-fold exponential function $f$ such that
  $\fpics f$ (over a singleton alphabet) is
  $\SOSigma 1 {\FOPi {k-1}{\nDelta1}}$-definable.
\end{thThe}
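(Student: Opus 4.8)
The plan is to argue by induction on $k$, building for each $k$ an explicit \emph{$k$-fold nested binary counter} picture language and showing that, up to an alphabet projection, it coincides with $\fpics f$ for a suitable tower function $f$. Since the alphabet is a singleton, a picture carries no information by itself, so all combinatorial content must be placed into (existentially quantified) set variables. I would take $f_1(m)=2^m$ and $f_{k+1}(m)=f_k(m)\cdot 2^{f_k(m)}$; then $f_k$ is an exponential tower of height $k$, hence $k$-fold exponential, and the inductive claim is that $\fpics{f_k}=\Mod{}{\exists\overline{Y}\,\psi_k}$ for a tuple $\overline{Y}$ of set variables and a formula $\psi_k\in\FOPi{k-1}{\nDelta1}$ (both depending on $k$).

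For the base case $k=1$ I would use one set variable $X_{\mathit{bit}}$, reading each column --- say, bottom-to-top --- as an $m$-bit number, plus a second set variable $X_{\mathit{car}}$ recording the running carry of an increment-by-one. The conditions ``the leftmost column codes $0$'', ``the rightmost column codes $2^m-1$'', ``$X_{\mathit{car}}$ is the carry coloring determined by $X_{\mathit{bit}}$'', and ``each column codes one more than its left neighbour'' are all first-order once $X_{\mathit{car}}$ is available, and a height-$m$ picture admits such a coloring iff its width is $2^m$; so $\fpics{f_1}$ is $\SOSigma1{\FO}$-definable, hence $\SOSigma1{\FOPi{0}{\nDelta1}}$-definable, using $\FO\subseteq\nDelta1=\FOPi{0}{\nDelta1}$. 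For the inductive step I would read a picture of width $f_{k+1}(m)$ as a sequence of $2^{f_k(m)}$ consecutive \emph{pages}, each of width $f_k(m)$, where page $t$ carries the $f_k(m)$-bit binary code of $t$ laid out horizontally (one bit per column), page $0$ is all-zero, the last page is all-one, and consecutive pages increment (the carry now running horizontally within a page). I would mark page starts by a fresh set variable $X_{\mathit{bnd}}$, use fresh set variables $X_{\mathit{bit}},X_{\mathit{car}}$ for the page codes and their carry, and reuse the induction-hypothesis variables $\overline{Y}$, chosen periodic with period $f_k(m)$. The assertion ``every page has width $f_k(m)$'' is obtained by \emph{relativizing} $\psi_k$ to the two top-row positions delimiting a page; by Lemma~\ref{lem:relativize} the relativization is again a $\FOPi{k-1}{\nDelta1}$-formula. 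The remaining page-structure and increment conditions are first-order given the set variables, except that matching up columns at equal within-page offsets in adjacent pages is once more a ``width-$f_k(m)$'' condition on an intervening sub-picture, i.e.\ the relativized $\psi_k$ again. Quantifying universally over pairs of adjacent pages and over columns and conjoining yields a $\FOPi{k}{\nDelta1}$-formula $\psi_{k+1}$ (a universal first-order block over a $\FOPi{k-1}{\nDelta1}$-formula, using that $\FOPi{k}{\nDelta1}$ is closed under conjunction, disjunction, and universal first-order quantification and that $\nDelta1\subseteq\FOPi{k-1}{\nDelta1}$ for $k\geq1$); prepending the existential quantifiers for $X_{\mathit{bnd}},X_{\mathit{bit}},X_{\mathit{car}},\overline{Y}$ --- still a single block of existential set quantifiers --- gives the desired $\SOSigma1{\FOPi{k}{\nDelta1}}$-sentence defining $\fpics{f_{k+1}}$.

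The hard part is not the combinatorics of the nested counter but the control of the quantifier prefix in the inductive step: one must keep \emph{all} auxiliary colorings, of every nesting level simultaneously, inside the single outermost block of existential set quantifiers so that no second set-quantifier alternation is created, and one must make ``this sub-picture has width $f_k(m)$'' stay strictly inside $\FOPi{k-1}{\nDelta1}$ --- which is exactly what the relativization machinery of Lemmas~\ref{lem:relativizefo}--\ref{lem:relativize} provides --- so that the single extra universal first-order block needed to compare consecutive pages lands precisely in $\FOPi{k}{\nDelta1}$ and no higher. A secondary point is choosing $f_{k+1}$ so that the constraints pin the width down to an \emph{exact} value while keeping $f_{k+1}$ $(k+1)$-fold exponential; the factor $f_k(m)$ in $f_{k+1}(m)=f_k(m)\cdot 2^{f_k(m)}$ is there precisely to accommodate the $f_k(m)$ columns that each page requires.
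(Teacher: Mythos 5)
Your base case is fine, and so is the part of the inductive step that forces every \emph{page} to have width $f_k(m)$: there you assert the relativized $\psi_k$ positively on the blocks of a partition, and a single global witness coloring is obtained by patching page-wise witnesses together, exactly as in Lemma~\ref{lem:closed}, Claim~\ref{it:closedplus}. The gap is in the inter-page increment. To enforce it you must relate each column of a page to the column at the same offset in the next page, and you propose to recognize such pairs by ``the intervening sub-picture has width $f_k(m)$'', expressed as the relativized $\psi_k$ evaluated on the straddling window with the one global, existentially quantified coloring $\overline Y$. Those windows overlap, and the patching argument does not survive overlaps. Already for $k=1$ no coloring at all can satisfy $\psi_1$ on two windows shifted by one column (the leftmost column of a window must code $0$, while the window one step to the left forces that same column to code a successor value), and the restriction of your periodic coloring to a straddling window is a cyclic rotation of a witness, violating the ``starts at $0$ / ends at $2^m{-}1$'' clauses. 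Consequently: if the correspondence condition is asserted positively, the intended pictures are not models (completeness fails); if it only occurs as the guard of an implication, the existentially chosen coloring can falsify the guard on every straddling window, all increment clauses become vacuous, and for instance a two-page picture (first page all zero, last page all one, both pages of width $f_k(m)$) satisfies the sentence although its width is $2f_k(m)\neq f_{k+1}(m)$ (soundness fails). Either way the width is not pinned down, so the induction step as described does not go through; the quantifier bookkeeping is not the problem, the enforcement is.

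Note also that the paper does not prove this theorem itself; it only cites \cite{Schw97} and \cite{Matz-Diss}, Theorem~2.29. The constructions there circumvent exactly this obstacle: the existentially quantified coloring is a \emph{canonical} nested counter, uniquely determined by the picture, and corresponding columns (or sub-blocks) of adjacent blocks are identified not by measuring a distance of $f_k(m)$ but by \emph{equality of their lower-level counter values}; testing that equality is what consumes one further first-order quantifier alternation per level (for every sub-block of the left block there is a sub-block of the right block with the same level-$(k{-}1)$ value and matching bit, recursing down to a column-wise comparison done with the vertical counter), which is why the inner formula lands in $\FOPi{k-1}{\nDelta1}$ underneath a single existential set block. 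To repair your argument you would have to replace the distance-based guard by such content-based addressing and, in addition, prove uniqueness of your witness colorings, since without canonicity ``equal value'' does not imply ``equal offset''.
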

The formula construction for the above proof is in
\cite{Schw97}, except for the observation that the inner quantifier
block may be chosen as universal as well as as existential.  For a
construction including this observation, see \cite{Matz-Diss}, Theorem~2.29.

\begin{thLem}\label{lem:upperbound}
  Let $k\geq 1$.  There is a $k$-fold exponential function $f$ and a $\SOSigma
  1 {\FOPi {k-1}{\nDelta1}}$-definable picture language $L$ over alphabet
  $\schema{\{\inp,\fin\}}$ such that for
  every $m\geq 1$, the syntactic monoid of $\fix Lm$ contains a submonoid
  isomorphic to $B_{f(m)}$.
\end{thLem}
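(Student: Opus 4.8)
The plan is to obtain Lemma~\ref{lem:upperbound} as a bookkeeping composition of two results already in hand: the cited theorem of \cite{Schw97, Matz-Diss} and Lemma~\ref{lem:upperbound0}, the latter specialized to the formula class $\cF = \FOPi{k-1}{\nDelta1}$. First I would invoke the cited theorem to fix a $k$-fold exponential function $f : \NNone \rightarrow \NNone$ such that the associated picture language $\fpics f$ over a singleton alphabet is $\SOSigma 1{\FOPi{k-1}{\nDelta1}}$-definable; by Remark~\ref{rem:restr} this definability carries over to $\fpics{f}$ over any attributed alphabet, in particular over $\schema{\{\inp,\fin\}}$, so the hypothesis ``$\fpics f$ is $\SOSigma 1\cF$-definable'' of Lemma~\ref{lem:upperbound0} is met.

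Before applying Lemma~\ref{lem:upperbound0} I would discharge the standing hypothesis under which it (and the lemmas feeding it) were proved, namely that $\cF\supseteq\nDelta1$ and that the class of $\SOSigma 1\cF$-definable picture languages is closed under column concatenation, column closure, union, and intersection. The inclusion $\FOPi{k-1}{\nDelta1}\supseteq\nDelta1$ is immediate. Closure under column concatenation and column closure is exactly Proposition~\ref{prop:col-closure}, applied with the parameter $k-1\geq 0$ (this is why that proposition was stated for all indices $\geq 0$, not just $\geq 1$). Closure under union and intersection follows from Remark~\ref{rem:closed-easy}, since $\FOPi{k-1}{\nDelta1}$ is closed under conjunction and disjunction, as already noted in the proof of Proposition~\ref{prop:col-closure}.

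With those hypotheses in place, I would apply Lemma~\ref{lem:upperbound0} to this $f$ and to $\cF = \FOPi{k-1}{\nDelta1}$, obtaining a $\SOSigma 1{\FOPi{k-1}{\nDelta1}}$-definable picture language $L$ over $\schema{\{\inp,\fin\}}$ such that for every $m\geq 1$ the syntactic monoid of $\fix Lm$ contains a submonoid isomorphic to $B_{f(m)}$. Since $f$ is $k$-fold exponential, this $L$ is exactly the witness required by the lemma, and the proof is complete.

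I do not expect a genuine mathematical obstacle here; the statement is a composition of previously established facts. The only points that deserve a moment's attention are consistency of the parameter shift---the cited theorem delivers definability in $\SOSigma 1{\FOPi{k-1}{\nDelta1}}$, and Lemma~\ref{lem:upperbound0} is calibrated precisely to preserve the ambient class $\SOSigma 1\cF$---and the verification that Proposition~\ref{prop:col-closure}, stated for $\FOPi{k}{\nDelta1}$ with $k\geq 0$, indeed covers the index $k-1$ arising for $k\geq 1$.
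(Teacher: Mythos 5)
Your proposal is correct and follows essentially the same route as the paper: fix a $k$-fold exponential $f$ via the cited theorem of \cite{Schw97, Matz-Diss}, note that Proposition~\ref{prop:col-closure} (together with Remark~\ref{rem:closed-easy}) supplies the closure properties assumed for $\cF = \FOPi{k-1}{\nDelta1}$, and then apply Lemma~\ref{lem:upperbound0}. Your extra care about the index shift $k-1\geq 0$, the alphabet change via Remark~\ref{rem:restr}, and closure under union and intersection only makes explicit what the paper's three-line proof leaves implicit.
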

\begin{proof}
  By the preceding theorem, there is a $k$-fold exponential function $f$ such
  that $\fpics f$ is definable in the specified formula class.
  The class of picture languages definable in
  $\FOPi {k-1}{\nDelta1}$ is closed under column concatenation and column
  closure by Proposition~\ref{prop:col-closure}.
  The claim follows from Lemma~\ref{lem:upperbound0}.
\end{proof}

For the following theorem from \cite{Rhodes74binary} see also the remarks in
\cite{rhodes2009q}, page~307.
\begin{thThe}[\cite{Rhodes74binary}]\label{th:cBn}
  $\group{B_n} = n-1$ for every $n\geq 1$.
\end{thThe}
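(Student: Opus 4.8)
Since Theorem~\ref{th:cBn} is exactly the identity of Rhodes~\cite{Rhodes74binary}, the plan is to treat it as a cited black box and only sketch the two matching inequalities, deferring the machinery of complexity computations to~\cite{rhodes2009q} (page~307) and~\cite{KroRho65}.

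For the \emph{lower bound} $\group{B_n}\geq n-1$, I would reduce to the full transformation monoid. Write $T_n$ for the monoid of all maps $\vonbis1n\to\vonbis1n$ under composition. The assignment $g\mapsto\{\,(g(i),i)\mid i\in\vonbis1n\,\}$ is an injective monoid homomorphism $T_n\to B_n$: it clearly sends the identity to the identity and is injective because $g$ is recoverable from its graph, and a short computation shows that the relational product of $\{(g(i),i)\}$ and $\{(h(j),j)\}$ equals $\{((g\circ h)(j),j)\}$, which is the image of $g\circ h$. Hence $T_n$ is a submonoid of $B_n$, so $T_n\prec B_n$, and since group complexity is monotone under division it suffices to invoke the classical fact $\group{T_n}=n-1$ of Krohn and Rhodes~\cite{KroRho65}. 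I would stress that this classical identity is itself non-elementary on the lower-bound side, since one has to exhibit inside $T_n$ a long enough alternating chain of non-locally-trivial pieces; that is precisely what the ``fundamental lemma of complexity'' produces.

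For the \emph{upper bound} $\group{B_n}\leq n-1$, the plan is induction on $n$. The base case $\group{B_1}=0$ is immediate, since $B_1$ is the two-element monoid $\{\emptyset,\{(1,1)\}\}$ with $\emptyset$ absorbing, which is aperiodic. For the inductive step one would exhibit a division of $B_n$ by an iterated semidirect product built from $B_{n-1}$ together with symmetric groups and aperiodic pieces, arranged so that passing from $n-1$ to $n$ adds only one group level; intuitively one peels off the behaviour of $B_n$ on the underlying set $\vonbis1n$ one coordinate at a time, the genuine group behaviour at each such level being absorbed by (a divisor of) a symmetric group, and the $n$ levels contribute only $n-1$ once the trivial top and bottom levels and adjacent aperiodic factors are merged. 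Combining this with $\group{B_1}=0$ yields $\group{B_n}\leq n-1$.

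The main obstacle is the upper bound. Although $B_n$ is vastly larger than $T_n$, it has no larger complexity, and making the level-by-level decomposition precise genuinely requires the Krohn--Rhodes complexity apparatus rather than a naive estimate — for instance, the obvious embedding $B_n\hookrightarrow T_{2^n}$ only gives the far weaker bound $2^n-1$. Since we use the identity here purely as a tool, I would not reproduce Rhodes's argument but cite~\cite{Rhodes74binary} and the exposition in~\cite{rhodes2009q}.
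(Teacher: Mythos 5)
Your treatment matches the paper's: Theorem~\ref{th:cBn} is stated there with no proof, only the citation to \cite{Rhodes74binary} and the pointer to \cite{rhodes2009q}, so citing it as a black box is exactly the intended approach. Your supplementary lower-bound sketch is correct as far as it goes (the map $g\mapsto\{(g(i),i)\mid i\}$ is indeed an injective homomorphism $T_n\to B_n$, and division-monotonicity of $\Group$ plus $\group{T_n}=n-1$ gives $\group{B_n}\geq n-1$), and deferring the upper bound to Rhodes is consistent with how the paper uses the result.
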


Lemma~\ref{lem:upperbound} and Theorem~\ref{th:cBn} imply
\begin{thThe}\label{th:upperbound}
  For every $k\geq 1$, the class of $\SOSigma 1 {\FOPi
    {k-1}{\nDelta1}}$-definable picture languages over
  $\schema{\{\inp,\fin\}}$
  is at least $k$-fold
  exponential wrt.\ group complexity.
\end{thThe}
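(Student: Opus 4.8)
The plan is to read off the theorem from Lemma~\ref{lem:upperbound} and Theorem~\ref{th:cBn}, the only additional ingredient being that group complexity is monotone under division.

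First I would invoke Lemma~\ref{lem:upperbound} to fix a $k$-fold exponential function $f$ together with a $\SOSigma 1{\FOPi{k-1}{\nDelta1}}$-definable picture language $L$ over $\schema{\{\inp,\fin\}}$ such that, for every $m\geq 1$, the syntactic monoid $\Mon{\fix Lm}$ contains a submonoid isomorphic to $B_{f(m)}$. Since $L$ is monadic second-order definable over pictures and the height is fixed to $m$, each height fragment $\fix Lm$ is a regular word language over the finite alphabet $\Gamma^{m}$; hence $\Mon{\fix Lm}$ is finite and $\group{\fix Lm}=\group{\Mon{\fix Lm}}$ is well defined.

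Next I would record the monotonicity of group complexity under division: if $S\prec T$ for finite semigroups, then $\group S\leq\group T$. This is immediate from the definition of $\group{}$ through the classes $\pC_n$: each $\pC_n$ is a pseudovariety (it is obtained from $\pA$ and $\pG$ by iterated semidirect products), hence closed under division, so $T\in\pC_n$ implies $S\in\pC_n$, and minimizing over $n$ gives the claim. A submonoid of a monoid divides it, so the embedding of $B_{f(m)}$ into $\Mon{\fix Lm}$ supplied by Lemma~\ref{lem:upperbound} yields $B_{f(m)}\prec\Mon{\fix Lm}$, and therefore
\[
\group{\fix Lm}=\group{\Mon{\fix Lm}}\;\geq\;\group{B_{f(m)}}\;=\;f(m)-1,
\]
where the last equality is Theorem~\ref{th:cBn}. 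Since $f$ is $k$-fold exponential, so is $m\mapsto f(m)-1$, and hence $L$ is a picture language in the class under consideration whose height-$m$ fragments have group complexity at least $k$-fold exponential in $m$; this is exactly the assertion of the theorem. There is no genuine obstacle here: the construction of $L$ (done in Lemma~\ref{lem:upperbound}) and Rhodes' computation of $\group{B_n}$ (Theorem~\ref{th:cBn}) carry all the weight, and the remaining points---monotonicity of $\group{}$ under division and the irrelevance of the ``$-1$'' to the growth rate---are routine.
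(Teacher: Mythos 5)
Your proposal is correct and matches the paper exactly: the paper derives Theorem~\ref{th:upperbound} directly from Lemma~\ref{lem:upperbound} and Theorem~\ref{th:cBn}, and the routine steps you supply (a submonoid divides the monoid, $\group{}$ is monotone under division since each $\pC_n$ is a pseudovariety, and $f(m)-1$ remains $k$-fold exponential) are precisely the implicit glue the paper leaves to the reader.
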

This is the upper bound part of Theorem~\ref{th:main} and the result of this section.


\section{Non-Expressibility Result}
\label{sec:lower-bound}
\subsection{Some More Notation }

In this section, we write a (monoid or semigroup) homomorphism to the right of
its argument.  Consequently, if $\eta : M \rightarrow N$ is a homomorphism,
the image of $M$ under $\eta$ is denoted $M\eta$, and the pre-image of a subset
$X$ of $N$ is denoted $X\eta^{-1}$.  If $\pi : N \rightarrow T$ is another
homomorphism, then $\eta\pi$ denotes the composition of $\eta$ and $\pi$.

We recall some notions in addition to those in Section~\ref{subs:syncong}.
Let $L$ be a word language over $\Gamma$.
If $M$ is a monoid and $\eta : \Gamma^* \rightarrow M$ is a homomorphism such
that there exists a subset $X\subseteq M$ with $L = X\eta^{-1}$, then we
say that $M$ and $\eta$ \emph{recognize} $L$.  We are only interested in the
case that $M$ is finite.

The syntactic homomorphism $\eta_L: \Gamma^* \rightarrow \Mon L$ maps every
word to its syntactic congruence class.  $\Mon L$ and $\eta_L$ recognize $L$.
Besides, if $\eta$ is a homomorphism that recognizes $L$, then $\eta_L$
factors through $\eta$, meaning that there exists a homomorphism $\theta :
\Gamma^*\eta \rightarrow \Mon L$ such that $\eta \theta = \eta_L$.

\subsection{Transition Monoid of an NFA}
\label{sec:NFA-Mon}

Let $L$ be recognized by a non-deterministic finite automaton (NFA) $\mA$ with
$c$ states.  Using the transition structure of $L$, one can construct the
\emph{transition monoid} $M_\mA$ of $\mA$, which is a monoid with $2^{c^2}$
elements that recognizes $L$.
We sketch this construction from \cite{Pin96}.
Let $Q$ be the state set of $\mA$ and let $\Gamma$ be its alphabet.
Let $M_{\mA} = \schema{Q\times Q}$.  
$M_{\mA}$ forms a monoid, with the multiplication given as follows:
for all $q,q''\in Q$ we set
\[
(A \cdot B)(q,q'') = \sum_{q'\in Q} A(q,q') \cdot B(q',q''),
\]
where sum and product on the right refer to the Boolean semiring.
To every letter
$a\in\Gamma$, we assign the element $\delta_a\in M_\mA$ such that for every $q,q'\in Q$
we have $\delta_a(q,q') = 1$
iff there is a transition from $q$ to $q'$ labeled $a$.

The mapping $\delta : a \mapsto \delta_a$ induces a monoid homomorphism
$\Gamma^* \rightarrow M_{\mA}$.
Then $L$ is the pre-image of $\{(i, f) \in Q\times Q
\mid i \mbox{ is initial and } f \mbox{ is final}\}$ under $\delta$,
thus $M_{\mA}$ and $\delta$ recognize $L$.

\subsection{A Semantic Equivalent to First-Order Quantification}
\label{sec:semantic-fo}

In this section, let $I$ be an attribute set, $K \subseteq I$, and $\Gamma =
\schema I$.  Recall from Remark~\ref{rem:exists-set} how the alphabet
projection $\Exset{\{\mu\}}$ corresponds to the set quantification over
variable $X_\mu$.

Similarly, the syntactic concepts of disjunction and negation correspond to
union and complementation wrt.\ $\Unique IK$, respectively,  in the sense that
\[
\begin{array}{rcl}
  \Mod{I,K}{\phi \vee \psi}  & = & 
  \Mod{I,K}{\phi} \cup 
  \Mod{I,K}{\psi},\\
  \Mod{I,K}{\neg \phi}  & = & 
  \Unique IK \backslash \Mod{I,K}{\phi}.
\end{array}
\]

The next definition and remark present an operation on picture languages that
similarly corresponds to the effect of first-order quantification.

\begin{thDef}\label{def:semexists}
  Let $\mu \in I$.  Let $L$ be a picture language over
  alphabet $\{0,1\}^I$.
  Then $\semexists\mu L$ denotes the set of non-empty pictures $p$ over alphabet
  $\{0,1\}^{I\backslash\{\mu\}}$ for
  which there exists a picture $p'\in \Unique{I}{\{\mu\}} \cap L$ such that 
  $p = \exset{\{\mu\}}{p'}$.
\end{thDef}

\begin{thRem}\label{fossynsem}
  Let $\mu \in K$, let $\phi$ be a formula, and let $x_\mu$ be a first-order
  variable.
  \[
  \Mod{I\backslash\{\mu\},K\backslash\{\mu\}}{\exists x_\mu \phi} =
  \semexists {\mu}{\Mod{I,K}{\phi}}.
  \]
\end{thRem}

\subsection{The Block Product}
\label{sec:block-product}

The \emph{block product} has been introduced
in~\cite{RhoTil89}.    
That block product captures the effect of first-order quantification on the
syntactic monoid in the sense of Lemma~\ref{lem:semexists} below.

Following the presentation of \cite{Straub94}, pp.~61-65, 
we prepare the definition of the block product by introducing bilateral
semidirect products.  

Let $S, T$ be finite monoids, and let us write $S$ additively.  Assume a given
left action as in Definition~\ref{def:left-action}.  The action is
\emph{monoidal} if it additionally satisfies:
\begin{itemize}
\item $1s = s$ for every $s\in S$,
\item $t0 = 0$ for every $t\in T$.
\end{itemize}
\emph{Monoidal right actions} are defined dually.  A left and a
 right action
of $T$ on $S$ are \emph{compatible} if $(ts)t' = t(st')$ for every $t,t'\in
T$ and every $s\in S$.

Given a pair of compatible actions of $S$ on $T$, we define the
\emph{bilateral semidirect product} $S \doublestar T$.  This is the set
$S \times T$ with multiplication given by
\[
(s,t) (s',t') = (st' + ts', tt').
\]
If $M$ and $N$ are monoids and the underlying left and right actions are
monoidal, then the bilateral semidirect product $M \doublestar N$ is indeed a
monoid, see \cite{Straub94}, Proposition~V.4.1.

We remark that by \cite{Straub94}, Example~V.4.b, the semidirect product from
Section~\ref{sec:group-complexity} is a special case of the bilateral
semidirect product with the right action defined by $st = s$.

Closely following \cite{Straub94}, we can now define the block product.
Let $M, N$ be monoids, but this time we will write the products in both
of these monoids multiplicatively.  The set $M^{N\times N}$ of all maps from
$N\times N$ into $M$ forms a monoid under the component-wise product, which we
write additively.  That is, for $F_1, F_2 : N\times N \rightarrow M$ we define
\( F = F_1 + F_2 \), where \( F(n_1,n_2) = F_1(n_1,n_2) \cdot F_2(n_1,n_2) \)
for all $n_1,n_2 \in N$.  Thus $M^{N\times N}$ is isomorphic to the direct
product of $|N|^2$ copies of $M$.  The identity of this monoid is the map that
sends every element of $N\times N$ to $1$.  We define left and right actions
of $N$ on $M^{N\times N}$ by
\[
\begin{array}{rcl}
  (n F)(n_1,n_2) & = & F(n_1n, n_2),\\
  (F n)(n_1,n_2) & = & F(n_1, nn_2).
\end{array}
\]
It is straightforward to verify that these equations define a pair of
compatible left and right actions.  The resulting bilateral semidirect product
if called the \emph{block product} of $M$ and $N$ and is denoted $M\square N$.

Recall that, for a word language $L$, we denote its syntactic monoid by $\Mon {L}$.

The following lemma is an adaption of \cite{Straub94}, Lemma~VI.1.2.

\begin{thLem}\label{lem:semexists}
  Let $I$ be an attribute set, $\Gamma=\{0,1\}^I$, $\mu \in I$, $L \subseteq
  \Gammapp$, $m\geq 1$.  Then $U_1 \square \Mon {\fix Lm}$ recognizes
  $\fix{\semexists\mu L}m$.
More precisely, set $J = I\backslash\{\mu\}$ and $\Sigma =
\{0,1\}^{J}$. Consider the syntactic homomorphisms
\[
\begin{array}{r@{\,:\;}c@{\;\rightarrow\;}l}
  \eta_{\fix Lm}              & \Gamma^{m,*} & \Mon {\fix Lm},\\
  \eta_{\fix{\semexists\mu L}m} & \Sigma^{m,*} & \Mon {\fix{\semexists\mu L}m},
\end{array}
\]
of ${\fix Lm}$ and ${\fix{\semexists\mu L}m}$, respectively.

Let
$\pi : U_1 \square \Mon {\fix Lm} \rightarrow \Mon {\fix Lm}, (F,n) \mapsto n$
be the projection
homomorphism, and let $\sigma : \Sigma \rightarrow \Gamma$ be the alphabet
mapping that maps every $a\in\Sigma$ to the letter $\overline a$ with 
$\ov a \restrict J = a$ and $a(\mu) = 0$. We extend $\sigma$ to 
a homomorphism $\Sigma^{m,*}\rightarrow  \Gamma^{m,*}$ as usual.

There exist homomorphisms 
$\zeta : \Sigma^{m,*}\sigma \rightarrow U_1 \square \Mon {\fix Lm}$ and 
$\tau : U_1 \square \Mon {\fix Lm} \rightarrow \Mon {\fix{\semexists\mu L}m}$ 
such that 
$\zeta\pi  = \eta_{\fix Lm} \restriction (\Sigma^m\sigma)^*$ and 
$\sigma \zeta \tau = \eta_{\fix{\semexists\mu L}m}$.
\begin{figure}[tb]
  \centering

\begin{tikzpicture}[node distance=1.8cm, auto]
  \node (Sigma) {$\Sigma^{m,*}$};
  \node (Sigmasigma) [below of=Sigma, node distance=3.1cm] {$\underbrace{\Sigma^{m,*}\sigma}$};
  \node (Gamma) [below of=Sigma, node distance=3.6cm] {$\subseteq\Gamma^{m,*}$};
  \node (MonL)  [right of=Gamma, node distance=4.5cm] {$\Mon{\fix Lm}$};
  \node (MonEL) [right of=Sigma, node distance=4.5cm] {$\Mon{\fix {\semexists \mu L}m}$};
  \node (box)   [above of=MonL]                     {$U_1 \square \Mon {\fix Lm}$};
 
  \draw[->] (Sigma) to node [swap] {$\sigma$} (Sigmasigma);
  \draw[->] (Gamma) to node [swap] {$\eta_{\fix Lm}$} (MonL);
  \draw[->] (Sigma) to node [swap] {$\eta_{\fix {\semexists \mu L}m}$} (MonEL);
  \draw[->] (Sigmasigma) to node [swap] {$\zeta$} (box);
  \draw[->] (box)   to node [swap] {$\pi$} (MonL);
  \draw[->] (box)   to node        {$\tau$} (MonEL);
\end{tikzpicture}

  \caption{Commutative Diagram for Lemma~\ref{lem:semexists}}
  \label{fig:semexists}
\end{figure}


\end{thLem}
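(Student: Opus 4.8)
\emph{Proof plan.}\quad
The plan is to adapt the block-product construction of \cite{Straub94}, Lemma~VI.1.2, to the present setting. Abbreviate $M = \Mon{\fix Lm}$, $\eta = \eta_{\fix Lm}\colon \Gamma^{m,*}\to M$, $N = \Mon{\fix{\semexists\mu L}m}$, $\eta' = \eta_{\fix{\semexists\mu L}m}\colon \Sigma^{m,*}\to N$, and put $P = \eta(\fix Lm)\subseteq M$, so that $\fix Lm = \eta^{-1}(P)$ because $\eta$ is the syntactic homomorphism. For a column $a\in\Sigma^m$ and a row index $i\in\setbis m$, let $a^{(i)}\in\Gamma^m$ be the column that restricts to $a$ on $J$ in every row and whose $\mu$-entry is $1$ in row $i$ and $0$ in every other row; thus $\exset{\{\mu\}}{a^{(i)}} = a$, and $a^{(i)}$ is the $\sigma$-image of $a$ with a single $\mu$-mark inserted at row $i$. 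Reading off the definition of $\semexists\mu L$, a non-empty picture $w = b_1\cdots b_n\in\Sigma^{m,+}$, written as a word of columns, lies in $\fix{\semexists\mu L}m$ if and only if there are $k\in\setbis n$ and $i\in\setbis m$ with
\[
\sigma(b_1\cdots b_{k-1})\;b_k^{(i)}\;\sigma(b_{k+1}\cdots b_n)\;\in\;\fix Lm,
\]
a choice of $(i,k)$ being exactly a choice of the position at which the unique $\mu$-mark sits in the preimage picture $p'\in\Unique{I}{\{\mu\}}\cap L$.

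Next I would construct $\zeta$. Recall that $U_1 = \{1,0\}$ is the two-element monoid whose non-identity element $0$ is a zero, and recall that an element of $U_1\square M$ is a pair $(F,m_0)$ with $F\colon M\times M\to U_1$ and $m_0\in M$. The submonoid $\Sigma^{m,*}\sigma = (\Sigma^m\sigma)^*$ of $\Gamma^{m,*}$ is free on the set $\Sigma^m\sigma$, so it suffices to define $\zeta$ on generators: for $a\in\Sigma^m$ put $\zeta(\sigma(a)) = (G_a,\eta(\sigma(a)))$, where $G_a(m_1,m_2) = 0$ if $m_1\,\eta(a^{(i)})\,m_2\in P$ for some row $i\in\setbis m$, and $G_a(m_1,m_2) = 1$ otherwise. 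Since $\pi$ is the projection to the second component and that component of $\zeta(\sigma(a))$ is $\eta(\sigma(a))$, the homomorphisms $\zeta\pi$ and $\eta\restriction(\Sigma^m\sigma)^*$ agree on the free generators, hence $\zeta\pi = \eta\restriction(\Sigma^m\sigma)^*$.

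It then remains to analyse $\beta := \sigma\zeta\colon\Sigma^{m,*}\to U_1\square M$ and to produce $\tau$. Writing $\beta(w) = (F_w,\eta(\sigma(w)))$, the technical core is an induction on $|w|$ that unwinds the block-product multiplication --- using the left and right actions $(m'F)(m_1,m_2) = F(m_1m',m_2)$ and $(Fm')(m_1,m_2) = F(m_1,m'm_2)$ --- to the explicit formula, for $w = b_1\cdots b_n$,
\[
F_w(1_M,1_M)\;=\;\prod_{k=1}^{n} G_{b_k}\bigl(\eta(\sigma(b_1\cdots b_{k-1})),\,\eta(\sigma(b_{k+1}\cdots b_n))\bigr),
\]
the product taken in $U_1$. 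Since $0$ is a zero of $U_1$, this value is $0$ exactly when some factor is $0$, i.e.\ exactly when there are $k$ and $i$ with $\eta\bigl(\sigma(b_1\cdots b_{k-1})\,b_k^{(i)}\,\sigma(b_{k+1}\cdots b_n)\bigr)\in P$; by the reformulation in the first paragraph this is precisely $w\in\fix{\semexists\mu L}m$ (the empty word gives the constant map $1$, consistently with $\epsilon\notin\fix{\semexists\mu L}m$). Hence, with $R = \{(F,m_0)\in U_1\square M \mid F(1_M,1_M) = 0\}$, we get $\fix{\semexists\mu L}m = \beta^{-1}(R)$, so $U_1\square M$ and $\beta$ recognize $\fix{\semexists\mu L}m$. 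Finally, since $\eta'$ is the syntactic homomorphism of $\fix{\semexists\mu L}m$, it factors through the recognizing homomorphism $\beta$: there is a homomorphism $\tau$ on $\Sigma^{m,*}\beta = (\Sigma^m\sigma)^*\zeta\subseteq U_1\square M$ with $\beta\tau = \eta'$, and therefore $\sigma\zeta\tau = \beta\tau = \eta'$.

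The step I expect to be the main obstacle is the inductive unwinding of the block-product multiplication: keeping the two actions on the correct sides and verifying that the accumulated $U_1$-product really ranges over all cut points $k$ and, within each, over all rows $i$. A minor additional point is that the ``syntactic morphism factors through any recognizing morphism'' principle yields $\tau$ only on the submonoid $\Sigma^{m,*}\beta$ of $U_1\square M$; this is all that is needed in the sequel, since only the identity $\sigma\zeta\tau = \eta_{\fix{\semexists\mu L}m}$ is used afterwards.
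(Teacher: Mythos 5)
Your proposal is correct and follows essentially the same route as the paper's proof: the same letter-level map $\zeta$ (your $G_a$ encodes exactly the paper's existential over columns $b_1\cdots b_m\in 0^*10^*$), the same accepting set based on $F(1_M,1_M)=0$, the same inductive product formula for the first component, and the same factorization of the syntactic morphism through the recognizing morphism to obtain $\tau$. The two points you flag as potential obstacles are likewise treated as routine in the paper (the induction is only asserted there, and the paper's own notion of ``factors through'' also yields $\tau$ only on the image submonoid), so nothing is missing.
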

\begin{proof}
  We define the mapping $\otimes : \Sigma \times \{0,1\} \rightarrow \Gamma$ the
  following way:
  \[
  (a \otimes b)(\nu) = \left\{
    \begin{array}{cl}
      a(\nu) & \mbox{if } \nu \neq \mu,\\
      b & \mbox{if } \nu = \mu.
    \end{array}
  \right.
  \]
  (Then $\overline a = a \otimes 0$ for every $a\in\Sigma$.)
  If $p\in \Sigmapp$ and $q\in \{0,1\}^{+,+}$ are pictures of the same size,
  we write $p\otimes q$ for the equally sized picture over $\Gamma$ with
  $(p\otimes q)\pos ij = p\pos ij \otimes q\pos ij$ for every $(i,j)\in \dom p$.

  Choose $T\subseteq \Mon {\fix Lm}$ such that $\fix Lm = T\eta_{\fix Lm}^{-1}$.
  Define 
  \[
  \zeta: \Sigma^m\sigma \rightarrow U_1 \square \Mon {\fix Lm},\quad
  \vect{\ov{a_1}}{\ov{a_m}} \mapsto 
  \left(F, \vect{\ov{a_1}}{\ov{a_m}}\eta_{\fix Lm}\right),
  \]
  where $F$ is defined as follows: for all $n_1, n_2\in \Mon {\fix Lm}$, the component 
  $F(n_1,n_2) \in \{0,1\}$ is $0$ iff 
  there exist $b_1,\ldots,b_{m}\in \{0,1\}$ such
  that $b_1\ldots b_m \in 0^* 1 0^*$ and
  \[
  n_1 \cdot \vect{a_1 \otimes b_1}{a_m \otimes b_m}\eta_{\fix Lm} \cdot n_2
  \in T.
  \]
  We extend $\zeta$ to a monoid homomorphism $\zeta : (\Sigma^m\sigma)^*
  \rightarrow U_1 \square \Mon {\fix Lm}$ as usual.

  Then indeed $\zeta \pi  = \eta_{\fix Lm} \restriction (\Sigma^m\sigma)^*$,
  as claimed in the lemma.
  
  Let $\theta = \sigma \zeta$.  Then $\theta$ is a homomorphism
  $\Sigma^{m,*} \rightarrow U_1 \square \Mon {\fix Lm}$.
  Choose 
  \[
  K := \{ G\in \{0,1\}^{\Mon {\fix Lm}\times \Mon {\fix Lm}} \mid G\pos11 = 0\}
  \times  \Mon {\fix Lm} \subseteq U_1 \square \Mon {\fix Lm}.
  \]
  We show that
  $\fix{\semexists{\mu}L}m = K\theta^{-1}$.

  Let $p\in\Gammapp$ be a picture of size $(m,n)$, say 
  $p = \left(\arr{a_{1,1}}{a_{1, n}}{a_{m,1}}{a_{m,n}}\right)$. 
  Then there are appropriate $F_1, \ldots,
  F_{n} \in \{0,1\}^{\Mon {\fix Lm}\times \Mon {\fix Lm}}$ such that
  \[
  \vect{{a_{1,l}}}{{a_{m,l}}}\theta =
   \vect{\overline{a_{1,l}}}{\overline{a_{m,l}}}\zeta = 
   \left(F_1, \vect{\overline{a_{1,l}}}{\overline{a_{m,l}}}\eta_{\fix Lm}\right)
  \]
  for every $l\leq n$.  We have
  \[
  p\theta = 
     \left(F_1,  \vect{\overline{a_{1,1}}}{\overline{a_{m,1}}}\eta_{\fix Lm} \right)
    \cdots
    \left(F_{n}, \vect{\overline{a_{1,n}}}{\overline{a_{m,n}}}\eta_{\fix Lm} \right).
  \]
  By induction over $n$ one shows that there exists $G\in\{0,1\}^{\Mon {\fix Lm}\times \Mon {\fix Lm}}$ such that
  \[
  p\theta = 
     \left(G, \left(\arr{\overline{a_{1,1}}}{\overline{a_{1, n}}}{\overline{a_{m,1}}}{\overline{a_{m,n}}}\right)\eta_{\fix Lm}
     \right).
  \]
  with
  \[
  G\pos 11 = \prod_{l=1}^{n} F_l
  \left( 
    \left(\arr{\overline{a_{1,1}}}{\overline{a_{1, l-1}}}%
                {\overline{a_{m,1}}}{\overline{a_{m,l-1}}}
         \right)\eta_{\fix Lm},
    \left(\arr{\overline{a_{1,l+1}}}{\overline{a_{1, n}}}%
                {\overline{a_{m,l+1}}}{\overline{a_{m,n}}}
         \right)\eta_{\fix Lm} 
  \right).
  \]
  Thus we have the following equivalence chain:

  $p\theta \in K$
  iff $G\pos 11 = 0$
  iff there is $l \leq n $ 
  such that
  \[
  0 = F_l
  \left(
    \left(\arr{\overline{a_{1,1}}}{\overline{a_{1, l-1}}}%
                {\overline{a_{m,1}}}{\overline{a_{m,l-1}}}
         \right)\eta_{\fix Lm},
    \left(\arr{\overline{a_{1,l+1}}}{\overline{a_{1, n}}}%
                {\overline{a_{m,l+1}}}{\overline{a_{m,n}}}
         \right)\eta_{\fix Lm} 
  \right)
  \]
  iff there is $l \leq n $ and $b_1,\ldots,b_{m} \in \{0,1\}$ 
  such that $b_1\ldots b_m \in 0^* 1 0^*$ and
  \[
    \left(\arr{\overline{a_{1,1}}}{\overline{a_{1, l-1}}}%
                {\overline{a_{m,1}}}{\overline{a_{m,l-1}}}
         \right)\eta_{\fix Lm}
    \cdot
    \vect{a_{1,l} \otimes b_1}{a_{m,1} \otimes b_{m}}\eta_{\fix Lm}
    \cdot
    \left(\arr{\overline{a_{1,l+1}}}{\overline{a_{1, n}}}%
                {\overline{a_{m,l+1}}}{\overline{a_{m,n}}}
         \right)\eta_{\fix Lm}
    \in T
  \]
  iff there exists a picture $q$ of size $(m,n)$ over $\{0,1\}$ such there is exactly one position
  $(k,l)\in \dom q$ with $q\pos kl = 1$ and
  \[
    \left(\arr{a_{1,1}}{a_{1, n}}%
                  {a_{m,1}}{a_{m,n}} \otimes q
         \right)\eta_{\fix Lm}
    \in T
  \]
  iff there exists $q\in\Unique{\{\mu\}}{\{\mu\}}$ of size $(m,n)$ such that
  \[
  p \otimes q \in T\eta_{\fix Lm}^{-1} = L
  \]
  iff $p\in \fix{\semexists \mu L}m$.

  Since $p$ was chosen arbitrary, this implies 
  $K\theta^{-1}= \fix{\semexists \mu L}m$.
  Thus $U_1 \square \Mon {\fix Lm}$ and $\theta$ recognize $\fix{\semexists \mu L}m$.
  
  Since the syntactic homomorphism factors through any other homomorphism that
  recognizes the same language, there exists a homomorphism $\tau$ such that
  $\eta_{\fix{\semexists \mu L}m} = \theta\tau = \sigma \zeta \tau$.
  This completes the proof.
\end{proof}

The above lemma and its proof closely follow \cite{Straub94}, Lemma
VI.1.2. Straubing's lemma is obtained from ours by choosing $m=1$.
Besides, \cite{Straub94} does not introduce the operator $\Semexists\mu$ from
Definition~\ref{def:semexists}, so it must be applied to a formula $\phi$ rather than to its language $L = \Mod{}\phi$.
In \cite{Straub94}, that formula is from the first-order theory with order,
but that condition is of no concern, as the syntax of that formula is
irrelevant for the proof. 

\subsection{Using the Group Complexity}

In this section, we finish the proof of Theorem~\ref{th:main}.
A semigroup that does not contain a non-trivial group is called
\emph{aperiodic}.  A semigroup homomorphism $\phi:S\rightarrow T$ is called
\emph{aperiodic} if for every aperiodic subsemigroup $W$ of $T$, its pre-image
$W\phi^{-1}$ is aperiodic.
Recall the definition of group complexity from
Section~\ref{sec:group-complexity}.
The following theorem is from \cite{Rhodes74}, see e.g. \cite{rhodes2009q}, Theorem~4.9.1.
\begin{thThe}[Fundamental Lemma of Complexity]
  \label{th:fundamental}
  Let $S,T$ be semigroups.  Let $\phi:S\rightarrow T$ be a surjective
  aperiodic homomorphism.   Then $\group S = \group T$.
\end{thThe}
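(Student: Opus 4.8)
The plan is to establish the two inequalities $\group T \le \group S$ and $\group S \le \group T$ separately; only the second uses the hypothesis that $\phi$ is aperiodic.

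The inequality $\group T \le \group S$ is immediate: since $\phi$ is surjective, $T$ is a homomorphic image, hence a divisor, of $S$, and each $\pC_n$ is a pseudovariety and so closed under division; thus $S \in \pC_{\group S}$ forces $T \in \pC_{\group S}$.

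For $\group S \le \group T$ I would take the category-theoretic route, first replacing $\phi$ by the induced surjection $S^1 \twoheadrightarrow T^1$, which changes no group complexities. The key reformulation is that, for a surjective homomorphism $\phi$, the elementwise condition ``$W\phi^{-1}$ is aperiodic for every aperiodic subsemigroup $W$ of $T$'' is equivalent to the derived category $D_\phi$ of $\phi$ being an aperiodic category, and hence---since $\pA$ is a local pseudovariety in Tilson's sense---to $D_\phi$ dividing some finite aperiodic monoid $A$. Granting this, the Derived Category Theorem delivers a division $S \prec A \wr T$ (a two-sided version, via the kernel category and the block product $A \square T$, is equally available and is closer to this paper's tools). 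It then suffices that a wreath product does not raise complexity beyond the sum of those of its factors: $\group{A \wr T} = \group{A^{|T|} \rtimes T} \le \group A + \group T = \group T$, the inequality being the instance $\pC_a * \pC_b = \pC_{a+b}$, which follows from $\pA * \pA = \pA$ and associativity of the semidirect product applied to $\pC_{n+1} = \pA*(\pG*\pC_n) = \pC_n*(\pA*\pG)$. Thus $S \prec A \wr T \in \pC_{\group T}$, so $\group S \le \group T$. As a sanity check, the degenerate case $\group T = 0$ is immediate already from the definition: then $T$ is an aperiodic subsemigroup of itself, so taking $W = T$ makes $S = T\phi^{-1}$ aperiodic and $\group S = 0$, and the derived category and wreath product collapse.

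The main obstacle is the imported machinery: the Derived Category Theorem (a delicate division theorem tying $S$, $T$, and $D_\phi$ together), Tilson's locality of $\pA$, and---above all---the equivalence between the elementwise notion of aperiodic morphism used in the statement and the statement ``$D_\phi$ divides a finite aperiodic monoid''. That equivalence is the technical heart, and it is exactly what the cited work of Rhodes, together with its later category-theoretic refinements, supplies; the only genuinely elementary pieces are the easy inequality $\group T \le \group S$ and the complexity bound for the wreath product.
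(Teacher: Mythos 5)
Your easy half is fine: surjectivity gives $T \prec S$, and since each $\pC_n$ is a pseudovariety this yields $\group T \le \group S$; likewise the estimate $\group{A \wr T} \le \group T$ for aperiodic $A$ follows from $\pA * \pC_n = \pC_n$. The gap is the step you call the ``key reformulation'': for a surjective homomorphism, the elementwise condition (preimages of aperiodic subsemigroups are aperiodic) is \emph{not} equivalent to the derived category $D_\phi$ being aperiodic, and the direction you actually need --- elementwise aperiodic implies $D_\phi$ divides a finite aperiodic monoid --- is false. Concretely, let $H = \mathbb{Z}_2 = \{1,g\}$, let $T = H \cup \{z\}$ be $H$ with a zero adjoined, let $X = \{x_1,x_2\}$ carry the right regular $H$-action, and let $S = H \cup X$ with products $hh'$ as in $H$, $x\cdot h = xh$ (the action), $h\cdot x = x$, $x\cdot x' = x'$. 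The surjective homomorphism $\phi : S \rightarrow T$ fixing $H$ and collapsing $X$ to $z$ is aperiodic in the paper's sense: the aperiodic subsemigroups of $T$ are $\{1\}$, $\{z\}$, $\{1,z\}$, and their preimages $\{1\}$, $X$, $\{1\}\cup X$ contain no nontrivial group. But every $s\in S$ labels a loop at the object $z$ of $D_\phi$, loops being identified exactly when they act equally on the fiber $z\phi^{-1} = X$; since $1$ and $g$ act on $X$ as distinct permutations and $(z,g)^2 = (z,1)$ is the identity loop, the local monoid at $z$ contains a copy of $\mathbb{Z}_2$ (for the dual handedness convention, take the left--right dual example). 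As local monoids divide every monoid the category divides, $D_\phi$ divides no aperiodic monoid, so Tilson's theorem gives you no division $S \prec A \wr T$ along $\phi$. (In this example $\group S = \group T = 1$, so the theorem itself is untouched; it is your reduction that breaks.) Your parenthetical two-sided variant suffers from the analogous unestablished link, and in addition the bound $\group{A \square T} \le \group T$ for aperiodic $A$ is exactly Lemma~\ref{lem:doublestar-aperiodic}, which this paper deduces \emph{from} Theorem~\ref{th:fundamental}; inside this paper that route would be circular.

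For comparison: the paper does not prove this statement at all --- it imports it from Rhodes (\cite{Rhodes74}; see \cite{rhodes2009q}, Theorem~4.9.1) --- and that is no accident. If the Fundamental Lemma reduced to the Derived Category Theorem, locality of $\pA$, and the wreath-product estimate, it would be a routine corollary of Tilson's machinery; in fact all known proofs are substantially more involved, and the entire difficulty is concentrated precisely in the link you assert as a known equivalence (only its converse, that an aperiodic derived or kernel category forces the morphism to be elementwise aperiodic, is easy). So your sketch should be read as a correct treatment of the trivial inequality plus an appeal to the very theorem being proved; as a justification in this paper's context, the citation the paper gives is the appropriate one.
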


The observation of the next lemma has probably been made before, but I did not
find an explicit statement of it in the literature, so we prove it here.
A similar statement (concerning the Malcev product) can be found in
\cite{rhodes2009q}, Corollary~4.9.4.

\begin{thLem}\label{lem:doublestar-aperiodic}
  Let $S, T$ be semigroups such that $S$ is aperiodic.  Then $\group
  {S\doublestar T} = \group T$.
\end{thLem}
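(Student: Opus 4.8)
The plan is to exhibit a surjective aperiodic homomorphism from $S \doublestar T$ (or from a subsemigroup thereof that still surjects onto $T$) onto $T$, and then apply the Fundamental Lemma of Complexity (Theorem~\ref{th:fundamental}), which gives $\group{S\doublestar T} = \group T$. The natural candidate is the projection $\pi : S\doublestar T \rightarrow T$, $(s,t)\mapsto t$. One first checks that $\pi$ really is a homomorphism: from the multiplication $(s,t)(s',t') = (st'+ts', tt')$ one reads off that the second coordinate multiplies as in $T$, so $\pi$ is a semigroup homomorphism, and it is surjective provided $S$ is nonempty (pick any $s\in S$ and map $t$ to $(s,t)$; more carefully, surjectivity onto $T$ holds because for each $t$ the fibre $\pi^{-1}(t) = S\times\{t\}$ is nonempty).

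The heart of the argument is to show that $\pi$ is aperiodic, i.e.\ that for every aperiodic subsemigroup $W\subseteq T$, the preimage $W\pi^{-1} = S\times W$ (as a subsemigroup of $S\doublestar T$) is aperiodic. So suppose $G$ is a group contained in $S\times W$; I must show $G$ is trivial. Applying $\pi$ to $G$ yields a group $G\pi$ contained in $W$, which is aperiodic, so $G\pi$ is trivial, say $G\pi = \{e\}$ for a single idempotent $e\in W$. Hence every element of $G$ has the form $(s,e)$ with $e$ a fixed idempotent of $T$, and $G$ is a group contained in the ``fibre semigroup'' $S\times\{e\}$ with the induced multiplication $(s,e)(s',e) = (se + es', e)$ — using $ee = e$. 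The key point is then that $S\times\{e\}$ under this operation, call it $s\ast s' := se + es'$, is a homomorphic image of (or divides) $S$ itself, because the map $S\times\{e\}\to S$, $(s,e)\mapsto ese$ (localizing at the idempotent $e$) turns $\ast$ into ordinary addition in $S$: $(ese)+(es'e) = e(se+es')e = e(s\ast s')e$, using compatibility and the monoid-action identities; so the image is the local subsemigroup $eSe\prec S$, which is aperiodic since $S$ is. Therefore the image of $G$ under $(s,e)\mapsto ese$ is a group contained in an aperiodic semigroup, hence trivial; this forces all $ese$ equal, and combined with the group structure (each element has an inverse, and in a group containing a single idempotent that idempotent is the identity) one concludes $G$ is trivial. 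Thus $S\times W$ contains no non-trivial group, i.e.\ is aperiodic, so $\pi$ is aperiodic.

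The main obstacle I anticipate is the bookkeeping in the previous step: making precise that within a single fibre $S\times\{e\}$ of $\pi^{-1}$ the multiplication, after conjugating/localizing by $e$, becomes the addition of $S$ (restricted to $eSe$), and checking this is legitimate using exactly the monoidal and compatibility axioms for the two actions of $T$ on $S$ (Definition~\ref{def:left-action} plus the monoidal and compatibility conditions). One must be a little careful because $e$ is an idempotent of $T$ that need not be the identity of $T$, so the simplifications $es = s$ and $se = s$ are \emph{not} available; only $ee = e$ is. The cleanest way to organize it is: (i) observe $G\pi$ is trivial (an idempotent $e$); (ii) consider the map $\rho: S\times\{e\} \to eSe$, $(s,e)\mapsto ese$, where $eSe := \{e s e \mid s\in S\}$ with the addition of $S$ — check $eSe$ is a subsemigroup of $S$ closed under $+$, using $(ese)+(es'e) = e(se+es')e$ and that $e(se+es')e$ again has the form $e(\cdot)e$; (iii) check $\rho$ is a semigroup homomorphism from $(S\times\{e\},\ast)$ to $(eSe,+)$, since $\rho((s,e)\ast(s',e)) = e(se+es')e = ese + es'e = \rho(s,e)+\rho(s',e)$ after expanding with the action identities; (iv) conclude $G\rho$ is a group inside $eSe\prec S$, hence trivial, hence $G$ is trivial because $\rho$ restricted to $G$ is injective (a homomorphism between groups whose kernel is trivial — or simply note a group with a single idempotent mapping to a trivial group with injective-on-the-group map must be trivial once one also uses that in $G$ all elements are $(s,e)$ and the group identity is an idempotent, so equals $(s_0,e)$ with $s_0 s_0 = s_0$ in the $\ast$ sense). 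Once $\pi$ is shown aperiodic and surjective, Theorem~\ref{th:fundamental} finishes the proof.
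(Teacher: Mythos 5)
Your overall architecture---reduce via Theorem~\ref{th:fundamental} to showing the projection $\pi$ is aperiodic, observe that a group $G$ inside $W\pi^{-1}$ has $G\pi$ trivial so that $G$ lies in a single fibre $S\times\{e\}$ with $e$ an idempotent of $T$---matches the paper's, and your localization map $\rho\colon (s,e)\mapsto ese$ into $(eSe,+)$ really is a semigroup homomorphism: the identity $e(se+es')e=ese+es'e$ follows from distributivity, action-associativity and compatibility alone, and you rightly note that $es=s$, $se=s$ are unavailable. The genuine gap is the final inference. From ``$G\rho$ is trivial'' you conclude ``$G$ is trivial'' by asserting that $\rho|_G$ is injective ``because its kernel is trivial''. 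This is circular: triviality of $G\rho$ says exactly that the kernel of $\rho|_G$ is all of $G$, so claiming the kernel is trivial is claiming the conclusion; neither parenthetical remark supplies an independent reason for injectivity, and nothing in steps (i)--(iii) does either.

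The inference is true, but it needs a computation rather than a formal kernel argument. One repair: let $(u,e)$ be the identity of $G$ and $z$ the unique element of $G\rho$, so $ese=z$ for every $(s,e)\in G$. From $(s,e)(u,e)=(s,e)$ one gets $se+eu=s$, hence $es=e(se)+e(eu)=ese+(ee)u=z+eu$; substituting this into $ue+es=s$ (from $(u,e)(s,e)=(s,e)$) gives $s=ue+z+eu$, which does not depend on $s$, so all elements of $G$ coincide and $G$ is trivial. The paper takes a different, more direct route that dispenses with $\rho$ altogether: it suffices to treat cyclic $G$ with generator $(s,t)$; then $t=t^2$, an induction gives $(s,t)^{n+2}=(st+n(tst)+ts,\,t)$, and aperiodicity of $S$ yields $n(tst)=(n+1)(tst)$ for some $n$, hence $(s,t)^{n+2}=(s,t)^{n+3}$ and the generator is the identity of $G$. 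Either patch completes your argument; as submitted, the decisive step is unjustified.
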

\begin{proof}
  We write $S$ additively and write
  $
  n s = \underbrace{s + \cdots + s}_{n \textrm{ times}}
  $
  to denote, for $n\geq 0$ and $s\in S$, the $n$-fold sum of $s$.
  By Theorem~\ref{th:fundamental} it suffices to show that the projection
  homomorphism $\pi : S \doublestar T \rightarrow T$, $(s,t) \mapsto t$ is
  aperiodic.
  
  Let $W$ be an aperiodic subsemigroup of $T$.
  Let $G$ be a cyclic group contained in $W\pi^{-1}$.
  It suffices to show that $G$ is trivial.

  Choose $g\in G$ such that $G$ is generated by $g$, i.e., $G = \{ g,
  g^2,\ldots, g^{|G|}\}$.  Choose $s\in S$ and $t\in T$ such that $g = (s,t)$.
  We have
  \(
  \{ t,t^2,\ldots, t^{|G|}\} = G\pi
  \)
  is a group contained in $W$, thus it is trivial, which implies $t=t^2$.
   Simple induction shows that for every $n\geq 0$ it holds
   \[
   (s,t)^{n+2} =  \Big(st + n(tst) + ts, t\Big).
   \]
  Since $S$ is aperiodic, there exists $n\geq 1$ such that $n (tst) =
  (n+1) (tst)$.
  Now $(s,t)^{n+2}
  = \big(st + n(tst) + ts, t\big)
  = \big(st + (n+1)(tst) + ts, t\big)
  =(s,t)^{n+3}.$
  Since $(s,t)$ generates the group $G$, this implies that $(s,t)$ is the
  identity of $G$, so $G$ is trivial, which completes the proof.
\end{proof}

\begin{thRem}\label{rem:rec-divide}
  Let $L$ be a word language.  Let $M$ be a monoid that recognizes $L$.
  Then $\group{\Mon L} \leq \group M$.
\end{thRem}
\begin{proof}
  Set $n=\group M$.
  Since $M$ recognizes $L$, the syntactic monoid of $L$ is a homomorphic image
  of $M$ and hence $\Mon L \prec M$.
  Since $M\in \pC_n$ and $\pC_n$ is a pseudovariety, this implies $\Mon L \in
  \pC_n$, thus $\group{\Mon L} \leq n = \group M$.
\end{proof}

\begin{thLem}\label{lem:subgroup}
  Let $I$ be an attribute set, let $\mu \in I$ be an attribute, let $L$ be a
  set of pictures over alphabet $\schema I$, let $m\geq 1$.
  Then 
  $\group{\Mon {\fix{\semexists \mu L}m}} \leq \group{\Mon {\fix Lm}}$.
\end{thLem}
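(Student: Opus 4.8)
The plan is to read off a recognizing monoid for $\fix{\semexists\mu L}m$ from Lemma~\ref{lem:semexists}, to observe that this monoid differs from $\Mon{\fix Lm}$ only by an aperiodic block-product factor, and then to invoke Lemma~\ref{lem:doublestar-aperiodic} together with Remark~\ref{rem:rec-divide}.

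First I would apply Lemma~\ref{lem:semexists}, which states that $U_1 \square \Mon{\fix Lm}$ recognizes $\fix{\semexists\mu L}m$. (The right-hand side of the claimed inequality is only meaningful when $\Mon{\fix Lm}$ is finite, which is implicit since group complexity is defined only for finite semigroups; under that hypothesis $U_1 \square \Mon{\fix Lm}$ is finite too, so $\fix{\semexists\mu L}m$ is regular and $\Mon{\fix{\semexists\mu L}m}$ and its group complexity are well defined.) Remark~\ref{rem:rec-divide} then yields
\[
\group{\Mon{\fix{\semexists\mu L}m}} \;\leq\; \group{U_1 \square \Mon{\fix Lm}}.
\]

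Next I would unwind the definition of the block product from Section~\ref{sec:block-product}: writing $N = \Mon{\fix Lm}$, the monoid $U_1 \square N$ is by definition the bilateral semidirect product $S \doublestar N$, where $S = U_1^{N\times N}$ is the direct product of $|N|^2$ copies of $U_1$ under the component-wise product. Since $U_1$ is idempotent it is aperiodic, and a finite direct power of an aperiodic semigroup is aperiodic (equivalently, $\pA$ is closed under direct products), so $S$ is aperiodic. Hence Lemma~\ref{lem:doublestar-aperiodic} applies and gives
\[
\group{U_1 \square N} \;=\; \group{S \doublestar N} \;=\; \group N \;=\; \group{\Mon{\fix Lm}}.
\]
Combining this with the previous display proves the lemma.

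I do not expect a real obstacle: the argument only assembles results already established. The two small points requiring a word of care are that $\fix{\semexists\mu L}m$ is again regular — which is immediate from Lemma~\ref{lem:semexists} — and that the left-hand factor of the bilateral semidirect product realizing the block product is aperiodic, which reduces to the aperiodicity of $U_1$ (and, if one wishes to avoid invoking pseudovariety closure, to the observation that a finite direct power of an idempotent monoid is itself idempotent).
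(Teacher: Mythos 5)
Your proposal is correct and follows essentially the same route as the paper's proof: Lemma~\ref{lem:semexists} plus Remark~\ref{rem:rec-divide} to bound the group complexity by that of $U_1 \square \Mon{\fix Lm}$, then the aperiodicity of $U_1^{\Mon{\fix Lm}\times\Mon{\fix Lm}}$ and Lemma~\ref{lem:doublestar-aperiodic} to conclude equality with $\group{\Mon{\fix Lm}}$. The extra remarks on finiteness and regularity are harmless elaborations of what the paper leaves implicit.
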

\begin{proof}
  By
  Lemma~\ref{lem:semexists}, the monoid $U_1 \square \Mon {\fix Lm}$
  recognizes $\fix{\semexists \mu L}m$.
  By Remark~\ref{rem:rec-divide},
  \[
  \begin{array}{rcl}
  \group{\Mon {\fix{\semexists \mu L}m}} & \leq &
  \group{U_1 \square \Mon {\fix Lm}} \\
   & = & \group{U_1^{\Mon {\fix Lm} \times \Mon {\fix Lm}} \doublestar \Mon {\fix Lm}}.
  \end{array}
  \]
  Since $U_1$ is
  aperiodic, so is $U_1^{\Mon
    {\fix Lm} \times \Mon {\fix Lm}}$.   Thus the claim follows by
  Lemma~\ref{lem:doublestar-aperiodic}.
\end{proof}



Now we show that the group complexity is not increased
by existential first-order quantification, disjunction, and negation.
More precisely:
\begin{thProp}\label{prop:disjunction}
  Let $m\geq 1$.  For every formula $\phi$, we set
  $\Mon\phi = \Mon{\fix{\Mod{I,K}{\phi}}m}$ for abbreviation, where the
  attribute sets $I$, $K$ are understood as 
  containing the indices of all free variables (or all free first-order variables,
  respectively) of $\phi$.

  Let $\phi,\psi$ be formulas.
  Then
  \begin{eqnarray}
    \label{group-exfo}
    \group{\Mon{\exists x_\nu\phi}} & \leq &
    \group{\Mon{\phi}},\\
    \label{group-vee}
    \group{\Mon{\phi\vee\psi}} & \leq &
    \max\{
    \group{\Mon\phi}, \group{\Mon\psi}\},\\
    \label{group-neg}
    \group{\Mon{\neg \phi}} & = &
    \group{\Mon{\phi}}.
  \end{eqnarray}
\end{thProp}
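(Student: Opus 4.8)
\emph{Overall approach.} In each of the three cases the plan is to exhibit a finite monoid that recognizes the relevant height-$m$ fragment and to bound its group complexity; the stated inequality for $\group{\Mon{\cdot}}$ then follows from Remark~\ref{rem:rec-divide}. Two elementary observations will be used throughout. (i) For all finite semigroups $S,T$ we have $\group{S\times T}=\max\{\group S,\group T\}$: the inequality ``$\geq$'' holds because $S$ and $T$ are homomorphic images of $S\times T$, and ``$\leq$'' holds because each $\pC_n$ is a pseudovariety, hence closed under finite direct products. (ii) For any attribute $\nu$, the language $D_\nu$ of pictures in which exactly one position carries a $1$ for $\nu$ is recognized by a three-element aperiodic monoid (track the number of $\nu$-markers seen so far, capping it at $2$); hence $\group{\Mon{\fix{D_\nu}m}}=0$, and since $\fix{\Unique I K}m=\bigcap_{\nu\in K}\fix{D_\nu}m$ we also get $\group{\Mon{\fix{\Unique I K}m}}=0$.

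\emph{First-order quantification and negation.} Claim~(\ref{group-exfo}) is immediate: if $x_\nu$ does not occur in $\phi$ then $\Mod{}{\exists x_\nu\phi}=\Mod{}{\phi}$ and the two sides have the same free variables, so $\Mon{\exists x_\nu\phi}=\Mon\phi$; otherwise Remark~\ref{fossynsem} gives $\Mod{}{\exists x_\nu\phi}=\semexists\nu{\Mod{}{\phi}}$, and Lemma~\ref{lem:subgroup} applied to the height-$m$ fragment yields $\group{\Mon{\exists x_\nu\phi}}\leq\group{\Mon\phi}$. For claim~(\ref{group-neg}), note that $\neg\phi$ has precisely the free variables of $\phi$, so $\Mon{\neg\phi}$ and $\Mon\phi$ refer to the same $I,K$; recalling $\Mod{I,K}{\neg\phi}=\Unique I K\setminus\Mod{I,K}{\phi}$ and using $\Mod{}{\phi}\subseteq\Unique I K$ we get $\fix{\Mod{}{\neg\phi}}m=\fix{\Unique I K}m\cap\bigl(\Gamma^{m,*}\setminus\fix{\Mod{}{\phi}}m\bigr)$ with $\Gamma=\schema I$. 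As a language and its complement in the free monoid share the same syntactic monoid, the right-hand side is recognized by $\Mon{\fix{\Unique I K}m}\times\Mon{\fix{\Mod{}{\phi}}m}$, of group complexity $\max\{0,\group{\Mon\phi}\}=\group{\Mon\phi}$ by (i) and (ii); applying the same argument to $\neg\phi$ and using $\Mod{}{\neg\neg\phi}=\Mod{}{\phi}$ gives the reverse inequality, hence the equality in~(\ref{group-neg}).

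\emph{Disjunction and the main obstacle.} For claim~(\ref{group-vee}), let $I,K$ be the attribute sets of $\phi\vee\psi$, i.e.\ the unions of those of $\phi$ and $\psi$, and let $I_\phi\subseteq I$, $K_\phi\subseteq K$ be those of $\phi$. I would first show that enlarging the attribute sets does not raise the group complexity of the height-$m$ fragment of $\Mod{}{\phi}$: adding a set-variable attribute $\mu$ replaces the language by $\exsetinv{\{\mu\}}{\Mod{}{\phi}}$ (Remark~\ref{rem:restr}), while adding a first-order attribute $\nu$ not occurring in $\phi$ replaces it by $\exsetinv{\{\nu\}}{\Mod{}{\phi}}\cap D_\nu$ (the extra marker leaves $\phi$ unaffected and merely imposes the uniqueness condition $D_\nu$). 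Iterating, $\fix{\Mod{I,K}{\phi}}m$ is the inverse image of $\fix{\Mod{I_\phi,K_\phi}{\phi}}m$ under a letter-to-letter alphabet projection, intersected with finitely many languages $\fix{D_\nu}m$; hence it is recognized by $\Mon\phi$ times an aperiodic monoid and $\group{\Mon{\fix{\Mod{I,K}{\phi}}m}}\leq\group{\Mon\phi}$, and symmetrically for $\psi$. Since $\Mod{I,K}{\phi\vee\psi}=\Mod{I,K}{\phi}\cup\Mod{I,K}{\psi}$, the fragment $\fix{\Mod{}{\phi\vee\psi}}m$ is recognized by the direct product of those two monoids, of group complexity $\max\{\group{\Mon\phi},\group{\Mon\psi}\}$ by (i). The only genuinely delicate point will be the attribute-set bookkeeping in this last case — separating the new first-order attributes from the new set-variable ones, and checking that passing to the height-$m$ fragment commutes with letter-to-letter projection and with intersection — everything else being a direct combination of Remark~\ref{rem:rec-divide}, Lemma~\ref{lem:subgroup}, the double-negation trick, and closure of the $\pC_n$ under products.
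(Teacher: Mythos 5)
Your proof is correct and follows essentially the same route as the paper's: claim~(\ref{group-exfo}) via Remark~\ref{fossynsem} together with Lemma~\ref{lem:subgroup}, and claims~(\ref{group-vee}) and~(\ref{group-neg}) by recognizing the relevant height-$m$ fragments with direct products of syntactic monoids and invoking closure of the pseudovarieties $\pC_n$ under products and division (plus the double-negation symmetry for the equality). The only difference is that you spell out details the paper leaves implicit, namely the aperiodicity of $\Mon{\fix{\Unique IK}m}$ and the attribute-set bookkeeping when $\phi$ and $\psi$ have different free variables, which the paper sidesteps by computing all three monoids over a common ambient attribute set.
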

\begin{proof}
  \Ad{(\ref{group-exfo})}
  This is an immediate consequence of Lemma~\ref{lem:subgroup} and
  Remark~\ref{fossynsem}.

  \Ad{(\ref{group-vee})}
  Let $n = \max\{ \group{\Mon\phi}, \group{\Mon\psi}\}$.
  Then $\Mon\phi, \Mon\psi\in \pC_n$.
  The direct product $\Mon\phi \times \Mon\psi$ recognizes the word language
  $\fix{\Mod{}{\phi\vee\psi}}m$, thus $\Mon{\phi\vee\psi}$ is a homomorphic
  image of that direct product.
  Since $\pC_n$ is a pseudovariety, $\Mon{\phi\vee\psi}\in \pC_n$, i.e.,
  $\group{\Mon{\phi\vee\psi}} \leq n$.  This
  completes the proof of (\ref{group-vee}).

  \Ad{(\ref{group-neg})}
  %
  Let $n=\group{\Mon{\phi}}$
  Since $\Mod{I,K}{\neg \phi} = \Unique IK \backslash
  \Mod{I,K}{\phi}$, the word language $\fix{\Mod{I,K}{\neg \phi}}m$ is
  recognized by the direct product $\Mon {\fix{\Unique IK}m} \times \Mon\phi$,
  thus $\Mon{\neg\phi}$ is a homomorphic image of that direct product.
  Since $\Mon {\fix{\Unique IK}m}, \Mon\phi \in \pC_n$ and $\pC_n$ is a
  pseudovariety, we conclude $\Mon{\neg\phi}\in \pC_n$, i.e.,
  $\group{\Mon{\neg\phi}} \leq n$.  Equality follows by symmetry.
\end{proof}



Define $s:\NN \rightarrow \NN, s(m) = 2^m$.  As usual, $s^0(m) = m$ and
$s^{k+1}(m) = s(s^k(m))$ for every $k$.
The function $s^k$ is $k$-fold exponential.
Fix some $k\geq 1$ for the rest of this section.
Furthermore, assume that $\Gamma = \{0,1\}^I$ is an alphabet
with $I = J \cup K$ for disjoint attribute sets $J, K$.

\begin{thThe}[\cite{MT97}]\label{th:mt97}
  Let $\phi$ be a $\sSigma k$-formula with free set variables in $(X_\mu)_{\mu
    \in J}$ and free first-order variables in $(x_\nu)_{\nu \in K}$.  
  Then there exists $c\geq 1$
  such that for every $m\geq 1$ there exists an NFA with at most 
  $s^k(cm)$ states that recognizes $\fix{\Mod{I,K}\phi}m$.
\end{thThe}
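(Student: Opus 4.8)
The plan is to prove, by induction on the number $k$ of set‑quantifier alternations, the uniform statement underlying the theorem. Call a formula $\psi$ \emph{$j$-bounded} if there is a constant $c\geq 1$ (depending on $\psi$) such that for every $m\geq 1$ the word language $\fix{\Mod{}\psi}m$ over the column alphabet $\Gamma^{m}$ (where $\Gamma=\{0,1\}^{I}$) is recognised by a nondeterministic finite automaton with at most $s^{j}(cm)$ states; the theorem says exactly that every $\sSigma{k}$-formula is $k$-bounded. I will use three elementary facts about finite automata: (a) the product construction gives, for an intersection or union of languages recognised by NFAs with $N_{1}$ and $N_{2}$ states, an NFA with at most $N_{1}N_{2}$ states; (b) the subset construction gives, for the complement of a language recognised by an $N$-state NFA, a deterministic automaton with at most $2^{N}$ states; (c) the image of an $N$-state NFA language under a length-preserving alphabet projection is recognised by an $N$-state NFA, obtained by relabelling transitions. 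I also note that $\fix{\Unique IK}m$ is recognised by a DFA with $O(1)$ states (one bit per $\nu\in K$ recording whether its mark has occurred), so one may pass freely between recognising $\fix{\Mod{}\psi}m$ and recognising it relative to $\fix{\Unique IK}m$.

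First I would establish the base case, which is the genuinely combinatorial ingredient and the point at which the work of \cite{MT97} enters: every first-order formula $\phi$ is $1$-bounded, i.e.\ $\fix{\Mod{}\phi}m$ is recognised by a DFA with at most $2^{cm}$ states. The argument runs the automaton over the picture column by column, keeping in its state the first-order $r$-type (where $r$ is the quantifier rank of $\phi$) of the prefix read so far, expanded by the unary predicates coming from the free variables of $\phi$ and by the bits needed to check membership in $\Unique IK$; such a type is determined by the multiset, counted up to the threshold $r$, of the $r$-types of the individual columns, and a column is a word of length $m$, so the number of these types is at most $2^{O_{r}(m)}$, and the types compose under column concatenation. (Equivalently one translates $\phi$ into a first-order word formula over $\Gamma^{m}$ by hard-coding, for each position quantifier, the at most $m$ possible choices of its row, and invokes the standard bound for first-order word languages.) For the theorem only the consequence ``$\phi$ is $1$-bounded'' is needed, since $s^{1}(cm)=2^{cm}$ dominates every polynomial in $m$; a pleasant side effect is that the individual first-order \emph{quantifiers} never have to be analysed, because $\sSigma{k}=\SOSigma k{\FO}$ is built from $\FO$-formulas using set quantifiers and Boolean operations only.

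Next I would record the closure lemmas, all of which follow from (a)--(c) together with the elementary growth of the functions $s^{j}$, for every $j\geq 1$: a $j$-bounded formula is $(j+1)$-bounded; the negation of a $j$-bounded formula is $(j+1)$-bounded (fact (b), plus the remark on $\Unique IK$); a conjunction or disjunction of finitely many $j$-bounded formulas is $j$-bounded (fact (a), using that $\big(s^{j}(cm)\big)^{t}\leq s^{j}(c'm)$ for a suitable $c'$); and if $\psi$ is $j$-bounded then so is $\exists X_{\mu}\psi$, because on height-$m$ fragments existential set quantification is precisely the length-preserving projection $\Exset{\{\mu\}}$ of Remark~\ref{rem:exists-set}, to which fact (c) applies. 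The only delicate point is that in the conjunction/disjunction clause $t$ is a constant of the formula but the inequality $\big(s^{j}(cm)\big)^{t}\leq s^{j}(c'm)$ must hold for \emph{all} $m\geq 1$; this is arranged by choosing $c'$ large enough to also cover the finitely many small $m$ by a crude bound.

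Finally I would run the induction on $k$ using $\sSigma1=\SOSigma1{\FO}$ and $\sSigma{k+1}=\SOSigma1{\clB{\sSigma k}}$. For $k=1$, $\sSigma1$ is obtained from $\FO$ (which is $1$-bounded by the base case) using only positive Boolean combinations and existential set quantifications, each of which preserves $1$-boundedness, so every $\sSigma1$-formula is $1$-bounded. For the step, suppose every $\sSigma k$-formula is $k$-bounded; by De Morgan, double negation and distribution, every formula of $\clB{\sSigma k}$ is equivalent over pictures (hence has the same models, hence the same height fragments) to a disjunction of conjunctions of $\sSigma k$-formulas and negations of $\sSigma k$-formulas, the former $k$-bounded and the latter $(k+1)$-bounded, so the closure lemmas give that every $\clB{\sSigma k}$-formula is $(k+1)$-bounded. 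Since $\sSigma{k+1}=\SOSigma1{\clB{\sSigma k}}$ arises from $\clB{\sSigma k}$ by positive Boolean combinations and existential set quantifications, which preserve $(k+1)$-boundedness, every $\sSigma{k+1}$-formula is $(k+1)$-bounded, completing the induction. The main obstacle throughout is the first-order base case; everything past it is bookkeeping of state counts, in which the single exponential jump per level is exactly the one application of the subset construction needed to complement a subformula when passing through the Boolean closure $\clB{\cdot}$.
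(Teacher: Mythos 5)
Your overall skeleton is sound and is essentially the argument of \cite{MT97}, which the paper itself does not reprove: it quotes the result for $K=\emptyset$ and only adds the observation $\Mod{I,K}\phi = \Mod{I\cup K,\emptyset}\phi \cap \Unique{I}{K}$, i.e.\ your ``pass freely between recognising relative to $\fix{\Unique IK}m$'' step. The closure bookkeeping (projection for $\exists X_\mu$ via Remark~\ref{rem:exists-set}, products for $\wedge/\vee$, one subset construction per passage through $\clB{\cdot}$, and the padding of constants $c$) is correct.

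The genuine gap is in your base case. The claim that the rank-$r$ type of the prefix read so far ``is determined by the multiset, counted up to the threshold $r$, of the $r$-types of the individual columns'' is false: $S_2$ relates cells of \emph{adjacent} columns, so the horizontal order of the columns matters. Already for height $1$ the words $ab$ and $ba$ have the same multiset of letter types but are distinguished by the rank-$2$ sentence $\exists x\exists y\,(S_2xy\wedge a(x)\wedge b(y))$; consequently the automaton you describe is not well defined (the type of the extended prefix is not a function of the stored data and the next column). Your parenthetical fallback does not close this hole either: after hard-coding row choices you obtain an $\FO$ word formula over the alphabet $\Gamma^m$ of size $2^{\Theta(m)}$, and the ``standard bounds'' for $\FO$ word languages (in terms of alphabet size, or of formula size $m^{O(r)}$) only give $2^{2^{O(m)}}$ or $2^{m^{O(1)}}$ states, which is not the required $s^1(cm)=2^{cm}$ and that deficit is not absorbed at higher levels, since the theorem demands a \emph{linear} inner argument at every level. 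Two standard repairs are available. Either argue via Hanf locality for the signature $S_1,S_2$: radius-$3^{r}$ balls in a height-$m$ grid have $O_r(1)$ cells over the fixed alphabet $\Gamma$, so a DFA that stores the last $O_r(1)$ columns (that is $2^{O_r(m)}$ possibilities) together with constantly many threshold counters of neighbourhood types recognises $\fix{\Mod{I,K}\phi}m$ with $2^{cm}$ states. Or, closer to \cite{MT97} and to what the rest of this paper uses, note that every $\FO$-formula is syntactically a $\sSigma1$-formula, that $\sSigma1$-definable picture languages are exactly the tiling-recognisable ones (\cite{GRST96}), and that the height-$m$ fragment of a tiling-recognisable language has an NFA whose states are columns over the local alphabet, hence at most $2^{cm}$ of them. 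With either repair your induction goes through and proves the stated bound.
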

The original theorem only states the above result for the case $K =
\emptyset$.  The present form follows easily by using
\(
 \Mod{I,K}\phi = \Mod{I\cup K, \emptyset}\phi \cap \Unique IK.
\)

Let $n\geq 1$.
The transition monoid (see Section~\ref{sec:NFA-Mon}) of an NFA with $n$
states is a submonoid of the monoid $B_n$ of binary relations of an $n$-set
(see Definition~\ref{def:Bn}).

\begin{thProp}\label{prop:divide-Bn}
  Let $\phi$ be a $\sSigma k$-formula.  There exists $c\geq 1$ such that for
  every $m\geq 1$ the syntactic monoid of
  $\fix{\Mod{}\phi}m$ divides $B_{s^k(cm)}$.
\end{thProp}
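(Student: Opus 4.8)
The plan is to derive Proposition~\ref{prop:divide-Bn} as a straightforward assembly of Theorem~\ref{th:mt97}, the transition-monoid construction of Section~\ref{sec:NFA-Mon}, and transitivity of the division relation $\prec$. First I would apply Theorem~\ref{th:mt97} to the given $\sSigma k$-formula $\phi$, taking $I,K$ to be the index sets of its free variables, resp.\ free first-order variables. This yields a constant $c\geq 1$ such that for every $m\geq 1$ there is an NFA $\mA_m$ over the column alphabet with at most $s^k(cm)$ states that recognizes $\fix{\Mod{}\phi}m$.

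Next, fix $m\geq 1$ and set $n=s^k(cm)$. By Section~\ref{sec:NFA-Mon}, the transition monoid $M_{\mA_m}$ recognizes $\fix{\Mod{}\phi}m$ together with the induced homomorphism $\Gamma^{m,*}\rightarrow M_{\mA_m}$, and, by the observation stated immediately before the proposition, $M_{\mA_m}$ is a submonoid of $B_n$; in particular $M_{\mA_m}\prec B_n$. Since $M_{\mA_m}$ recognizes $\fix{\Mod{}\phi}m$, the syntactic homomorphism of that language factors through the homomorphism induced by $\mA_m$ (as recalled at the beginning of Section~\ref{sec:lower-bound}), so $\Mon{\fix{\Mod{}\phi}m}$ is a homomorphic image of a submonoid of $M_{\mA_m}$, i.e.\ $\Mon{\fix{\Mod{}\phi}m}\prec M_{\mA_m}$. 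Transitivity of $\prec$ then gives $\Mon{\fix{\Mod{}\phi}m}\prec B_{s^k(cm)}$, with the single constant $c$ working uniformly in $m$.

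The only point that requires a little attention is reconciling the ``at most $s^k(cm)$ states'' bound of Theorem~\ref{th:mt97} with the fixed index $s^k(cm)$ in the conclusion. This is handled by either of two routine moves: pad $\mA_m$ with isolated states that are neither initial nor final and carry no transitions, so that it has exactly $n$ states without changing the recognized language; or, alternatively, note that $B_{n'}$ embeds as a subsemigroup of $B_{n}$ whenever $n'\leq n$ (regard a relation on $\{1,\dots,n'\}$ as one on $\{1,\dots,n\}$), hence $B_{n'}\prec B_{n}$. Beyond this, the argument is pure bookkeeping with $\prec$, so I anticipate no genuine obstacle.
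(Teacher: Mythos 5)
Your argument is correct and is essentially the same as the paper's proof: apply Theorem~\ref{th:mt97} to get the state bound $s^k(cm)$, pass to the transition monoid, use that the syntactic monoid divides any recognizing monoid, and conclude by transitivity of $\prec$. The extra care you take about ``at most'' versus ``exactly'' $s^k(cm)$ states (padding, or $B_{n'}\prec B_n$ for $n'\leq n$) is a minor point the paper passes over silently, and your handling of it is fine.
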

\begin{proof}
  Choose $c$ according to Theorem~\ref{th:mt97}.  Let $m\geq 1$.  Set $n =
  s^k(cm)$.  Let $\fA$ be an NFA with at most $n$ states that recognizes
  $\fix{\Mod{}\phi}m$.  Let $M$ be the transition monoid of $\fA$.  Since $M$
  recognizes $\fix{\Mod{}{\phi}}m$, the syntactic monoid of
  $\fix{\Mod{}{\phi}}m$ is a homomorphic image of $M$, thus
  $M(\fix{\Mod{}{\phi}}m) \prec M$.  Besides, $M \prec B_n$.  Since $\prec$ is
  transitive, we have $M(\fix{\Mod{}{\phi}}m) \prec B_n$, i.e., the claim.
\end{proof}

\begin{thProp}\label{prop:semibound}
  $\FO(\sSigma {k})$ is at most $k$-fold exponential wrt.\ group complexity.
\end{thProp}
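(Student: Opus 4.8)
The plan is a routine structural induction on the build-up of $\phi\in\FO(\sSigma k)=\clFO{\sSigma k}$, resting on two earlier results: Proposition~\ref{prop:divide-Bn} together with Theorem~\ref{th:cBn} for the base case, and Proposition~\ref{prop:disjunction} for the inductive step. Since $\clFO{\sSigma k}$ is by definition the smallest class of formulas containing all $\sSigma k$-formulas and closed under first-order quantification and boolean combinations, and since $\forall x_\nu\,\psi$ and $\psi_1\wedge\psi_2$ can be written using $\exists$, $\neg$ and $\vee$ via the De~Morgan rules, it suffices to treat the base case $\phi\in\sSigma k$ and the three operations $\psi\mapsto\exists x_\nu\psi$, $\psi\mapsto\neg\psi$ and $(\psi_1,\psi_2)\mapsto\psi_1\vee\psi_2$. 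Fixing $m\geq1$, I would abbreviate $\Mon\psi=\Mon{\fix{\Mod{}{\psi}}m}$ as in Proposition~\ref{prop:disjunction}. The claim proved by induction is that for every $\phi\in\FO(\sSigma k)$ there is a constant $c\geq1$, depending only on $\phi$, with $\group{\Mon{\fix{\Mod{}{\phi}}m}}\leq s^k(cm)-1$ for all $m\geq1$; since $s^k$ is $k$-fold exponential, this is exactly the assertion that $\{\Mod{}{\phi}\mid\phi\in\FO(\sSigma k)\}$ is at most $k$-fold exponential wrt.\ group complexity, i.e.\ the proposition.

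For the base case, let $\phi$ be a $\sSigma k$-formula. Proposition~\ref{prop:divide-Bn} yields a constant $c\geq1$ with $\Mon{\fix{\Mod{}{\phi}}m}\prec B_{s^k(cm)}$ for every $m\geq1$. By Theorem~\ref{th:cBn} we have $\group{B_{s^k(cm)}}=s^k(cm)-1$, i.e.\ $B_{s^k(cm)}\in\pC_{s^k(cm)-1}$; as $\pC_{s^k(cm)-1}$ is a pseudovariety it is closed under division, so $\Mon{\fix{\Mod{}{\phi}}m}\in\pC_{s^k(cm)-1}$, giving $\group{\Mon{\fix{\Mod{}{\phi}}m}}\leq s^k(cm)-1$. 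This is the only place where the monoid $B_n$ enters: it is both the worst case that can arise and --- by Lemma~\ref{lem:upperbound} --- the one that is in fact forced, which is why the bound is tight.

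For the inductive step I would apply Proposition~\ref{prop:disjunction} verbatim. Its inequalities~(\ref{group-exfo})--(\ref{group-neg}) give $\group{\Mon{\exists x_\nu\psi}}\leq\group{\Mon\psi}$, $\group{\Mon{\neg\psi}}=\group{\Mon\psi}$ and $\group{\Mon{\psi_1\vee\psi_2}}\leq\max\{\group{\Mon{\psi_1}},\group{\Mon{\psi_2}}\}$, and hence, via De~Morgan, $\group{\Mon{\forall x_\nu\psi}}\leq\group{\Mon\psi}$ and $\group{\Mon{\psi_1\wedge\psi_2}}\leq\max\{\group{\Mon{\psi_1}},\group{\Mon{\psi_2}}\}$. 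Thus each construction step bounds $\group{\Mon\phi}$ by the maximum of the bounds already available for its immediate subformulas. Unfolding the finite construction tree of $\phi$, whose leaves $\chi_1,\dots,\chi_r$ are $\sSigma k$-formulas with associated base-case constants $c_1,\dots,c_r$, and setting $c=\max_i c_i$, the monotonicity of $s^k$ gives $\group{\Mon{\fix{\Mod{}{\phi}}m}}\leq\max_i\bigl(s^k(c_i m)-1\bigr)\leq s^k(cm)-1$ for all $m\geq1$. This completes the induction, and with it the proof.

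I do not expect a genuine obstacle: all the work is already in Propositions~\ref{prop:divide-Bn} and~\ref{prop:disjunction}, and what remains is pure bookkeeping. The two mild points of care are, first, keeping the attribute-set conventions underlying $\Mon\psi$ consistent across the three operations --- this is precisely the convention baked into Proposition~\ref{prop:disjunction}, and it is harmless by Remark~\ref{rem:restr} --- and, second, noting that a single constant $c$ serves the whole of $\phi$, which is immediate from the monotonicity of $s^k$ and the finiteness of $\phi$. Together with the trivial inclusions $\SOSigma1{\FOPi{k-1}{\sDelta1}}\subseteq\sSigma k\subseteq\FO(\sSigma k)$, Remark~\ref{rem:asymptotic} and Theorem~\ref{th:upperbound}, this also establishes Theorem~\ref{th:main}.
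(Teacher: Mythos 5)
Your proposal is correct and matches the paper's own argument in substance: the paper also combines Proposition~\ref{prop:divide-Bn} with Theorem~\ref{th:cBn} for $\sSigma k$-formulas and Proposition~\ref{prop:disjunction} for closure under $\exists$, $\vee$, $\neg$, merely packaging your structural induction as the observation that the class $\cF$ of formulas with at most $k$-fold exponential group complexity satisfies $\FO(\cF)=\cF$ and contains $\sSigma k$. Your explicit bookkeeping with the constants $c_i$ and $s^k(cm)-1$ is a harmless elaboration of the same idea.
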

\begin{proof}
  Let $\cF$ be the class of formulas $\phi$ such that the function
  $m\mapsto\group{\fix{\Mod{}\phi}m}$ is at most $k$-fold exponential.  By
  Proposition~\ref{prop:disjunction}, $\cF$ is closed under existential
  first-order quantification, disjunction, and negation, i.e., $\FO(\cF) =
  \cF$.
  
  By Proposition~\ref{prop:divide-Bn} and Theorem~\ref{th:cBn}, the class
  $\sSigma k$ is at most $k$-fold exponential wrt.\ group complexity, i.e.,
  $\sSigma k \subseteq \cF$.
  Thus $\FO(\sSigma k) \subseteq \FO(\cF) = \cF$, which finishes the proof.
\end{proof}

We are now ready to prove the main result of this paper.
\begin{proofc}{of Theorem~\ref{th:main}.}
  %
  By Remark~\ref{rem:calc}, we have
  $\SOSigma 1 {\FOPi {k-1}{\nDelta1}} \subseteq \SOSigma1{\sPi {k-1}}
  \subseteq \sSigma {k} \subseteq
  \FO(\sSigma {k})$.
  Therefore it suffices to show that
  $\SOSigma 1 {\FOPi {k-1}{\nDelta1}}$ is at least $k$-fold exponential
  whereas $\FO(\sSigma {k})$ is at most $k$-fold exponential wrt.\ group
  complexity.  The first fact is Theorem~\ref{th:upperbound}, the second
  is the preceding proposition.
\end{proofc}

\section{Conclusion}

We continued the work of \cite{Matz-Diss, Matz02} to investigate the
expressive power of first-order quantifications in the context of picture
languages.  We have adapted a lemma by Straubing that analyses the effect of
first-order quantifications in terms of monoid complexity.  We combined this
with the height fragment technique invented in \cite{Gia94, GRST96} and used
in the above papers.  This allowed to deduce a new separation result
(Theorem~\ref{th:main}).  It may be stated informally as: Adding one more set
quantifier alternation gives you expressive power that cannot be captured by
adding any number of first-order quantifier alternations.
\smallskip

At the same time we have found a new sequence of picture languages that
witness the strictness of the quantifier alternation hierarchy of monadic
second-order logic.  Unlike the picture languages in \cite{MT97}, these new witness picture languages are not characterized
by the sizes of their pictures, but rather by the group complexity
required to recognize them.

\subsection{Remarks on the Height Fragment Technique}

The height fragment technique (Remark~\ref{rem:asymptotic}) plays a crucial
role for the separation results for picture language classes defined by
quantifier alternation classes of monadic second-order logic.
Therefore, it may be instructive to summarize the measures
that have been considered so far (more or less explicitly) in the literature
and in this paper.
\begin{itemize}
\item The \emph{state set size} measure assigns to every regular word language
  $L$ the minimal number of states of an NFA that accepts $L$.
\item The \emph{singleton length} measure assigns to every singleton word
  language $\{a^n\}$ over a singleton alphabet $\{a\}$ the length $n$ of its
  only element.
\item The \emph{minimal length} measure assigns to every non-empty word
  language over a singleton alphabet the length of its shortest element.
\item The \emph{group complexity} measure assigns to every non-empty
  word language $L$ the group complexity of its syntactic monoid.
\end{itemize}
The proof that the class of $\sSigma1$-definable picture languages over
$\{0,1\}$ is not closed under complement was done in \cite{GRST96} and uses
the state set size measure.  That picture language class is at most
singly exponential wrt.\ state set size, but the class of $\sPi1$-definable
picture languages is not, as it contains a language with state set size
$2^{\Omega(m^2)}$, namely the picture language of all pictures of the form
$p\pcol p$, where $p$ is a non-empty square picture.

The first result involving the singleton length measure is from \cite{Gia94}
and says that the class of recognizable (or, by \cite{GRST96} equivalently,
of $\sSigma1$-definable) picture languages over a singleton
alphabet is both at most and at least $1$-fold exponential.

Generalizing Giammarresi's result, 
in \cite{MT97} (and \cite{Schw97}, respectively) it is shown that the class of
$\sSigma k$-definable picture languages over a singleton alphabet is at most (and
at least, respectively) $k$-fold exponential wrt.\ singleton length.

In \cite{Matz-Diss}, Corollary~3.66 and Theorem~4.25, it is shown that
$\FO(\sSigma {k})$ is both at most and at least $(k{+}1)$-fold
exponential wrt.\ singleton length.

In \cite{Matz-Diss}, Theorem~3.61 and Corollary~4.15, it is shown that $\sPi
k$ is at least $(k{+}1)$-fold exponential whereas $\sSigma k$ is at most
$k$-fold exponential wrt.\
minimal length.  This allowed to separate these two classes for the case of a
singleton alphabet.


\smallskip

Our result Corollary~\ref{cor:main-sep} is based on the group complexity
measure.  That corollary can be proved neither by state set size nor by
singleton length, since for every $k\geq 1$, even the class of
$\FO(\sSigma{1})$-definable picture languages is at least $k$-fold exponential
wrt.\ state set size (\cite{MT97}), and both $\FO(\sSigma{k})$ and
$\sSigma{k+1}$ are both at most and at least $(k{+}1)$-fold exponential wrt.\
singleton length (\cite{Matz-Diss}).

\subsection{Open Questions}

Corollary~\ref{cor:main-sep} states that there is a $\sSigma{k+1}$-definable
picture language that is not $\FO(\sSigma{k})$-definable.
Lemma~\ref{lem:finaldef} shows that the alphabet is $\schema{\{\inp,\fin\}}$,
i.e., the alphabet size is four.  We note
without proof that one can reduce the size of the alphabet to two by applying
standard encoding techniques.
It remains open whether we can even reduce the size to one.  For that case, we
only know:
\begin{thThe}[\cite{Matz-Diss}, Theorems~2.29, 2.30]
  Let $k \geq 1$.
  For a singleton alphabet, there is a 
  $\sSigma{k+1}$-definable picture language that is not
  $\FO(\sSigma{k-1})$-definable.
\end{thThe}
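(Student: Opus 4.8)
The plan is to follow the two-part pattern of Theorem~\ref{th:main} --- produce a definable picture language of high complexity, and bound the complexity of the smaller formula class --- but with a different complexity measure, since group complexity is of no use over a singleton alphabet: there every height fragment $\fix Lm$ is a word language over the one-letter alphabet $\Gamma^m$, so its syntactic monoid is cyclic, and the measure is far too coarse to separate $\sSigma{k+1}$ from $\FO(\sSigma{k-1})$. I would instead use the \emph{singleton length} measure of \cite{Gia94, Matz-Diss}, which assigns to a singleton word language $\{a^n\}$ the number $n$. Concretely, I want (i) a $\sSigma{k+1}$-definable picture language over a singleton alphabet whose height fragments are singletons of length $(k{+}1)$-fold exponential in $m$, and (ii) that $\FO(\sSigma{k-1})$ is at most $k$-fold exponential with respect to singleton length; since a $(k{+}1)$-fold exponential function eventually dominates every $k$-fold exponential one, (i) and (ii) together with the contrapositive of the Height Fragment Technique (Remark~\ref{rem:asymptotic}) give the claim at once.

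For (i) I would invoke the theorem of \cite{Schw97, Matz-Diss} quoted in Section~\ref{sec:largegroup}, but with parameter $k+1$ in place of $k$: there is a $(k{+}1)$-fold exponential function $g : \NNone \rightarrow \NNone$ such that the associated picture language $\fpics g$ over a singleton alphabet is $\SOSigma 1{\FOPi k{\nDelta1}}$-definable. By Remark~\ref{rem:calc} every $\FOPi k{\nDelta1}$-formula is a $\sPi k$-formula, so $\SOSigma 1{\FOPi k{\nDelta1}} \subseteq \SOSigma 1{\sPi k} \subseteq \sSigma{k+1}$, hence $\fpics g$ is $\sSigma{k+1}$-definable. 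Over a singleton alphabet there is exactly one picture of each size, so $\fix{\fpics g}m = \{a^{g(m)}\}$ is a singleton word language whose singleton length $g(m)$ is $(k{+}1)$-fold exponential in $m$; thus the class of $\sSigma{k+1}$-definable picture languages over a singleton alphabet is at least $(k{+}1)$-fold exponential with respect to singleton length.

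The hard part is (ii). For $k=1$ it reduces to the routine fact that an $\FO$-definable picture language whose height-$m$ fragment is a singleton has singleton length only polynomial in $m$, by an Ehrenfeucht--Fra\"{\i}ss\'e argument. For larger $k$, the naive route --- a unary singleton $\{a^n\}$ recognized by an NFA with $c$ states has $n < c$ (a longer accepting run contains a pumpable cycle), and a $\sSigma{k-1}$-formula supplies such an NFA with $s^{k-1}(cm)$ states by Theorem~\ref{th:mt97} --- fails once one has to absorb the \emph{unbounded} first-order quantifier alternation of the first-order closure: each first-order quantifier replaces the recognizing monoid by its block product with $U_1$ (Lemma~\ref{lem:semexists}), and iterating this over many alternations yields only a tower-type bound in $m$, far above $k$-fold exponential. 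The resolution --- the technical core of \cite{Matz-Diss} (Theorems~2.29 and~2.30, together with Corollary~3.66) --- is a finer analysis showing that over pictures of bounded height the entire first-order closure costs at most one more exponential in $m$; carrying it out requires, among other things, relativizing first-order formulas to column intervals in the spirit of Lemma~\ref{lem:relativizefo}. This is the step I expect to be by far the most delicate.
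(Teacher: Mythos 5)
The paper does not prove this statement at all: it is quoted verbatim from \cite{Matz-Diss} (Theorems~2.29 and~2.30), so there is no in-paper argument to compare yours against. What can be said is that your outline matches the method the paper itself attributes to the dissertation in its concluding section, namely the height-fragment technique (Remark~\ref{rem:asymptotic}) instantiated with the \emph{singleton length} measure rather than group complexity. Your part (i) is correct and essentially complete as a sketch: applying the quoted theorem of \cite{Schw97, Matz-Diss} with $k+1$ in place of $k$ gives a $(k{+}1)$-fold exponential $g$ with $\fpics g$ definable in $\SOSigma 1{\FOPi k{\nDelta1}}\subseteq\SOSigma1{\sPi k}\subseteq\sSigma{k+1}$ (via Remark~\ref{rem:calc}), and over a singleton alphabet $\fix{\fpics g}m$ is indeed the singleton of length $g(m)$. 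Your preliminary observation that group complexity is useless here is also right: height fragments over a one-letter alphabet have commutative (monogenic) syntactic monoids, whose group complexity is bounded, which is exactly why the paper's main result is stated only for non-trivial alphabets.

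The caveat is part (ii). The bound ``$\FO(\sSigma{k-1})$ is at most $k$-fold exponential with respect to singleton length'' is not something you derive; it is precisely the content of the cited results (\cite{Matz-Diss}, Theorem~2.30 resp.\ Corollary~3.66, which the paper restates as: $\FO(\sSigma k)$ is at most $(k{+}1)$-fold exponential wrt.\ singleton length). So what you have is a correct reduction of the theorem to that cited upper bound, together with the easy lower-bound half, not a self-contained proof. Your diagnosis of why the obvious routes fail is sound: the NFA-state argument via Theorem~\ref{th:mt97} handles $\sSigma{k-1}$ but not its first-order closure, and iterating the block product of Lemma~\ref{lem:semexists} once per quantifier inflates the recognizing monoid by one exponential per alternation, which is fatal for a size-type measure (unlike group complexity, where aperiodic block products are free by Lemma~\ref{lem:doublestar-aperiodic}). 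Showing that the whole first-order closure costs only one extra exponential in $m$ is the actual work of the dissertation, and your proposal correctly locates, but does not supply, that argument.
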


The following problem from \cite{Matz-Diss, JM01} remains open, too.

\begin{thProb}\label{prob:mso}
  Is there is an MSO-formula that is not equivalent to
  any $\SOSigma 1{\FO({\sSigma1})}$-formula?
\end{thProb}

Separation results such as Corollary~\ref{cor:main-sep} may be transferred to
other classes of structures, for example, to directed graphs, using standard
encoding techniques.  This has been carried out formally in \cite{Matz-Diss},
Chapter~5.

%

Problem~\ref{prob:mso} is also open in that more general
setting of directed graphs.  Answering that question might be the
first step towards attacking the \emph{closed hierarchy} as defined in
\cite{AFS00}.

I find the following questions interesting from a methodological point of
view:
\begin{thProb}\label{prob:technique}
  Is there any separation result concerning quantifier alternation classes of
  MSO-formulas (as defined in
  Section~\ref{sec:quantifier-alternation-classes}) that either holds for
  directed graphs but not for pictures, or holds for pictures but
  cannot be proved with the height fragment technique?
\end{thProb}

\section{Acknowledgments}

Thanks go to several participants of the 2010 workshop on ``Circuits, logic,
and games'' in Dagstuhl: Wolfgang Thomas for discussions on the draft of this
paper; Nicole Schweikardt for her detailed analysis and helpful feedback;
Jean-Eric Pin for the copy of \cite{Pin86} and for pointing me helpful
hints 
when I was stuck in the proof of Proposition~\ref{prop:semibound}---later I
discovered that the group complexity was more suitable than the investigation
of the symmetric subgroups, so the respective papers are not cited;
Klaus-J\"orn Lange for a proof related to the block product, even though that
proof did not make it into the present paper because I discovered later that I
had asked the wrong question;
Pierre McKenzie for pointing me to the Landau function, which I did not need
in the end either, because the lengths of the cyclic subgroups then seemed to
be the wrong trace;
and Thomas Colcombet and Etienne Grandjean for discussions on related
subjects.  \smallskip

Last, but not least, I am grateful to Thomas Wilke.  When he pointed me to
Chapter VI in \cite{Straub94}, it was not the first time that he gave me the
crucial hint that I needed to turn a vague idea into a theorem.

{\footnotesize
\bibliography{literatur}}

\begin{thebibliography}{GRST96}

\bibitem[AFS00]{AFS00}
M.~Ajtai, R.~Fagin, and L.~Stockmeyer.
\newblock The closure of monadic {NP}.
\newblock {\em Journal of Computer and System Sciences}, 60(3):660--716, 2000.
\newblock Journal version of STOC'98 paper.

\bibitem[Eil76]{Eil76}
S.~Eilenberg.
\newblock {\em Automata, Languages, and Machines}.
\newblock Number Bd. 2 in Pure and applied mathematics. Academic Press, 1976.

\bibitem[Gia94]{Gia94}
D.~Giammarresi.
\newblock Two-dimensional languages and recognizable functions.
\newblock In G.~Rozenberg and A.~Salomaa, editors, {\em Developments in
  Language Theory, Proceedings of the conference, Turku (Finnland) '93}, pages
  290--301. world scientific, Singapore, 1994.

\bibitem[GRST96]{GRST96}
D.~Giammarresi, A.~Restivo, S.~Seibert, and W.~Thomas.
\newblock Monadic second-order logic and recognizability by tiling systems.
\newblock {\em Information and Computation}, 125(1):32--45, 1996.
\newblock Journal version of STACS'94 paper.

\bibitem[JM01]{JM01}
D.~Janin and J.~Marcinkowski.
\newblock A toolkit for first order extensions of monadic games.
\newblock In {\em Proceedings of 18th Annual Symposium on Theoretical Aspects
  of Computer Science (STACS'01)}, volume 2010 of {\em Lecture Notes in
  Computer Science}, pages 353--364. Springer-Verlag, 2001.

\bibitem[KR65]{KroRho65}
K.~Krohn and J.~Rhodes.
\newblock Algebraic theory of machines. {I}. {P}rime decomposition theorem for
  finite semigroups and machines.
\newblock {\em Trans. Amer. Math. Soc.}, 116:450--464, 1965.

\bibitem[Mat99]{Matz-Diss}
O.~Matz.
\newblock {\em Dot-depth and monadic quantifier alternation over pictures}.
\newblock PhD thesis, RWTH Aachen, 1999.

\bibitem[Mat02]{Matz02}
O.~Matz.
\newblock Dot-depth, monadic quantifier alternation, and first-order closure
  over grids and pictures.
\newblock {\em Theor. Comput. Sci.}, 270(1-2):1--70, 2002.

\bibitem[MST02]{MST98}
O.~Matz, N.~Schweikardt, and W.~Thomas.
\newblock The monadic quantifier alternation hierarchy over grids and graphs.
\newblock {\em Information and Computation}, 179(2):356--383, 2002.

\bibitem[MT97]{MT97}
O.~Matz and W.~Thomas.
\newblock The monadic quantifier alternation hierarchy over graphs is infinite.
\newblock In {\em Twelfth Annual IEEE Symposium on Logic in Computer Science},
  pages 236--244, Warsaw, Poland, 1997. IEEE.

\bibitem[Pin86]{Pin86}
J.-E. Pin.
\newblock {\em Varieties Of Formal Languages}.
\newblock Plenum Publishing Co., 1986.

\bibitem[Pin96]{Pin96}
J.-E. Pin.
\newblock Logic, semigroups and automata on words.
\newblock {\em Annals of Mathematics and Artificial Intelligence}, 16:343--384,
  1996.

\bibitem[Rho74a]{Rhodes74binary}
J.~Rhodes.
\newblock Finite binary relations have no more complexity than finite
  functions.
\newblock {\em Semigroup Forum}, 7:92--103, 1974.

\bibitem[Rho74b]{Rhodes74}
J.~Rhodes.
\newblock Proof of the fundamental lemma of complexity (strong version) for
  arbitrary finite semigroups.
\newblock {\em J. Comb. Theory, Ser. A}, 16(2):209--214, 1974.

\bibitem[RS09]{rhodes2009q}
J.~Rhodes and B.~Steinberg.
\newblock {\em The q-theory of finite semigroups}.
\newblock Springer monographs in mathematics. Springer, 2009.

\bibitem[RT89]{RhoTil89}
J.~Rhodes and B.~Tilson.
\newblock The kernel of monoid morphisms.
\newblock {\em Information and Computation}, 62:227--268, 1989.

\bibitem[Sch97]{Schw97}
N.~Schweikardt.
\newblock The monadic quantifier alternation hierarchy over grids and pictures.
\newblock In Mogens Nielson and Wolfgang Thomas, editors, {\em Computer Science
  Logic}, volume 1414 of {\em Lecture Notes in Computer Science}, pages
  441--460. Springer, 1997.

\bibitem[Str94]{Straub94}
H.~Straubing.
\newblock {\em Finite automata, formal logic, and circuit complexity}.
\newblock Birkh\"auser Verlag, Basel, Switzerland, 1994.

\end{thebibliography}

\end{document}